\definecolor{webgreen}{rgb}{0,.5,0}
\definecolor{webbrown}{rgb}{.8,0,0}
\definecolor{emphcolor}{rgb}{0.95,0.95,0.95}
\ifpdf \hypersetup{pdftex,
            pdfstartview=Fit, 
            bookmarksopen=true,
            bookmarksnumbered=true
} \else \hypersetup{dvips} \fi
\renewcommand{\S}{\mathcal{S}}
\newcommand{\setT}{\mathcal{T}}
\newcommand{\xq}{\overline{X}_{\mathbf{e}_q}}
\newcommand{\psival}{\psi_\alpha^+(-1)}
\newtheorem{thm}{Theorem}[section]
\newtheorem{prop}{Proposition}[section]
\newtheorem{cor}{Corollary}[section]
\newtheorem{remark}{Remark}[section]
\newtheorem{lem}{Lemma}[section]
\newtheorem{lemma}{Lemma}[section]
\newtheorem{assump}{Assumption}[section]
\newtheorem{definition}{Definition}[section]
\newcommand {\R}{\mathbb{R}}
\newcommand {\F}{\mathcal{F}}
\newcommand {\p}{\mathbb{P}}
\newcommand {\E}{\mathbb{E}}
\newcommand{\be}{\begin{equation}}
\newcommand{\ee}{\end{equation}}
\newcommand{\bq}{\begin{eqnarray}}
\newcommand{\eq}{\end{eqnarray}}
\newcommand{\ex}{\mathbb{E}}
\newcommand{\bs}{\bigskip}
\newcommand{\ind}{1\hspace{-2.1mm}{1}} 
\newcommand{\pr}{\mathbb{P}}
\newcommand{\ep}{\mathbf{e}_\alpha}
\newcommand{\diff}{{\rm d}}
\newcommand{\nn}{\nonumber}
\newcommand{\lev}{L\'{e}vy }
\newcommand{\exph}{\ex^{\Phi(\alpha)}}
\newcommand{\mte}{\mathrm{e}}
\newcommand{\setN}{\{1,\cdots,n\}}
\title{ Optimal Multiple  Stopping with Negative Discount Rate and  Random Refraction Times   under L\'evy Models}
\author{Tim Leung\thanks{IEOR Department, Columbia University, New York, USA. \mbox{E-mail: \texttt{leung@ieor.columbia.edu}}.} \and Kazutoshi Yamazaki\thanks{Department of Mathematics,
Kansai University, Osaka, Japan. E-mail: \mbox{\texttt{kyamazak@kansai-u.ac.jp}}. } \and Hongzhong Zhang\thanks{Statistics Department, Columbia University, New York, USA. \mbox{E-mail: \texttt{hzhang@stat.columbia.edu}.}}}
\date{\today}
\begin{document}

\maketitle
 
\begin{abstract}This paper studies a class of optimal multiple stopping problems driven by  L\'evy processes.   Our model allows for   a \emph{negative}  effective  discount rate,  which arises  in a number of  financial applications, including  stock loans and real options, where the strike price can potentially  grow  at a higher rate  than the original discount  factor.   Moreover, successive  exercise  opportunities  are separated  by   i.i.d. random \emph{refraction times}. Under a wide class of  two-sided \lev models with a general random refraction time,  we rigorously show that the optimal strategy to exercise successive  call options     is uniquely characterized by a  sequence of up-crossing times. The corresponding optimal thresholds are determined explicitly in the single stopping case and recursively in the multiple stopping case. 

\end{abstract}

 \noindent \small{\textbf{Key words:} optimal multiple stopping; negative discount rate; random refraction times; \lev processes; stock loan\\
  \noindent \textbf{JEL Classification:} G32, D81, C61 \\
\noindent \textbf{Mathematics Subject Classification (2010):}  60G40, 60J75 }\\

 \section{Introduction}
We  study a class of optimal multiple stopping problems driven by an underlying  L\'evy process. Two key features of our model are that (i)  the discount rate can be negative or positive, and (ii) the sequence of  admissible stopping times are separated  by   i.i.d. random \emph{refraction periods}. The negative   effective discount rate  is relevant to a number of financial applications. For example, Xia and Zhou \cite{xiazhou} propose a   valuation model for  a stock loan, where  the loan interest rate is higher than the risk-free interest rate. As a result,  the stock loan can be viewed as an American call  option   with a negative effective  discount rate.    An example from the real option  literature \cite{DixitPindyck,McDonald1985}  is when the cost of investment grows at a higher rate  than the firm's discount  rate. Moreover, while the nominal short rate cannot be negative, the real interest rate can potentially  be negative, especially  during low-yield  regimes, according to Black \cite{black95} and references therein. Therefore, extending  the discount rate to the negative domain also enables the evaluation of cash flows under the real interest rate.  

 In the aforementioned applications,   the same option can be exercised repeatedly in the future, meaning  that  an  investor can acquire a series of stock loans, or a firm can make an investment sequentially over time.  This motivates  us to incorporate multiple  stopping opportunities in our analysis. The features of refraction periods and multiple exercises also arise in the pricing of  swing options commonly used for  energy delivery.  For instance,  Carmona and Touzi \cite{Carmona2008} formulate   the valuation of a swing put option  as    optimal multiple stopping problem, with constant refraction periods, under  the  geometric Brownian motion model. In a related study,  Zeghal and Mnif \cite{ZeghalSwing} value a  perpetual American swing put when the underlying L\'evy price process has  no negative jumps. They provide mathematical characterization and numerical solutions to the associated optimal multiple stopping problem. In contrast, we  consider the successive  exercises of  a swing call  option with random refraction times under positive or negative discount rate. We also provide a rigorous analysis of the optimal multiple stopping problem under two-sided  \lev models.

Under a wide class of  two-sided \lev models with a general random refraction time, we  show that the optimal    exercises of multiple perpetual  call options are  characterized by a non-increasing sequence of exercise thresholds (see Proposition \ref{prop_mono} and Theorem \ref{main2} below). The corresponding optimal thresholds are determined explicitly in the single stopping case and recursively in the multiple stopping case. Our results extend the stock loans models by Xia and Zhou \cite{xiazhou} as well as Cai and Sun \cite{Cai2014} from their single stock loan to   sequential stock loans, and from a   geometric Brownian motion \cite{xiazhou} and double-exponential jump diffusion \cite{Cai2014} models to a class of general two-sided \lev processes in our paper.  As such, the minimal assumptions on the refraction times and the underlying \lev process prevent the use of model/distribution-specific properties that are amenable for analysis and computation.  We overcome this  challenge  through the use of Laplace transform, change of measure, martingale theory, along with other analytical techniques.  Our analysis allows for the  recursive computation of  the optimal value function as well as all exercise thresholds, thus providing an  alternative to the simulation  approach commonly found in  existing literature for multi-exercise  options (see \citep{Bender,Hambly_MF04}, among others). We also examine the impact of refraction time distribution  on the optimal  exercise thresholds.

 In our model, the   random  refraction times between consecutive exercise opportunities can also be interpreted  as a result of  successive randomization. To this end,  Kyprianou and  Pistorius   \cite{kypri_Canada} apply  fluctuation theory of L\'evy processes to study the method  of maturity randomization (Canadization) for derivatives  pricing. The randomization procedure turns a finite-maturity option into a perpetual one.   Avram et al.  \cite{Avram_2004} consider a number of exit problems of spectrally negative L\'evy processes, and apply them to value  Russian options with a randomized maturity. In contrast,  we consider a problem with  multiple exercise rights,   allowing for a negative  discount rate, as well as negative and  positive jumps for the underlying \lev process.

  The recent work  by Christensen and Lempa \cite{ChristensenLempa13} discuss  an optimal multiple stopping problem driven by a strong Markov process with  i.i.d.  exponential  refraction periods.  Another related work by Christensen et al. \cite{CIJ2013}  study an optimal multiple stopping  problem with random waiting times in terms of a sequence of single stopping problems. They provide an explicit solution to the problem of a  perpetual put option whereby the sequential exercises are refracted by the  first passage times of a geometric Brownian motion.  Compared to their work, our model not only allows for a negative discount rate and general random refraction times, but also  incorporates jumps in the underlying process via a \lev process  with   positive phase-type jumps and negative jumps from  any distribution. As  discussed  in \cite{Asmussen2004},  \lev processes with  phase-type jumps are capable  of approximating  a general  class of  \lev processes. Herein, the  major mathematical challenge is to characterize the optimal exercise strategies given minimal distributional  structures of the \lev  jumps and refraction times.

The  current paper is also relevant  to  the growing number of financial applications  that involve making sequential timing  decisions. Examples include  multiple-exercise options \cite{carmonaDayanik08,Hambly_MF04}, portfolios of employee stock options \citep{ChristensenLempa13,GrasselliHenderson2008,LeungSircarESO_MF09}, sequential infrastructure investments  \cite{Chiara,EricTim13}, as well as  reload and shout options \cite{kowk_reload_shout08}.  Since some of these applications also involve a sequence of perpetual call options, our analysis  is directly applicable and provides  an extension to discounting with  a negative rate.

Let us provide an outline of the paper. In Section \ref{sec:problem}, we formulate the optimal multiple stopping problem and   present some general  mathematical properties. In Section \ref{sec:single},  we   analyze both the single and multiple stopping problems driven by a two-sided L\'evy process. Section  \ref{section_numerical} discusses the numerical implementation and provides some illustrative numerical  examples. Section \ref{sect-conclude} concludes the paper. Our proofs, constituting a substantial part of the paper,  are included  in the Appendix.

\section{Problem Formulation and General Properties}\label{sec:problem}
In the background, we fix a probability space  $(\Omega, \F, \p)$  hosting a  \lev process $X=(X_t)_{t\ge 0}$ characterized uniquely by its \emph{Laplace exponent}
\begin{align}
\psi(\beta) := \log\E \left[ \mathrm{e}^{\beta X_1}\right] = c \beta
+\frac{1}{2}\sigma^2 \beta^2 +\int_{(-\infty,\infty)}(\mathrm{e}^{\beta
z}-1-\beta z 1_{\{|z|<1\}})\,\Pi(\diff z),  \label{laplace_exp}
\end{align}
for every $\beta\in\mathbb{C}$ such that $0\le\Re \beta<\beta_0$ (with $\Re z$ the real part of $z \in \mathbb{C}$)
where
\begin{align}
\beta_0:=\sup\{\beta\in\mathbb{R}\,:\,\ex [\mathrm{e}^{\beta X_1}]<\infty\}, \label{def_beta_0}
\end{align}
for some
$c \in\R$, $\sigma\geq 0$, 
 and a measure $\Pi$ with its support
$\R\backslash \{0\}$ such that
\begin{align}
\int_{(-\infty,\infty)} (1  \wedge z^2) \Pi(
\diff z)<\infty. \label{integrability_levy_measure}
\end{align}
We comment that the Laplace exponent $\psi$ can be extended beyond the line $\Re\beta=\beta_0$ by analytical continuation. Throughout,  we assume that $\beta_0>1$, and $-X$ is not a subordinator.

We denote  $\p_x$ as the  probability  and   $\E_x$ as the expectation with initial value $X_0 = x \in \R$.  When $X_0=0$, we drop the subscripts in $\p_x$ and $\E_x$.
Let $\mathbb{F}:=(\F_t)_{t\ge0}$ be the natural filtration generated by $X$.  The underlying price process is modeled by an exponential L\'evy process $S_t := \mathrm{e}^{X_t}$, $t\ge 0$.

Now we  describe our   optimal multiple stopping problem with a   refraction period  between  consecutive exercises. In the general setting, we can take the refraction period    $\delta$  as a deterministic constant or a positive random variable. We will  assume throughout the paper that the distribution of the random variable $X_\delta$ has no atoms.

Denote  by $\setT$  the set of $\mathbb{F}$-stopping times.  However, the incorporation of random refraction times requires us to expand the filtration.  For any collection $\Xi$ of positive random variables, we denote  $\mathbb{F}(\Xi)$ to be the smallest filtration such that all members of $\Xi$ are stopping times (see \cite{CIJ2013}).  
For each fixed $n\ge 1$, we introduce the set of admissible sequence of  exercise times:
\begin{multline*}
\setT^{(n)} :=\{\vec{\tau}=(\tau_n,\cdots,\tau_1)\,:\,  \tau_n \in  \setT,  \tau_i \text{  is  an } ~\mathbb{F}(\{\tau_j+ \delta_j\}_{j=i+1}^{n})\text{ -- stopping time,}  \\ \text{and ~}  \tau_{i+1}+\delta_{i+1} \le \tau_i, i = n-1, \cdots, 1\},
\end{multline*}
 where $\delta_i$'s are i.i.d.\ copies of some positive-valued random variable $\delta$, which are independent of the L\'evy process $X$.  The stopping time $\tau_i$ is an admissible exercise time when there are $i$ exercise opportunities left. In particular,  $\tau_n$ is the first exercise time and $\tau_1$ is the last one.

Throughout we  will work with the  reward  function \[\phi(x):=(\mathrm{e}^x-K)^+, \quad \forall x \in \R,  \] where  we call $K>0$ the strike price. With  $n\ge 1$ exercise opportunities, the  optimal    stopping problem is defined as \be
\tilde{v}^{(n)}(x) :=\sup_{\vec{\tau}\in \setT^{(n)}}\ex_x\left[\sum_{i=1}^n\mathrm{e}^{-\alpha\tau_i}\phi(X_{\tau_i})\ind_{\{\tau_i<\infty\}}\right], \quad \forall x \in \R.\label{Problem}
\ee

We impose a standing technical integrability condition to ensure that the problem is well defined.  
  \begin{assump}\label{assume1}There exists a constant  $\varrho>1$, such that the L\'evy process $X$ satisfies
\be\label{eq:ass}
\ex_x\!\left[\,\left(\sup_{0\le t<\infty} \mathrm{e}^{-\alpha t}\phi(X_t)\right)^\varrho\,\right]<\infty,\,\,\,\forall x\in\R.
\ee
\end{assump}
In Section 3, we will provide the conditions on $\alpha$ in  Assumption 3.1 so that this integrability condition will hold.

One key feature of our model is that the constant parameter $\alpha$ can be taken to be positive/negative, representing a discounting/inflating factor.  In the stock loan model proposed by Xia and Zhou \cite{xiazhou}, the negative effective discount rate  arises  when the rate charged by the bank $\gamma$  is  higher than the interest rate $r$. To see this, we consider  an investor who borrows amount  $K$ from a bank,  using  a share of stock $S$ as collateral. The borrower has the right to  redeem  the stock by paying the accrued principle $Ke^{\gamma t}$ at any time $t\ge 0$.  Hence, we  write the expected discounted payoff as $\ex_x [ \mathrm{e}^{-r\tau} (S_\tau - \mathrm{e}^{\gamma \tau}K)^+] = \ex_x [ \mathrm{e}^{-(r-\gamma)\tau} (\tilde{S}_\tau - K)^+]$ for $\tau\in\setT$, where $\alpha = r- \gamma<0$ and $\tilde{S}_\tau = \mathrm{e}^{-\gamma \tau} S_\tau$.   In a different class of  applications, the negative effective discount rate is also relevant  to real option exercise timing  when the investment cost $K$ grows at a rate $\gamma$ that is  higher than the firm's discount  rate $r$.

 In order to solve the   optimal multiple stopping problem \eqref{Problem}, we will establish its equivalence to the  following recursion of optimal single stopping problems:
\be
v^{(k)}(x):=\sup_{ {\tau}\in \setT}\ex_x\!\left[\mathrm{e}^{-\alpha\tau}\phi^{(k)}(X_{\tau})\ind_{\{\tau<\infty\}}\right]\label{single1}
\ee
where \be \phi^{(k)}(x) :=\phi(x)+\ex_x\!\left[\,\mathrm{e}^{-\alpha\delta}v^{(k-1)}(X_\delta)\,\right],\quad k = 1, 2, \cdots, n,\label{single2}\ee and  $v^{(0)}(x) := 0.$  To this end, we first present some useful properties of the value function $v^{(k)}$ for every $k\in\setN$.

\begin{lem}\label{lem_a}
For every integer $k\in \{1, \cdots, n\}$ and all $s\in\R_+:=(0,\infty)$, the function  $U^{(k)}(s):=v^{(k)}(\log s)$  is non-decreasing, convex, and hence differentiable almost everywhere on $\R_+$.
\end{lem}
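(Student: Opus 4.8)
The plan is to prove the three properties (monotonicity, convexity, hence a.e.\ differentiability) by induction on $k$, using the recursion \eqref{single1}--\eqref{single2} and the fact that in the $s$-variable the reward $\phi(\log s) = (s-K)^+$ is non-decreasing and convex. First I would treat the base step. For $k=1$ we have $\phi^{(1)}(x) = \phi(x)$ since $v^{(0)}\equiv 0$, so $U^{(1)}(s) = \sup_{\tau\in\setT}\E_x[\mathrm{e}^{-\alpha\tau}(S_\tau - K)^+\I{\tau<\infty}]$ written with $x=\log s$. Monotonicity in $s$ follows because $x\mapsto X_t$ under $\p_x$ is, pathwise, $x + X_t$ under $\p_0$, so a larger starting point $x$ dominates a smaller one for every path, and the map $y\mapsto(\mathrm{e}^{y}-K)^+$ is non-decreasing; taking suprema over the common stopping-time set $\setT$ preserves the inequality. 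Convexity of $U^{(1)}$ in $s$ I would get from the standard fact that $S_t=\mathrm{e}^{X_t}$ is, under $\p_x$ with $x=\log s$, equal to $s\cdot\mathrm{e}^{X_t}$ with $X_0=0$, i.e.\ it is \emph{linear} in the initial stock price $s$; hence $s\mapsto(sM_t - K)^+$ is convex for each fixed realization of $M_t:=\mathrm{e}^{X_t}\ge 0$ (a supremum/composition of the convex increasing $(\cdot)^+$ with an affine map), expectation preserves convexity, and finally the supremum over $\tau\in\setT$ of convex functions of $s$ is convex. Local boundedness (needed so the sup is a genuine convex function and not identically $+\infty$) is guaranteed by Assumption \ref{assume1}.

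For the inductive step, assume $U^{(k-1)}(s) = v^{(k-1)}(\log s)$ is non-decreasing and convex on $\R_+$. I first need to show the modified reward $\phi^{(k)}$ inherits these properties in the $s$-variable, i.e.\ that $s\mapsto \phi^{(k)}(\log s) = (s-K)^+ + \E_{\log s}[\mathrm{e}^{-\alpha\delta}v^{(k-1)}(X_\delta)]$ is non-decreasing and convex. The first summand is clear. For the second, write it (conditioning on $\delta$ and using the spatial homogeneity $X_\delta = \log s + \tilde X_\delta$ under $\p_{\log s}$, with $\tilde X$ an independent copy started at $0$) as $\E[\mathrm{e}^{-\alpha\delta}\, U^{(k-1)}(s\,\mathrm{e}^{\tilde X_\delta})]$; since $U^{(k-1)}$ is non-decreasing and convex and $t\mapsto st$ is affine increasing for $s>0$, the integrand $s\mapsto U^{(k-1)}(s\,\mathrm{e}^{\tilde X_\delta})$ is non-decreasing and convex for each fixed $(\delta,\tilde X_\delta)$, and $\mathrm{e}^{-\alpha\delta}\ge 0$, so the expectation preserves both properties. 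One subtlety to flag: this requires the expectation $\E[\mathrm{e}^{-\alpha\delta}U^{(k-1)}(s\mathrm{e}^{\tilde X_\delta})]$ to be finite, which follows from $v^{(k-1)}\le \tilde v^{(n)}<\infty$ together with the integrability in Assumption \ref{assume1} (applied with the H\"older exponent $\varrho$ to control $\mathrm{e}^{-\alpha\delta}$ against the running supremum).

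Finally, with $\phi^{(k)}(\log s)$ shown to be non-decreasing and convex in $s$, the same two arguments used in the base step — spatial homogeneity for monotonicity, and linearity of $s\mapsto s\,\mathrm{e}^{X_\tau}$ together with convexity/monotonicity of $\phi^{(k)}$ plus stability of convexity under expectation and under suprema — show that $U^{(k)}(s) = \sup_{\tau\in\setT}\E[\mathrm{e}^{-\alpha\tau}\phi^{(k)}(\log(s\,\mathrm{e}^{X_\tau}))\I{\tau<\infty}]$ is non-decreasing and convex on $\R_+$. A finite convex function on the open half-line $\R_+$ is automatically continuous and differentiable everywhere except on an at most countable set, giving the "differentiable almost everywhere" conclusion. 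The main obstacle I anticipate is not the convexity bookkeeping itself but making the propagation of convexity through the recursion fully rigorous: one must ensure at each stage that $v^{(k)}$ is finite (not merely $\le\tilde v^{(n)}$ abstractly) so that "convex" is a meaningful statement, and one must justify interchanging the convex combination with both the expectation over $(\tau,X)$ and over the refraction time $\delta$ — this is where Assumption \ref{assume1} and the standing moment hypothesis $\beta_0>1$ are really used.
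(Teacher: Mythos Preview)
Your proposal is correct and follows essentially the same route as the paper's own proof: induction on $k$, spatial homogeneity to write $U^{(k)}(s)=\sup_{\tau}\E[\mathrm{e}^{-\alpha\tau}\phi^{(k)}(\log(s\,\mathrm{e}^{X_\tau}))\ind_{\{\tau<\infty\}}]$, convexity of $s\mapsto\phi^{(k)}(\log s)$ propagated through expectation and supremum, and then the standard fact that finite convex functions are a.e.\ differentiable. The paper is slightly terser---it writes out the convexity inequality for $U^{(1)}$ explicitly rather than invoking ``linearity of $s\mapsto s\,\mathrm{e}^{X_\tau}$''---but the content is the same; your additional remarks on finiteness via Assumption~\ref{assume1} are reasonable caveats that the paper leaves implicit.
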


As a result of the convexity and monotonicity, we obtain the existence and uniqueness of the  point of continuous fit for  $k=1$.
More specifically, using arguments as in the proof of Corollary 3.1 of \cite{xiazhou}, we know that there exists a level  $x_1^\star\in(\log K, \infty]$ such that $v^{(1)}(x)=\phi(x)>0$ if and only if $x\ge x_1^\star$. Note that we can without loss of generality rule out the possibility of  $x_1^\star\le \log K$ (exercising out of the money). 

For any $b\in\R$, we denote by $\tau_b^+$ the first up-crossing time
\[\tau^+_{b} = \inf\{\,t\ge 0 \,:\, X_t \ge b\}.\]
Here and throughout the paper we define $\inf \emptyset := \infty$. Furthermore,
for every $k\in\setN$, we define the value of discounted payoff of a threshold strategy $\tau_b^+\in\setT$ as
\be
\label{eq:gvalfunction}
g^{(k)}(x,b):=\ex_x\left[\mte^{-\alpha\tau_b^+}\phi^{(k)}(X_{\tau_b^+})\ind_{\{\tau_b^+<\infty\}}\right],\,\,\,\forall x\in\R.
\ee
When $x_1^\star < \infty$,  we know that the value function of the auxiliary problem \eqref{single1} for $k=1$ is given by
\[
v^{(1)}(x)=g^{(1)}(x,x_1^\star), \quad \forall x \in \R.
\]
When $x_1^\star = \infty$, the problem is trivial and the value function $v^{(1)}(x)$ is approximated by the expected value under $\tau_M^+$ by taking $M$ arbitrarily large.  We shall now assume the former
and give sufficient conditions so that similar results hold for the problem \eqref{single1} for $k\in\{2,\cdots, n\}$. To this end, we  adapt  the arguments from  the proof of  Lemma 3.2 of \cite{Carmona2008} to obtain the following result.

\begin{lem}\label{lem:x_k<x_1}
Suppose $x_1^\star \in(\log K, \infty)$.  Then
for  every  $1 \leq k \leq n$ and all $x\in[x_1^\star,\infty)$, we have $\phi^{(k)}(x)=v^{(k)}(x)$.
\end{lem}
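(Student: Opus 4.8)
The plan is to prove the two inequalities separately. The bound $v^{(k)}(x)\ge\phi^{(k)}(x)$ for all $x$ is immediate from \eqref{single1} by using the stopping time $\tau\equiv 0$, so the content of the lemma is the reverse inequality on $[x_1^\star,\infty)$. Fix $k\in\{1,\dots,n\}$ and set $h(y):=\ex_y\!\left[\mathrm{e}^{-\alpha\delta}v^{(k-1)}(X_\delta)\right]$, so that $\phi^{(k)}=\phi+h$ by \eqref{single2}, with $h\ge 0$ since $v^{(k-1)}\ge 0$. First I would record that $h$ is finite and continuous: finiteness uses Assumption \ref{assume1} together with the finiteness of $v^{(k-1)}$ already implicit in Lemma \ref{lem_a}; continuity follows by writing $h(y)=\ex\!\left[\mathrm{e}^{-\alpha\delta}v^{(k-1)}(y+X_\delta)\right]$, using that $v^{(k-1)}$ is continuous (by Lemma \ref{lem_a}, $s\mapsto v^{(k-1)}(\log s)$ is convex, hence continuous, on $\R_+$) and dominated convergence justified by \eqref{eq:ass}.

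The crux is the claim that $(\mathrm{e}^{-\alpha t}h(X_t))_{t\ge0}$ is a nonnegative $(\p_x,\mathbb{F})$-supermartingale for every $x\in\R$. I would derive it from the standard fact that the value function of the optimal stopping problem \eqref{single1} (with $k-1$) is $\alpha$-supermedian, i.e.\ $(\mathrm{e}^{-\alpha t}v^{(k-1)}(X_t))_{t\ge0}$ is a supermartingale, combined with the stationary independent increments of $X$ and the independence of $\delta$ from $X$. Indeed, for $u\ge0$ and $y\in\R$, unfolding the definition of $h$ and conditioning on the independent $\delta$ gives $\ex_y[\mathrm{e}^{-\alpha u}h(X_u)]=\ex_y[\mathrm{e}^{-\alpha\delta}\,\ex_{X_\delta}[\mathrm{e}^{-\alpha u}v^{(k-1)}(X_u)]]\le\ex_y[\mathrm{e}^{-\alpha\delta}v^{(k-1)}(X_\delta)]=h(y)$, and the Markov property of $X$ at a time $t\le s$ upgrades this to $\ex_x[\mathrm{e}^{-\alpha s}h(X_s)\mid\F_t]\le\mathrm{e}^{-\alpha t}h(X_t)$; right-continuity of the paths is inherited from that of $X$ via continuity of $h$. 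Once this supermartingale property is in hand, optional stopping at $t\wedge\tau$ followed by Fatou's lemma yields, for every $\tau\in\setT$,
\[
\ex_x\!\left[\mathrm{e}^{-\alpha\tau}h(X_\tau)\ind_{\{\tau<\infty\}}\right]\le h(x),\qquad x\in\R .
\]

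To conclude, fix $x\ge x_1^\star$ and an arbitrary $\tau\in\setT$. Splitting the expectation via $\phi^{(k)}=\phi+h$,
\[
\ex_x\!\left[\mathrm{e}^{-\alpha\tau}\phi^{(k)}(X_\tau)\ind_{\{\tau<\infty\}}\right]=\ex_x\!\left[\mathrm{e}^{-\alpha\tau}\phi(X_\tau)\ind_{\{\tau<\infty\}}\right]+\ex_x\!\left[\mathrm{e}^{-\alpha\tau}h(X_\tau)\ind_{\{\tau<\infty\}}\right].
\]
Since $\phi^{(1)}=\phi$, the first term is bounded by $v^{(1)}(x)$ by the definition \eqref{single1}, and $v^{(1)}(x)=\phi(x)$ for $x\ge x_1^\star$ by the characterization of $x_1^\star$ recalled before the lemma; the second term is bounded by $h(x)$ by the display above. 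Hence the left-hand side is at most $\phi(x)+h(x)=\phi^{(k)}(x)$, and taking the supremum over $\tau\in\setT$ gives $v^{(k)}(x)\le\phi^{(k)}(x)$, which together with the trivial reverse inequality proves the claim. Note that no induction on $k$ is actually needed: the argument only uses that $v^{(k-1)}$ is a finite optimal-stopping value function, which Lemma \ref{lem_a} already guarantees.

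The step I expect to be most delicate is the rigorous proof that $\mathrm{e}^{-\alpha t}h(X_t)$ is a supermartingale: making the ``condition on the independent $\delta$ and use stationary independent increments'' reduction precise (most cleanly on the enlarged filtration $\mathbb{F}(\{\delta\})$ on which $\delta$ is a stopping time), and controlling all the expectations when $\alpha<0$, so that $\mathrm{e}^{-\alpha t}$ is increasing. This is exactly where the integrability afforded by Assumption \ref{assume1} is used, ensuring $h$ and the relevant conditional expectations are finite.
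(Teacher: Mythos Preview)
Your proof is correct and is precisely the adaptation of Carmona--Touzi's Lemma 3.2 that the paper invokes (the paper gives no details beyond that citation): split $\phi^{(k)}=\phi+h$ with $h(y)=\ex_y[\mathrm{e}^{-\alpha\delta}v^{(k-1)}(X_\delta)]$, use that $(\mathrm{e}^{-\alpha t}h(X_t))_{t\ge0}$ inherits the supermartingale property from the Snell envelope $(\mathrm{e}^{-\alpha t}v^{(k-1)}(X_t))_{t\ge0}$, and combine $\ex_x[\mathrm{e}^{-\alpha\tau}\phi(X_\tau)\ind_{\{\tau<\infty\}}]\le v^{(1)}(x)=\phi(x)$ on $[x_1^\star,\infty)$ with $\ex_x[\mathrm{e}^{-\alpha\tau}h(X_\tau)\ind_{\{\tau<\infty\}}]\le h(x)$. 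One minor correction: Lemma~\ref{lem_a} alone does not give finiteness of $v^{(k-1)}$; that comes from Assumption~\ref{assume1} via the recursive bound \eqref{eq:regularity} (established in the proof of Theorem~\ref{prop:regularity}, but independent of Lemma~\ref{lem:x_k<x_1}), which is also what justifies the Snell-envelope supermartingale property you rely on.
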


Lemma \ref{lem:x_k<x_1} implies that $U^{(k)}(s)$ will eventually continuously fit $\phi^{(k)}(\log s)$  as    $s$ increases. By  the convexity of $U^{(k)}(s) \equiv v^{(k)}(\log s)$ and $\phi^{(k)}(\log s)$, we know that ${U^{(k)}}'(s)$ is almost everywhere bounded from above by $\mathrm{esssup}_{s\in\R_+}\left\{\frac{\partial}{\partial s}\phi^{(k)}(\log s)\right\}$.  In turn, we  deduce that $\ex[\mathrm{e}^{-\alpha\delta}v^{(k)}(x+X_\delta)]$ is differentiable in $x$ on $\R$ since  the distribution of $X_\delta$ does not charge a positive measure on the (at most) countable points where $v^{(k)}$ is not differentiable.
 
\begin{cor}\label{diff}
Suppose $x_1^\star \in(\log K, \infty)$ {and $\ex[\mte^{-\alpha\delta+X_\delta}]\le 1$}. For every integer $k \in\setN$, we have $0\le {v^{(k)}}'(x)\le k\mathrm{e}^x$, a.e. and
\[0\le \frac{\partial}{\partial x}\ex[\mathrm{e}^{-\alpha\delta}v^{(k)}(x+X_\delta)]=\ex[\mathrm{e}^{-\alpha\delta}{v^{(k)}}'(x+X_\delta)]\le k\mathrm{e}^{x},\,\,\,\forall x\in\R.\]
\end{cor}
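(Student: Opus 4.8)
The plan is to argue by induction on $k$, combining the convexity and monotonicity of $U^{(k)}(s)=v^{(k)}(\log s)$ from Lemma~\ref{lem_a}, the eventual matching $\phi^{(k)}=v^{(k)}$ from Lemma~\ref{lem:x_k<x_1}, and the $\esssup$-estimate already recorded in the paragraph preceding the statement. Write $\Phi^{(k)}(s):=\phi^{(k)}(\log s)$ for $s>0$. The base case $k=0$ is immediate since $v^{(0)}\equiv 0$. For the inductive step I would assume $0\le {v^{(k-1)}}'(x)\le (k-1)\mathrm{e}^{x}$ a.e.; since the derivative of the convex function $U^{(k-1)}$ is non-decreasing, this is equivalent to the pointwise bound ${U^{(k-1)}}'(r)\le k-1$ at every point of differentiability.

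First I would estimate ${\Phi^{(k)}}'$. By \eqref{single2}, $\Phi^{(k)}(s)=(s-K)^{+}+\ex[\mathrm{e}^{-\alpha\delta}U^{(k-1)}(s\,\mathrm{e}^{X_\delta})]$; since $U^{(k-1)}$ is convex and non-decreasing the integrand is convex in $s$, hence $\Phi^{(k)}$ is convex, and differentiation under the expectation is legitimate because $X_\delta$ has no atoms (so that, a.s., $s\,\mathrm{e}^{X_\delta}$ avoids the at most countable set on which $U^{(k-1)}$ is not differentiable) and because the difference quotients are monotone in the increment and, by the inductive bound, dominated by $(k-1)\mathrm{e}^{X_\delta}$, which is $\ex[\mathrm{e}^{-\alpha\delta}\,\cdot\,]$-integrable under the hypothesis $\ex[\mathrm{e}^{-\alpha\delta+X_\delta}]\le 1$. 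This yields, for a.e.\ $s>0$,
\[
{\Phi^{(k)}}'(s)=\ind_{\{s>K\}}+\ex\!\left[\mathrm{e}^{-\alpha\delta}{U^{(k-1)}}'(s\,\mathrm{e}^{X_\delta})\,\mathrm{e}^{X_\delta}\right]\le 1+(k-1)\,\ex\!\left[\mathrm{e}^{-\alpha\delta+X_\delta}\right]\le k.
\]

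Next I would push this to $U^{(k)}$ and back to $v^{(k)}$. By Lemma~\ref{lem:x_k<x_1}, $U^{(k)}=\Phi^{(k)}$ on $[\mathrm{e}^{x_1^\star},\infty)$, while $U^{(k)}\ge \Phi^{(k)}$ on all of $\R_+$ (take $\tau=0$ in \eqref{single1} and note $\phi^{(k)}\ge 0$); convexity of both functions then forces ${U^{(k)}}'(s)\le \esssup_{r>0}{\Phi^{(k)}}'(r)\le k$ for a.e.\ $s$, which is exactly the $\esssup$-bound quoted above. Undoing the substitution, ${v^{(k)}}'(x)={U^{(k)}}'(\mathrm{e}^{x})\,\mathrm{e}^{x}\le k\mathrm{e}^{x}$ a.e., and ${v^{(k)}}'(x)\ge 0$ a.e.\ because $U^{(k)}$ is non-decreasing; this closes the induction. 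For the final display, $v^{(k)}=U^{(k)}\circ\exp$ is convex, hence differentiable off an at most countable set $N$; since $X_\delta$ has no atoms, $\p(x+X_\delta\in N)=0$ for every $x$, and repeating the monotone-difference-quotient / dominated-convergence argument (now dominated by $k\mathrm{e}^{x+X_\delta}$) gives $\frac{\partial}{\partial x}\ex[\mathrm{e}^{-\alpha\delta}v^{(k)}(x+X_\delta)]=\ex[\mathrm{e}^{-\alpha\delta}{v^{(k)}}'(x+X_\delta)]$; the two-sided bound then follows by integrating the pointwise bounds, as $0\le \ex[\mathrm{e}^{-\alpha\delta}{v^{(k)}}'(x+X_\delta)]\le k\mathrm{e}^{x}\ex[\mathrm{e}^{-\alpha\delta+X_\delta}]\le k\mathrm{e}^{x}$.

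The step I expect to require the most care is the pair of interchanges of differentiation and expectation: they hinge on the standing no-atom assumption on $X_\delta$ to make the countable non-differentiability sets negligible, and on a uniform local bound on the difference quotients (supplied, a little circularly but legitimately, by the inductive hypothesis itself) to justify dominated convergence. The assumption $\ex[\mathrm{e}^{-\alpha\delta+X_\delta}]\le 1$ is precisely what keeps the constant from inflating beyond $k$ as $k$ increases.
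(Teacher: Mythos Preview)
Your argument is correct and follows essentially the same inductive strategy as the paper: bound the derivative of $\Phi^{(k)}(s)=\phi^{(k)}(\log s)$ using the inductive hypothesis and the condition $\ex[\mathrm{e}^{-\alpha\delta+X_\delta}]\le 1$, transfer this to ${U^{(k)}}'$ via convexity and the eventual matching $U^{(k)}=\Phi^{(k)}$ on $[\mathrm{e}^{x_1^\star},\infty)$ (Lemma~\ref{lem:x_k<x_1}), then justify differentiation under the expectation by the no-atom assumption on $X_\delta$ and dominated convergence. The paper starts the induction at $k=1$ rather than $k=0$ and writes the Lipschitz/DCT step in the $x$-variable rather than the $s$-variable, but these are cosmetic differences; your explicit invocation of Lemma~\ref{lem:x_k<x_1} is in fact what underlies the paper's terse ``it follows that $0\le {U^{(l)}}'(s)\le l$''.
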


We now  establish the equivalence between \eqref{Problem} and \eqref{single1}. Let us first recursively define  the set of stopping times
\bq \label{taus1}
\tau_n^\star&:=&\inf\{t\ge0\,:\, v^{(n)}(X_t)=\phi^{(n)}(X_t)\},\\
\tau_i^\star&:=&\inf\{t\ge\delta_{i+1}+\tau^\star_{i+1}\,:\,v^{(i)}(X_t)=\phi^{(i)}(X_t)\}, \quad \text{ for } i =n-1, \cdots, 1.\label{tausi}
\eq
We show below that   $(\tau_n^\star,\cdots,\tau_1^\star) \in\setT^{(n)}$ solve
the optimal multiple stopping problems  \eqref{Problem} and \eqref{single1}-\eqref{single2}.

\begin{thm}\label{prop:regularity} Suppose $x_1^\star \in(\log K, \infty)$. Fix a $k\in \{1, \cdots, n\}$, {then}  the stopping times $(\tau_i^\star)_{1\le i\le k}$ defined in \eqref{taus1}-\eqref{tausi} satisfy $\pr_x(\tau_i^\star<\infty,\,1\le i\le k)>0$, for all $x \in \R$.  Moreover,
\begin{enumerate}
\item[(i)] the value function of the auxiliary problem, $v^{(k)}$ of \eqref{single1}, satisfies
\bq
v^{(k)}(x)=\ex_x\left[\sum_{i=1}^k\mathrm{e}^{-\alpha\tau_i^\star}\phi(X_{\tau_i^\star})\ind_{\{\tau_i^\star<\infty\}}\right],\,\,\,\forall x\in\mathbb{R};\label{vn1}
\eq
\item[(ii)] the value function $v^{(k)}(x)$ of \eqref{single1} is equal to   $\tilde{v}^{(k)}(x)$ of \eqref{Problem} for every $x\in\R$;
\item[(iii)] for every initial value $X_0= x \in \R$,  all the  random variables in the collection
\[\S^{(k)}:=\{\mathrm{e}^{-\alpha \tau}v^{(k)}(X_\tau)\,:\,\tau\,\,\text{is an a.s. finite }\mathbb{F}\text{-stopping time}\}\] are uniformly bounded in $\mathrm{L}^{\varrho}(\diff\pr_x)$. 
\end{enumerate}
\end{thm}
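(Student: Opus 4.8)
The plan is to argue by induction on $k$, at each stage treating the auxiliary single-stopping problem \eqref{single1} with reward $\phi^{(k)}$ via the classical theory of optimal stopping of uniformly integrable reward families; the only extra difficulty caused by a negative $\alpha$ is that the exploding factor $\mathrm{e}^{-\alpha t}$ must be controlled, and that control is precisely what Assumption \ref{assume1} and part (iii) provide. The base case $k=1$ is essentially already in place: $\phi^{(1)}=\phi$, $\tau_1^\star=\tau^+_{x_1^\star}$, and the recorded identity $v^{(1)}=g^{(1)}(\cdot,x_1^\star)$ is exactly \eqref{vn1}; part (ii) is immediate because $\setT^{(1)}=\setT$; part (iii) follows from $\mathrm{e}^{-\alpha\tau}v^{(1)}(X_\tau)=\ex_x[\mathrm{e}^{-\alpha\rho}\phi(X_\rho)\ind_{\{\rho<\infty\}}\mid\F_\tau]$, where $\rho$ is the first up-crossing of $x_1^\star$ at or after $\tau$, which is dominated by $\ex_x[\sup_{t\ge0}\mathrm{e}^{-\alpha t}\phi(X_t)\mid\F_\tau]$, whence Jensen's inequality and Assumption \ref{assume1}; and $\pr_x(\tau_1^\star<\infty)>0$ since $-X$ is not a subordinator, so that $\pr_y(\tau^+_b<\infty)>0$ for every $y,b\in\R$.

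For the inductive step, assume the three assertions for $k-1$. The first, and I expect decisive, step is to prove that the reward family $\{\mathrm{e}^{-\alpha\sigma}\phi^{(k)}(X_\sigma)\ind_{\{\sigma<\infty\}}:\sigma\in\setT\}$ is bounded in $\mathrm{L}^\varrho(\diff\pr_x)$, hence uniformly integrable. Splitting $\phi^{(k)}$ via \eqref{single2}, the $\phi(X_\sigma)$ part is dominated by $\sup_{t\ge0}\mathrm{e}^{-\alpha t}\phi(X_t)\in\mathrm{L}^\varrho$; for the term $\mathrm{e}^{-\alpha\sigma}\ex_{X_\sigma}[\mathrm{e}^{-\alpha\delta}v^{(k-1)}(X_\delta)]\ind_{\{\sigma<\infty\}}$, the strong Markov property and the independence of the refraction time give that it equals $\ex_x[\mathrm{e}^{-\alpha(\sigma+\delta)}v^{(k-1)}(X_{\sigma+\delta})\ind_{\{\sigma<\infty\}}\mid\F_\sigma]$ for a fresh copy $\delta$; conditioning further on $\delta$ reduces the random variable inside to a member of $\S^{(k-1)}$ at the $\F$-stopping time $\sigma+t$, so the induction hypothesis for part (iii) together with Jensen's inequality finishes the bound. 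With this uniform integrability in hand, the classical optimal-stopping theory applies to \eqref{single1}: $v^{(k)}$ is finite, the first-entry time $\tau_k^\star$ of \eqref{taus1} (with $n$ read as $k$) is optimal, and $v^{(k)}(x)=\ex_x[\mathrm{e}^{-\alpha\tau_k^\star}\phi^{(k)}(X_{\tau_k^\star})\ind_{\{\tau_k^\star<\infty\}}]$. Part (iii) for $k$ then follows, because by the strong Markov property at $\tau$ the quantity $\mathrm{e}^{-\alpha\tau}v^{(k)}(X_\tau)$ is a conditional expectation given $\F_\tau$ of the member of the above reward family obtained by restarting $\tau_k^\star$ at time $\tau$, and Jensen's inequality applies.

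Part (i) for $k$ is obtained by unfolding: substituting \eqref{single2} into $v^{(k)}(x)=\ex_x[\mathrm{e}^{-\alpha\tau_k^\star}\phi^{(k)}(X_{\tau_k^\star})\ind_{\{\tau_k^\star<\infty\}}]$ yields the $i=k$ summand of \eqref{vn1} plus $\ex_x[\mathrm{e}^{-\alpha(\tau_k^\star+\delta_k)}v^{(k-1)}(X_{\tau_k^\star+\delta_k})\ind_{\{\tau_k^\star<\infty\}}]$ (strong Markov property and independence of $\delta_k$); applying the induction hypothesis \eqref{vn1} at the point $X_{\tau_k^\star+\delta_k}$, and observing that the recursion \eqref{tausi} makes the corresponding time-shifted optimal stopping times coincide with $\tau_{k-1}^\star,\dots,\tau_1^\star$, recovers \eqref{vn1} for $k$. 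For part (ii), $\tilde v^{(k)}\ge v^{(k)}$ follows from part (i) because $(\tau_i^\star)_{1\le i\le k}\in\setT^{(k)}$ by construction; for the reverse, given any $\vec\tau\in\setT^{(k)}$ one conditions on the information up to $\tau_k+\delta_k$, bounds the sum of the last $k-1$ discounted payoffs by $\mathrm{e}^{-\alpha(\tau_k+\delta_k)}\tilde v^{(k-1)}(X_{\tau_k+\delta_k})=\mathrm{e}^{-\alpha(\tau_k+\delta_k)}v^{(k-1)}(X_{\tau_k+\delta_k})$ (induction hypothesis, together with the usual admissibility/time-shift argument as in \cite{Carmona2008}), recombines with the first term into $\mathrm{e}^{-\alpha\tau_k}\phi^{(k)}(X_{\tau_k})$ through \eqref{single2}, and invokes $\tau_k\in\setT$ and the definition of $v^{(k)}$. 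Finally, $\pr_x(\tau_i^\star<\infty,\,1\le i\le k)>0$ because $\{v^{(i)}=\phi^{(i)}\}\supseteq[x_1^\star,\infty)$ by Lemma \ref{lem:x_k<x_1}, so each $\tau_i^\star$ is dominated by the first up-crossing of $x_1^\star$ after $\tau_{i+1}^\star+\delta_{i+1}$, and one propagates positivity along the recursion using the strong Markov property and $\pr_y(\tau^+_{x_1^\star}<\infty)>0$.

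I expect the main obstacle to be the optimal-stopping step in the presence of the exploding discount: one has to verify that the stopped discounted value process $(\mathrm{e}^{-\alpha(t\wedge\tau_k^\star)}v^{(k)}(X_{t\wedge\tau_k^\star}))$ is a supermartingale of class (D) which is a martingale up to $\tau_k^\star$, and the class-(D) property rests entirely on the uniform-integrability bound above (hence on Assumption \ref{assume1}). A secondary, more clerical difficulty is the careful handling of the enlarged filtration $\mathbb{F}(\{\tau_j+\delta_j\})$ carrying the refraction times whenever the strong Markov property and the independence of the $\delta_j$ are invoked in parts (i) and (ii).
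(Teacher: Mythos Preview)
Your inductive strategy matches the paper's, and parts (ii), (iii), and the positivity claim are handled essentially the same way (the paper bounds $\mathrm{e}^{-\alpha\tau}v^{(k)}(X_\tau)$ directly by $\mathrm{e}^{-\alpha\tau}\E_{X_\tau}[\sup_{s\ge0}\mathrm{e}^{-\alpha s}\phi^{(k)}(X_s)]$ and then uses the strong Markov property, rather than splitting $\phi^{(k)}$ and invoking $\S^{(k-1)}$, but either route works).

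There is, however, one genuine point you gloss over in part (i). You write that once the reward family is bounded in $\mathrm{L}^\varrho$, ``the classical optimal-stopping theory applies \ldots\ the first-entry time $\tau_k^\star$ is optimal'', and you later identify the obstacle as the class-(D) property of the stopped value process. Class (D) is necessary but not sufficient: to conclude that the first-entry time actually attains the supremum one needs a regularity condition on the Snell envelope, and this is where the paper does real work. Specifically, the paper uses the Lipschitz bound on $s\mapsto v^{(k)}(\log s)$ from Corollary~\ref{diff} (which in turn rests on Lemma~\ref{lem_a}) to show, following Proposition~3.2 of \cite{ZeghalSwing}, that $\E_x[\Delta(\mathrm{e}^{-\alpha\tau}v^{(k)}(X_\tau))\ind_{\{\tau<\infty\}}]=0$ at every predictable time $\tau$; this yields left-continuity in expectation of the Snell envelope, and only then does Theorem~2.1 of \cite{Carmona2008} deliver optimality of $\tau_k^\star$. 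Your uniform-integrability argument does not by itself supply this regularity, so you should insert the Lipschitz/LCE step before invoking the classical theory.
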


We now return to the optimal multiple stopping problem \eqref{single1}-\eqref{single2}. From Lemma \ref{lem:x_k<x_1} we know that, if $x_1^\star\in(\log K, \infty)$, then for every $k\in\{2,\cdots,n\}$, we can define a finite level
\be\label{eq:def_x_k}
x_k^\star:=\inf\{x\le x_1^\star\,:\,v^{(k)}(y)=\phi^{(k)}(y), \,\,\,\forall y\ge x\}.
\ee
Then, for every $k\in\setN$,  the interval $[x_k^\star,\infty)$ must be  one connected domain of the optimal stopping region for problem \eqref{single1} with $k$ exercise opportunities. It should be noted  that, in general for  $k\ge2$, the optimal  stopping region can potentially be disconnected, consisting of  multiple disjoint intervals,  as the composite payoff function $\phi^{(k)}(\log s)$ is no longer  {piecewise} linear in $s\in\R_+$. However, if $[x_k^\star,\infty)$ is the only optimal stopping region, then the up-crossing time $\tau_{x_k^\star}^+$ must  be the optimal stopping time to problem \eqref{single1}, and $v^{(k)}(x)=g^{(k)}(x,x_k^\star)$ for all $x\in\R$.

To resolve  the issue of possible multiple disconnected components of optimal stopping for $k\ge2$, we consider the best threshold type strategy, among all first up-crossing times $\{\tau_b^+\,:\, b\in\R\}$,  and then give a sufficient condition for its optimality.
\begin{definition}\label{def_b_k}
We call a level $b_k^\star\in\R$ the optimal exercise threshold for problem \eqref{single1} with $k$ exercise opportunities, if and only if the function $g^{(k)}(x,b)$ is maximized at $b=b_k^\star$ for all $x\in\R$. More specifically, if $b_k^\star$ satisfies the following:
\begin{enumerate}
\item[(a)] For any fixed $x<b_k^\star$, the supremum of the function $g^{(k)}(x,\cdot)$ is given by $g^{(k)}(x,b_k^\star)$;
\item[(b)] For any fixed $x\ge b_k^\star$, the supremum of the function $g^{(k)}(x,\cdot)$ is given by $\phi^{(k)}(x)$.
\end{enumerate}
\end{definition}

When $k=1$, we know that $b_1^\star=x_1^\star$ if the latter is finite. Notice that,  for a general $k\ge2$, the optimal  exercise threshold $b_k^\star$ may not exist.  The following result characterizes the relationship between $x_k^\star$ and $b_k^\star$, when the latter exists.

\begin{prop}\label{x_k}
Suppose $x_1^\star \in(\log K, \infty)$.  Fix an integer  $k\in\{2,\cdots,n\}$, assume that $v^{(k-1)}(x)>v^{(k-2)}(x)$ for all $x\in\R$, and that $[x_{k-1}^\star,\infty)$ is the only optimal stopping region for problem \eqref{single1} with $(k-1)$ exercise opportunities. Then we have
\begin{enumerate}
\item[(i)] $v^{(k)}(x)>v^{(k-1)}(x)$ for all $x\in\R$;
\item[(ii)] $x_k^\star\in(\log K, x_1^\star]$, and if $b_k^\star$ exists, we also have $b_k^\star>\log K$;
\item[(iii)] If $b_k^\star$ exists and the process $(\mte^{-\alpha t}g^{(k)}(x,b_k^\star))_{t\ge0}$ is a $(\pr_x,\mathbb{F})$-supermartingale, then $x_k^\star=b_k^\star$ and $[x_k^\star,\infty)$ is the only optimal stopping region. Hence, the up-crossing time $\tau_{x_k^\star}^+$ is optimal.
   \end{enumerate}
\end{prop}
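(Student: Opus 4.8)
The plan is to handle the three items in turn; the first is essentially a one–line comparison, while the third combines a verification argument with a genuinely delicate regularity statement. \textbf{Item (i).} By \eqref{single2} and the hypothesis $v^{(k-1)}>v^{(k-2)}$ on $\R$, one has $\phi^{(k)}(x)-\phi^{(k-1)}(x)=\ex_x[\mte^{-\alpha\delta}(v^{(k-1)}(X_\delta)-v^{(k-2)}(X_\delta))]>0$ for all $x$, the integrand being a.s.\ strictly positive. Since $[x_{k-1}^\star,\infty)$ is the only optimal stopping region for the $(k-1)$-exercise problem, the up-crossing time $\tau^\star:=\tau_{x_{k-1}^\star}^+$ is optimal for it, so $v^{(k-1)}(x)=\ex_x[\mte^{-\alpha\tau^\star}\phi^{(k-1)}(X_{\tau^\star})\ind_{\{\tau^\star<\infty\}}]$, and $\pr_x(\tau^\star<\infty)>0$ by Theorem \ref{prop:regularity}. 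Feeding $\tau^\star$ as an admissible (generally suboptimal) policy into \eqref{single1} for $k$ exercises,
\[
v^{(k)}(x)\ge\ex_x[\mte^{-\alpha\tau^\star}\phi^{(k)}(X_{\tau^\star})\ind_{\{\tau^\star<\infty\}}]>\ex_x[\mte^{-\alpha\tau^\star}\phi^{(k-1)}(X_{\tau^\star})\ind_{\{\tau^\star<\infty\}}]=v^{(k-1)}(x),
\]
the strict inequality following from $\phi^{(k)}-\phi^{(k-1)}>0$ and $\pr_x(\tau^\star<\infty)>0$.

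\textbf{Item (ii).} The bound $x_k^\star\le x_1^\star$ is immediate from Lemma \ref{lem:x_k<x_1}, which gives $v^{(k)}=\phi^{(k)}$ on $[x_1^\star,\infty)$, so that $x_1^\star$ belongs to the set in \eqref{eq:def_x_k}. For $x_k^\star>\log K$ I would adapt the reasoning of \cite{xiazhou} that yields $x_1^\star>\log K$. Writing $\phi^{(k)}(x)=(\mte^x-K)^++W(x)$ with $W(x):=\ex_x[\mte^{-\alpha\delta}v^{(k-1)}(X_\delta)]$, the function $W$ is $C^1$ with $W'\ge0$ (Lemma \ref{lem_a} and the smoothness noted around Corollary \ref{diff}), so $\phi^{(k)}$ carries at $\log K$ exactly the strictly convex kink of the call payoff. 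For $b>\log K$ decompose
\[
g^{(k)}(\log K,b)=\ex_{\log K}[\mte^{-\alpha\tau_b^+}(\mte^{X_{\tau_b^+}}-K)\ind_{\{\tau_b^+<\infty\}}]+\ex_{\log K}[\mte^{-\alpha\tau_b^+}W(X_{\tau_b^+})\ind_{\{\tau_b^+<\infty\}}];
\]
the first term is strictly positive, and since $(\mte^{-\alpha t}W(X_t))_{t\ge0}$ is a $\pr$-supermartingale (a consequence of the value process $(\mte^{-\alpha t}v^{(k-1)}(X_t))_{t\ge0}$ being a supermartingale together with $\delta\perp X$) and $W'\ge0$ near $\log K$, one obtains $v^{(k)}(\log K)\ge\sup_{b>\log K}g^{(k)}(\log K,b)>\phi^{(k)}(\log K)$. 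By continuity of $v^{(k)}$ and $\phi^{(k)}$ this strict inequality persists on a right-neighbourhood of $\log K$, forcing $x_k^\star>\log K$. Finally, if $b_k^\star$ exists it must exceed $\log K$, for otherwise Definition \ref{def_b_k}(b) at $x=\log K\ (\ge b_k^\star)$ would give $\sup_c g^{(k)}(\log K,c)=\phi^{(k)}(\log K)$, contradicting the previous display.

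\textbf{Item (iii), verification step.} Put $G:=g^{(k)}(\cdot,b_k^\star)$. Then $G\ge\phi^{(k)}$ on $\R$: on $[b_k^\star,\infty)$ one has $\tau_{b_k^\star}^+\equiv0$ so $G=\phi^{(k)}$, while on $(-\infty,b_k^\star)$ Definition \ref{def_b_k}(a) gives $G(x)=\sup_c g^{(k)}(x,c)\ge\phi^{(k)}(x)$ (take $c\le x$). Since $0\le G\le v^{(k)}$, Theorem \ref{prop:regularity}(iii) makes $\{\mte^{-\alpha\tau}G(X_\tau):\tau\ \text{a.s.\ finite}\}$ bounded in $\mathrm{L}^\varrho$ and hence uniformly integrable. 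Using the hypothesis that $(\mte^{-\alpha t}G(X_t))_{t\ge0}$ is a $(\pr_x,\mathbb{F})$-supermartingale, for any $\mathbb{F}$-stopping time $\sigma$ optional stopping at $\sigma\wedge t$ and Fatou's lemma (valid as $G\ge0$) give $\ex_x[\mte^{-\alpha\sigma}G(X_\sigma)\ind_{\{\sigma<\infty\}}]\le\liminf_{t\to\infty}\ex_x[\mte^{-\alpha(\sigma\wedge t)}G(X_{\sigma\wedge t})]\le G(x)$, whence, using $G\ge\phi^{(k)}$ and taking the supremum over $\sigma$, $v^{(k)}\le G$. Combined with $G\le v^{(k)}$ this gives $v^{(k)}=g^{(k)}(\cdot,b_k^\star)$, so $\tau_{b_k^\star}^+$ attains the supremum in \eqref{single1} and is optimal.

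\textbf{The main obstacle.} What remains — that $x_k^\star=b_k^\star$ and that $[x_k^\star,\infty)$ is the \emph{only} optimal stopping region — reduces, given $v^{(k)}=G$ and $v^{(k)}=\phi^{(k)}$ on $[b_k^\star,\infty)$, to proving $v^{(k)}>\phi^{(k)}$ on \emph{all} of $(-\infty,b_k^\star)$, i.e.\ that $\phi^{(k)}$ is nowhere $\alpha$-superharmonic below $b_k^\star$; this is the crux and where I expect the real work. My plan is to combine: (a) $G$ is $\alpha$-harmonic on $(-\infty,b_k^\star)$, since $\mte^{-\alpha(t\wedge\tau_{b_k^\star}^+)}G(X_{t\wedge\tau_{b_k^\star}^+})$ is a martingale by the strong Markov property; (b) on $(\log K,b_k^\star)$ the reward is $\mte^x-K+W(x)$ with $W$ smooth and $(\mte^{-\alpha t}W(X_t))_{t\ge0}$ a supermartingale, so the local estimate at $\log K$ from Item (ii) can be rerun at an arbitrary interior point; and (c) the convexity of $U^{(k)}=v^{(k)}(\log\cdot)$ from Lemma \ref{lem_a}, to propagate the strict gap from a right-neighbourhood of $\log K$ across the whole interval up to $b_k^\star$. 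Once $v^{(k)}>\phi^{(k)}$ on $(-\infty,b_k^\star)$ holds, $\{v^{(k)}=\phi^{(k)}\}=[b_k^\star,\infty)$, whence $x_k^\star=b_k^\star$, the stopping region is this single interval, and $\tau_{x_k^\star}^+=\tau_{b_k^\star}^+$ is optimal.
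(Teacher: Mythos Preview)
Your proof of (i) matches the paper's.

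For (ii), your argument is both more elaborate than the paper's and has a gap. You decompose $g^{(k)}(\log K,b)$ into a call piece (strictly positive) plus $\ex_{\log K}[\mte^{-\alpha\tau_b^+}W(X_{\tau_b^+})\ind_{\{\tau_b^+<\infty\}}]$, then invoke the supermartingale property of $(\mte^{-\alpha t}W(X_t))_{t\ge0}$. But that inequality goes the wrong way: optional sampling gives the $W$-piece $\le W(\log K)=\phi^{(k)}(\log K)$, so positivity of the call piece does \emph{not} yield $g^{(k)}(\log K,b)>\phi^{(k)}(\log K)$, and the role of ``$W'\ge0$ near $\log K$'' is unclear. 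The paper bypasses all of this with a two-line argument: for $x\le\log K$ the call payoff vanishes, so $\phi^{(k)}(x)=\ex_x[\mte^{-\alpha\delta}v^{(k-1)}(X_\delta)]\le v^{(k-1)}(x)$ by the supermartingale property of the Snell envelope $(\mte^{-\alpha t}v^{(k-1)}(X_t))_{t\ge0}$, and then part (i) gives $v^{(k)}(x)>v^{(k-1)}(x)\ge\phi^{(k)}(x)$, forcing $x_k^\star>\log K$. For $b_k^\star>\log K$ the paper chains $g^{(k)}(\log K,b_k^\star)\ge g^{(k)}(\log K,x_{k-1}^\star)>g^{(k-1)}(\log K,x_{k-1}^\star)=v^{(k-1)}(\log K)\ge\phi^{(k)}(\log K)$; the key is to compare against the $(k-1)$-value function (via $\phi^{(k)}>\phi^{(k-1)}$ and optimality of $\tau_{x_{k-1}^\star}^+$), not against $W$ via optional stopping.

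For (iii), your verification that $v^{(k)}=g^{(k)}(\cdot,b_k^\star)$ is the paper's: list $G\ge\phi^{(k)}$, $G=\phi^{(k)}$ on $[b_k^\star,\infty)$, the stopped process is a martingale, and combine with the supermartingale hypothesis. What you flag as ``the main obstacle'' --- the strict inequality $v^{(k)}>\phi^{(k)}$ on all of $(-\infty,b_k^\star)$, hence that $[b_k^\star,\infty)$ is the \emph{only} stopping region --- is not isolated in the paper; once the verification conditions hold it simply asserts the conclusion, citing the framework of Alili--Kyprianou. Your proposed resolution via convexity and ``propagating the strict gap from a right-neighbourhood of $\log K$'' is too vague to be convincing as written. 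In the paper's development this point is effectively settled only in the explicit L\'evy setting of Section~3, where the first-order analysis (proof of Lemma~\ref{uniqueness}) shows that $b\mapsto g^{(k)}(x,b)$ is strictly increasing on $(x,b_k^\star)$ for each fixed $x<b_k^\star$, which forces $g^{(k)}(x,b_k^\star)>g^{(k)}(x,x)=\phi^{(k)}(x)$ directly. You are right that, at the level of generality of Proposition~\ref{x_k}, this is not fully argued; but the route you sketch is not the one the paper takes, nor obviously the cleanest.
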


\begin{remark}\label{rmk:plus}
Proposition \ref{x_k} implies that each  value function can  be determined by first optimizing the expected reward over all candidate thresholds  \emph{that are above $\log K$}, followed by verifying the supermartingale property. Consequently, as far as the optimal thresholds are concerned, we can effectively remove the $+$ sign in the payoff function $\phi(x)$. From  Proposition  \ref{x_k} we conclude that $[x_k^\star,\infty)$ is the only optimal stopping region, and each  optimal exercise threshold $b_k^\star=x^\star_k$ exists and is bounded above by $x^\star_1$. 
\end{remark}

We now show that if $[x_{k}^\star,\infty)$ is the only connected optimal stopping region  for all $1\le k\le l$ for some $l\in\setN$, then  $(x_k^\star)_{1\le k\le l}$ is
non-increasing in $k$. To show this, we first prove that the process
\[V^{(k-1)}_t:=\mathrm{e}^{-\alpha t}\big(v^{(k-1)}(X_t)-v^{(k-2)}(X_t)\big), \quad  t \ge 0,\] is a supermartingale for any fixed $k\in\{2,\cdots,l+1\}$.

\begin{prop} \label{prop_mono} Suppose that $x_1^\star\in(\log K,\infty)$ and that $[x_{k}^\star,\infty)$ is the only connected optimal stopping region  for all $1\le k\le l$ for some $l\in\{2,\cdots,n-1\}$, then the sequence of optimal exercise thresholds  $(x_k^\star)_{1 \leq k \leq l+1}$ is  non-increasing in $k$, i.e.
\[\log K<x_{l+1}^\star\le x_{l}^\star\le \cdots\le x_1^\star,\]
and the process $(V^{(k-1)}_t)_{t \geq 0}$ is a $(\p_x, \mathbb{F})$-supermartingale of class ($D$) \citep[Chap. 3]{ProtterBook} for any $2\le k\le l+1$.
\end{prop}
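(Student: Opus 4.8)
The plan is to prove the two assertions in tandem, using an inner induction over $k$ from $2$ to $l+1$ that simultaneously establishes the supermartingale (and class $(D)$) property of $V^{(k-1)}$ and the inequality $x_k^\star \le x_{k-1}^\star$; the chain of inequalities then follows by concatenation, and the strict bound $x_{l+1}^\star > \log K$ comes from part (ii) of Proposition \ref{x_k}. First I would fix $k \in \{2,\dots,l+1\}$ and observe that, by hypothesis, $[x_{k-1}^\star,\infty)$ is the only optimal stopping region for the $(k-1)$-problem, so $v^{(k-1)}(x) = g^{(k-1)}(x,x_{k-1}^\star)$ everywhere. To apply Proposition \ref{x_k} to index $k$ I must first check its standing hypothesis $v^{(k-1)}(x) > v^{(k-2)}(x)$ for all $x$; for $k=2$ this is immediate since $v^{(1)} \ge \phi \gneq 0 = v^{(0)}$, and for $k \ge 3$ it is exactly conclusion (i) of Proposition \ref{x_k} applied at the previous step, so it is available along the induction. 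Proposition \ref{x_k}(i) then gives $v^{(k)} > v^{(k-1)}$, and combined with the definition of $x_k^\star$, $x_{k-1}^\star$ via \eqref{eq:def_x_k} and the structure $\phi^{(k)} = \phi + \ex_\cdot[\mathrm{e}^{-\alpha\delta}v^{(k-1)}(X_\delta)]$ this will yield $x_k^\star \le x_{k-1}^\star$: intuitively, more exercise rights make waiting relatively less costly, so the exercise region can only shrink. The clean way to get this is to show $\phi^{(k)} - v^{(k)} \le \phi^{(k-1)} - v^{(k-1)}$ pointwise — i.e. the ``gap to the obstacle'' is smaller with more rights — which forces the set $\{v^{(k)} = \phi^{(k)}\}$ to contain $\{v^{(k-1)} = \phi^{(k-1)}\} \supseteq [x_{k-1}^\star,\infty)$, hence $x_k^\star \le x_{k-1}^\star$; I would obtain that gap inequality from the representation of $v^{(k)}$ as the value of the single-stopping problem with obstacle $\phi^{(k)}$ together with the fact that $\phi^{(k)} - \phi^{(k-1)} = \ex_\cdot[\mathrm{e}^{-\alpha\delta}(v^{(k-1)}-v^{(k-2)})(X_\delta)]$ is itself ``harmonic-dominated'' using the supermartingale property of $V^{(k-2)}$ from the inductive hypothesis (for $k=2$, $V^{(0)} \equiv 0$ trivially works).

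Next, for the supermartingale claim, write $V^{(k-1)}_t = \mathrm{e}^{-\alpha t}(v^{(k-1)}(X_t) - v^{(k-2)}(X_t))$ and use $v^{(k-1)}(x) = g^{(k-1)}(x,x_{k-1}^\star)$ and $v^{(k-2)}(x) = g^{(k-2)}(x,x_{k-2}^\star)$. Since $x_{k-1}^\star \le x_{k-2}^\star$ (either trivially for $k=2$, or from the preceding step of the induction), the up-crossing $\tau^+_{x_{k-1}^\star}$ occurs no later than $\tau^+_{x_{k-2}^\star}$, and on $[x_{k-1}^\star,\infty)$ we have $v^{(k-1)} = \phi^{(k-1)}$ and $v^{(k-2)} = \phi^{(k-2)}$ by Lemma \ref{lem:x_k<x_1} and definition of $x_{k-2}^\star$. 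The difference $v^{(k-1)} - v^{(k-2)}$ therefore admits the representation as the value of a single-stopping problem with obstacle $\phi^{(k-1)} - \phi^{(k-2)} = \ex_\cdot[\mathrm{e}^{-\alpha\delta}V^{(k-2)}_\delta \,\mathrm{e}^{\alpha\delta}\cdots]$ — more precisely, as the value $\sup_\tau \ex_x[\mathrm{e}^{-\alpha\tau}(\phi^{(k-1)}-\phi^{(k-2)})(X_\tau)]$, stopped at $\tau^+_{x_{k-1}^\star}$ — and the value function of any optimal stopping problem composed with $X$ and discounted is a supermartingale (it dominates the discounted obstacle and, by the dynamic programming principle / optional sampling applied to the optimal-stopping value, is a supermartingale of the Snell-envelope type). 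I would make this rigorous by the standard argument: $V^{(k-1)}_t = \esssup_{\tau \ge t} \ex_x[\,\cdot\mid\F_t]$ up to the discount factor, hence a supermartingale by the tower property, exactly mirroring the logic used for $v^{(1)}$ earlier in the section.

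For class $(D)$: the family $\{V^{(k-1)}_\tau : \tau \text{ finite stopping time}\}$ is dominated in absolute value by $\mathrm{e}^{-\alpha\tau}v^{(k-1)}(X_\tau) + \mathrm{e}^{-\alpha\tau}v^{(k-2)}(X_\tau)$, and by Theorem \ref{prop:regularity}(iii) each of the collections $\S^{(k-1)}$ and $\S^{(k-2)}$ is bounded in $\mathrm{L}^\varrho(\diff\pr_x)$ with $\varrho > 1$; $\mathrm{L}^\varrho$-boundedness for $\varrho>1$ implies uniform integrability, so the family is uniformly integrable, i.e. of class $(D)$. The main obstacle, I expect, is the pointwise gap inequality $\phi^{(k)} - v^{(k)} \le \phi^{(k-1)} - v^{(k-1)}$ (equivalently the monotonicity of $(x_k^\star)$): the obstacles $\phi^{(k)}$ themselves change with $k$, so one cannot merely compare value functions for a fixed obstacle; the argument has to exploit that the increment of the obstacle, $\phi^{(k)} - \phi^{(k-1)} = \ex_\cdot[\mathrm{e}^{-\alpha\delta}(v^{(k-1)} - v^{(k-2)})(X_\delta)]$, is controlled by the already-established supermartingale $V^{(k-2)}$, which is precisely why the induction must carry the supermartingale statement and the threshold-monotonicity statement together rather than proving one after the other. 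Everything else — optional sampling, the $\mathrm{L}^\varrho \Rightarrow$ UI step, and the base case $k=2$ with $V^{(0)}\equiv 0$ — is routine given the results already in hand.
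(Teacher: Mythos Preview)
Your class $(D)$ argument and the overall inductive architecture are sound and match the paper. The problem is the heart of your supermartingale step: you assert that $v^{(k-1)}-v^{(k-2)}$ ``admits the representation as the value of a single-stopping problem with obstacle $\phi^{(k-1)}-\phi^{(k-2)}$'' and then invoke the Snell-envelope supermartingale property. This representation is not justified and is in general false. A difference of two Snell envelopes is not a Snell envelope; in fact the subadditivity of the supremum only yields
\[
v^{(k-1)}(x)-v^{(k-2)}(x)\ \le\ \sup_{\tau\in\setT}\ex_x\!\left[\mte^{-\alpha\tau}\big(\phi^{(k-1)}-\phi^{(k-2)}\big)(X_\tau)\right],
\]
with no reason for equality, since the optimal stopping times $\tau^+_{x_{k-1}^\star}$ and $\tau^+_{x_{k-2}^\star}$ differ. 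Consequently your essential-supremum formula $V^{(k-1)}_t=\esssup_{\tau\ge t}\ex_x[\,\cdot\mid\F_t]$ has no basis, and the supermartingale conclusion does not follow.

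The paper's argument avoids this by a piecewise decomposition rather than a global Snell-envelope identification. Writing $h=k-1$, it uses (i) on $(-\infty,x_h^\star)$ both $\mte^{-\alpha t}v^{(h)}(X_t)$ and $\mte^{-\alpha t}v^{(h-1)}(X_t)$ are \emph{martingales} up to $\tau^+_{x_h^\star}$, so $V^{(h)}$ stopped there is a martingale; (ii) on the strip $[x_h^\star,x_{h-1}^\star)$ the first is a supermartingale while the second is still a martingale, so the stopped difference is a supermartingale; (iii) on $[x_{h-1}^\star,\infty)$ one has the identity $v^{(h)}-v^{(h-1)}=\phi^{(h)}-\phi^{(h-1)}=\ex_\cdot[\mte^{-\alpha\delta}(v^{(h-1)}-v^{(h-2)})(X_\delta)]$, and the \emph{inductive} supermartingale property of $V^{(h-1)}$ then gives $\ex_x[V^{(h)}_{t\wedge\tau^-_{x_{h-1}^\star}}]\le V^{(h)}_0$. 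These three pieces are glued by the strong Markov property. This is the missing idea.

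Two smaller issues: your stated ``gap inequality'' is written with the wrong sign (you need $v^{(k)}-\phi^{(k)}\le v^{(k-1)}-\phi^{(k-1)}$ on $[x_{k-1}^\star,\infty)$ to get the containment of stopping regions, not the reverse), and the obstacle increment $\phi^{(k)}-\phi^{(k-1)}=\ex_\cdot[\mte^{-\alpha\delta}(v^{(k-1)}-v^{(k-2)})(X_\delta)]$ involves $V^{(k-1)}$, not $V^{(k-2)}$. The latter means the monotonicity of $(x_k^\star)$ must be derived \emph{after} the supermartingale property of $V^{(k-1)}$ within each inductive step, exactly as the paper does via the chain $v^{(k)}(x)\le v^{(k-1)}(x)+\sup_\tau\ex_x[V^{(k-1)}_{\tau+\delta}]\le v^{(k-1)}(x)+\ex_x[V^{(k-1)}_\delta]=\phi^{(k)}(x)$ for $x\ge x_{k-1}^\star$.
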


Proposition \ref{prop_mono} tells us, if threshold type strategies are optimal for problem \eqref{single1} with $k$ exercise opportunities for all $1\le k\le l$, then the optimal exercise thresholds $(x_k^\star)_{1\le k\le l}$ are non-increasing in $k$. Hence, even if threshold type strategies are not optimal for problem \eqref{single1} with $l+1$ exercise opportunities, the optimal stopping region should contain $[x_l^\star,\infty)$. Moreover, for any number of remaining exercise opportunities, it is always optimal to exercise above  the strike price.

\section{Analytical Results}\label{sec:single}
In this section, we assume that $X$ is either a spectrally negative \lev process that is not the negative of a subordinator, or a \lev process with  an arbitrary negative jump distribution and a positive phase-type  jump distribution \cite{Asmussen2004}:
\begin{align} \label{XX}
  X_t  - X_0= J_t + \sum_{n=1}^{N_t} Z_n, \quad 0\le t <\infty.
\end{align}
Here, $(J_t)_{t\ge 0}$ is a spectrally negative \lev process  with or without a Brownian motion component, $(N_t)_{t\ge 0}$ is a Poisson process with arrival rate $\rho$, and  $Z =  ( Z_n)_{n = 1,2,\cdots }$ is an i.i.d.\ sequence of phase-type-distributed random variables with representation $(d,{\bm \alpha},{\bm T})$. In addition,  $J$ and $Z$ are mutually independent. For a comprehensive study on this process and its applications in American and Russian options, we refer the reader to  \cite{Asmussen2004}.

Recall that a  distribution on $\R_+$ is of phase-type if it is the distribution of the absorption time  in a finite state continuous-time Markov chain consisting of one absorbing state and $d\in\mathbb{N}$ transient states. Thus,  any phase-type distribution can be represented by $d$, the $d\times d$ transition intensity matrix over all  transient states $\bm{T}$, and the initial distribution of the Markov chain $\bm{\alpha}$. Without loss of generality, we assume that the positive phase-type  jump distribution is minimally represented with $d$ phases.
From \cite{Asmussen2004}, this guarantees that the singularities of the Laplace exponent $\psi$ with positive real part are eigenvalues of $\bm{T}$.  Moreover, by Theorem 5b on p.58 of \cite{Widder46}, we know that $\beta_0$ defined in \eqref{def_beta_0} is the smallest positive pole of $\psi$ and  $\lim_{\beta\uparrow\beta_0}\psi(\beta)=\infty$.  Henceforth, we  impose the following technical condition. 
\begin{assump}\label{assume2} The Laplace exponent $\psi$ and the discount rate $\alpha$ satisfy either (i) $\psi(1) <\alpha$,  or  (ii) $\psi(1) = \alpha<0$ and $\psi'(1)<0$.
\end{assump}
Under these conditions,  we shall show in the lemma below that the  optimal stopping problem in  \eqref{single1} is well-posed for each $1 \leq k \leq n$ in the sense that the integrability condition  in Assumption \ref{assume1} is met.  Also, the discounted price process $(\mathrm{e}^{-\alpha t+X_t})_{t\ge0}$ under $\pr$ has to be  a supermartingle, and not a martingale when $\alpha\ge 0$ \citep[Theorem 1]{mordecki2002}.  In effect,   the trivial optimal strategies of perpetual waiting are excluded.

\begin{lem}\label{lem:boundedness}
Assumption \ref{assume2} implies Assumption \ref{assume1}.
\end{lem}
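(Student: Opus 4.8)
The plan is to dominate the quantity in \eqref{eq:ass}, in both parts of Assumption \ref{assume2}, by the running supremum of an exponential $\pr_x$-supermartingale that decays geometrically, and then to invoke the elementary fact that, although the running supremum $Z^{*}:=\sup_{t\ge0}Z_t$ of a nonnegative martingale $Z$ need not be integrable, the maximal inequality for nonnegative supermartingales gives $\pr_x(Z^{*}\ge\lambda)\le Z_0/\lambda$ and hence, by a direct tail integration, $\ex_x[(Z^{*})^{q}]<\infty$ for every $q\in(0,1)$. Since $\phi(x)\le\mathrm{e}^x$ and, under $\pr_x$, $X$ has the law of $x+X$ under $\pr$, it suffices to produce a single $\varrho>1$ with $\ex_x[(\sup_{t\ge0}\mathrm{e}^{-\alpha t}\phi(X_t))^{\varrho}]<\infty$ for every $x\in\R$. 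Throughout I use that $\psi$ is finite, convex and continuously differentiable on $(0,\beta_0)$ with $\beta_0>1$, so that $\psi'(1)$ is well defined.

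The core is the following reduction. Suppose there is a number $\gamma\in[1,\beta_0)$ such that the \lev process $Y_t:=X_t-(\alpha/\gamma)t$, which has Laplace exponent $\psi_Y(\beta)=\psi(\beta)-(\alpha/\gamma)\beta$, satisfies $\psi_Y(\gamma)<0$ (equivalently $\psi(\gamma)<\alpha$). From $\phi(x)\le\mathrm{e}^{x}\ind_{\{x\ge\log K\}}\le K^{-(\gamma-1)}\mathrm{e}^{\gamma x}$ we obtain the pathwise bound $\mathrm{e}^{-\alpha t}\phi(X_t)\le K^{-(\gamma-1)}\mathrm{e}^{\gamma Y_t}$ for all $t\ge0$. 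Pick $\theta\in(\gamma,\beta_0)$ with $\psi_Y(\theta)<0$ (possible since $\psi_Y$ is continuous on $[0,\beta_0)$, $\psi_Y(\gamma)<0$, and $\gamma<\beta_0$), and then $\varrho\in(1,\theta/\gamma)$, a nonempty interval. The process $N_t:=\mathrm{e}^{\theta Y_t-\psi_Y(\theta)t}$ is a nonnegative $\pr_x$-martingale with $N_0=\mathrm{e}^{\theta x}$, and $\psi_Y(\theta)<0$ forces $\mathrm{e}^{\theta Y_t}\le N_t$; hence, writing $N^{*}=\sup_{t\ge0}N_t$,
\[
\Big(\sup_{t\ge0}\mathrm{e}^{-\alpha t}\phi(X_t)\Big)^{\varrho}\;\le\;K^{-\varrho(\gamma-1)}\Big(\sup_{t\ge0}\mathrm{e}^{\theta Y_t}\Big)^{\gamma\varrho/\theta}\;\le\;K^{-\varrho(\gamma-1)}\,(N^{*})^{\gamma\varrho/\theta},
\]
with exponent $\gamma\varrho/\theta\in(0,1)$. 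Applying the fact recorded above with $q=\gamma\varrho/\theta$ gives $\ex_x[(\sup_{t\ge0}\mathrm{e}^{-\alpha t}\phi(X_t))^{\varrho}]<\infty$, i.e.\ \eqref{eq:ass}.

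It remains to find such a $\gamma$ in each case. Under Assumption \ref{assume2}(i), $\psi(1)<\alpha$ yields $\psi_Y(1)=\psi(1)-\alpha<0$, so $\gamma=1$ works (here $\phi(x)\le\mathrm{e}^{x}$ and $Y_t=X_t-\alpha t$). Under Assumption \ref{assume2}(ii), $\psi(1)=\alpha$ but $\psi'(1)<0$; since $\psi'$ is continuous there is $\eta\in(0,\beta_0-1)$ with $\psi'<0$ on $[1,1+\eta]$, whence $\psi(1+\eta)=\psi(1)+\int_{1}^{1+\eta}\psi'(u)\,\mathrm{d}u<\psi(1)=\alpha$, so $\gamma:=1+\eta>1$ satisfies $\psi(\gamma)<\alpha$. (Note $\psi(1)\le\alpha$ in both cases, so $\mathrm{e}^{-\alpha t+X_t}$ is a $\pr$-supermartingale, strictly so under (i); in case (ii) it is a genuine martingale, which is what makes that case delicate.)

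The main obstacle is precisely case (ii): there the running maximum of the martingale $\mathrm{e}^{-\alpha t+X_t}$ has only a $1/\lambda$-type tail, and no $L^{\varrho}$ control with $\varrho>1$ can be extracted from the crude estimate $\phi(x)\le\mathrm{e}^{x}$ alone. The resolution uses both hypotheses of (ii) simultaneously: the vanishing of $\phi$ on $\{x<\log K\}$ allows inflating the exponent from $1$ to some $\gamma>1$ at the harmless price of the constant $K^{-(\gamma-1)}$, and $\psi'(1)<0$ is exactly what guarantees that this inflation produces the genuine geometric decay $\mathrm{e}^{(\psi(\gamma)-\alpha)t}$, which brings case (ii) into the same martingale-plus-$L^{q}$ framework as case (i). The remaining points — the identity $\ex[\mathrm{e}^{\theta Y_1}]=\mathrm{e}^{\psi_Y(\theta)}$ for $\theta<\beta_0$, the martingale property and starting value of $N$ under $\pr_x$, and the passage from $\sup_{t\le T}$ to $\sup_{t\ge0}$ by monotone convergence — are routine.
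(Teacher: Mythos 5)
Your proof is correct, and it takes a genuinely different route from the paper's. The paper first uses Young's inequality for the logarithm to bound the tail $\pr_x(\sup_t\mathrm{e}^{-\alpha t}\phi(X_t)>z)$ by the tail of $\sup_t(X_t-(\alpha/p)t)$, and then invokes the Cram\'er--Lundberg asymptotic (Proposition 1.8 on p.~259 of Asmussen's book) to obtain a geometric decay $\mathrm{e}^{-\tilde\rho_0 u}$, after which it checks via a Jensen argument that the Lundberg exponent $\tilde\rho_0$ exceeds $p$ for $p$ close to $1$, yielding $\varrho>1$. You instead exploit the vanishing of $\phi$ below $\log K$ to inflate the exponent pathwise ($\phi(x)\le K^{-(\gamma-1)}\mathrm{e}^{\gamma x}$), reduce to the exponential martingale $N_t=\mathrm{e}^{\theta Y_t-\psi_Y(\theta)t}$, and use only Ville's maximal inequality for nonnegative supermartingales; the key reparametrization $(\sup_t\mathrm{e}^{\gamma Y_t})^\varrho=(\sup_t\mathrm{e}^{\theta Y_t})^{\gamma\varrho/\theta}\le(N^*)^{\gamma\varrho/\theta}$ then converts the required $L^\varrho$ bound with $\varrho>1$ into an $L^q$ bound with $q<1$, for which the crude $1/\lambda$ tail suffices. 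Your approach is more self-contained: it avoids the Cram\'er--Lundberg estimate (and the implicit verification of its hypotheses, such as the drift condition for $\sup_t Y_t<\infty$) and uses only convexity/continuity of $\psi$ on $(0,\beta_0)$, the maximal inequality, and elementary tail integration. The paper's approach yields the sharper information that the tail is in fact exponentially small, but for the purposes of Lemma \ref{lem:boundedness} this extra precision is unnecessary. Both proofs identify the same essential mechanism in case (ii) --- $\psi'(1)<0$ pushes $\psi$ strictly below $\alpha$ slightly to the right of $1$, which is exactly the room needed once one accounts for the strike $K$ --- and your exponent-inflation step plays the same structural role as the paper's Young-inequality split.
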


We provide a detailed proof in Appendix \ref{Appen_1}. Next,  for an $\alpha\in\mathbb{R}$, we define $\Phi(\alpha)$ to be the largest  positive
root of $\psi(\beta)=\alpha$, which is a real number less than $\beta_0$, if it exists. Notice that  Assumption \ref{assume2} and $\beta_0>1$ imply that $\Phi(\alpha)$ exists and  $\Phi(\alpha)> 1$ and $\psi'(\Phi(\alpha))\ge0$.
 We denote the finite set of roots with positive real parts by

\be
{\mathcal{I}_\alpha=}
\{\,\rho_{i,\alpha}\,:\,\psi(\rho_{i,\alpha})=\alpha,\,\Re\rho_{i,\alpha}\ge\Phi(\alpha)\}_{1\le i\le |\mathcal{I}_\alpha|}, \label{I2}
\ee
where,  for the sake of mathematical convenience, we assume that the roots are all distinct for a given $\alpha\in\mathbb{R}$. It follows that $\rho_{1,\alpha}=\Phi(\alpha)<\beta_0$ and $\Re\rho_{i,\alpha}\ge\beta_0$ for all $i\ge 2$. Moreover, we remark  that $|\mathcal{I}_\alpha|=d$ or $d+1$ according to whether $-J$ is a subordinator or not, respectively \citep[Lemma 1]{Asmussen2004}. We label $\rho_{i,\alpha}$'s in such a way that $(\Re\rho_{i,\alpha},\,\Im\rho_{i,\alpha})$ is in ascending lexicographic order (here $\Im z$ is the imaginary part of any complex number $z$). Similarly, we define a second set of roots with positive real part, labeled in the same way as elements of $\mathcal{I}_\alpha$:
\be \label{J}\mathcal{J}:=\{\,\eta_j\,:\,\frac{1}{\psi(\eta_j)}=0,~ \Re\eta_j>0\,\}_{1\le j\le |\mathcal{J}|}, \ee
where multiple roots are counted  individually.
Notice that we have $\beta_0=\eta_1\in\mathcal{J}$ and $|\mathcal{J}|=d$.

\begin{remark} \label{remark_special_case_spec_negative}
If $X$ is a spectrally negative L\'evy process, i.e. $|\mathcal{J}|=0$, then, by our assumption that $-X$ is not a subordinator, we have  $\mathcal{I}_\alpha=\{\Phi(\alpha)\}$ and $|\mathcal{I}_\alpha|=1.$
\end{remark}

Fix $\alpha \geq 0$. Let  $\ep$ be an exponential random variable, with rate parameter $\alpha$, that is  independent of $X$. We follow the convention that $\mathbf{e}_0 =\infty$, $\pr$-a.s.  Then, it is known from Lemma 1 of \cite{Asmussen2004} that the Laplace transform of $\overline{X}_{\ep}:=\sup_{0 \leq t\le\ep}X_t$ is given by
\be\psi_\alpha^+(\beta):=
\ex [\mathrm{e}^{-\beta\overline{X}_{\ep}}]=\prod_{i=1}^{|\mathcal{I}_\alpha|}\frac{\rho_{i,\alpha}}{\rho_{i,\alpha}+\beta}\prod_{j=1}^{|\mathcal{J}|}\bigg(1+\frac{\beta}{\eta_j}\bigg)=\psi_\alpha^+(\infty)+\sum_{i=1}^{|\mathcal{I}_\alpha|}A_i\frac{\rho_{i,\alpha}}{\rho_{i,\alpha}+\beta},\,\,\,\forall \beta\ge 0,\label{Xmax_mgf}
\ee
with $\psi_\alpha^+(\infty)=\lim_{\beta\to\infty}\psi_\alpha^+(\beta)$, and the partial fraction coefficients:
\be \label{eq:Ai}A_i:=\prod_{\substack{j=1\\ j\neq i}}^{|\mathcal{I}_\alpha|}\frac{\rho_{j,\alpha}}{\rho_{j,\alpha}-\rho_{i,\alpha}}\prod_{j=1}^{|\mathcal{J}|}\frac{\eta_j-\rho_{i,\alpha}}{\eta_j}.\ee
As a result, the distribution of $\overline{X}_{\ep}$ is given by:
\be\label{Xmax_dist}
\pr(\overline{X}_{\ep}\in \diff x)=\sum_{i=1}^{|\mathcal{I}_\alpha|}A_i\rho_{i,\alpha}\mathrm{e}^{-\rho_{i,\alpha}x}\diff x,\,\,\,\forall x>0.
\ee

\begin{remark}As explained in \citep[Remark 4]{Asmussen2004}, the assumption of distinct roots is made for convenience. When there are multiple roots, the corresponding distribution $\pr(\overline{X}_{\ep}\in \diff x)$ will admit a form similar    to that  in \eqref{Xmax_dist} with different constant coefficients.   Moreover, the case with multiple roots only occurs for at most countably many values of $\alpha$ over $\mathbb{R}$. In other words, if one arbitrarily sets the values of $\gamma$ and $r$,  the probability of having multiple roots as a result is zero. 
\end{remark}

In the next subsection, we derive the value function and the optimal exercise threshold for the single stopping problem for any discount rate $\alpha$ satisfying Assumption \ref{assume2}.

\subsection{Optimal  Single  Stopping Problem}\label{subse:single}
If $\psi(1)<\alpha$ and $\alpha\ge 0$, then it is known from Theorem 1 of \cite{mordecki2002} that the optimal stopping time for an American call with strike price $K>0$ is given by
\be
\tau_1^\star=\inf\{\,t\ge0\,:\,X_t\ge\,\log (K\psi_\alpha^+(-1))\,\},\nn
\ee
and the value of the American call option is given by: \begin{align}
\ex_x\left[\mathrm{e}^{-\alpha\tau_1^\star}\phi(X_{\tau_1^\star})\ind_{\{\tau_1^\star<\infty\}}\right]
&=K\sum_{i=1}^{|\mathcal{I}_\alpha|} [K\psi_\alpha^+(-1)]^{-\rho_{i,\alpha}}\mathrm{e}^{\rho_{i,\alpha}x}\frac{A_{i}}{\rho_{i,\alpha}-1},\label{price}
\end{align}
where $A_{i}$'s are defined in \eqref{eq:Ai}.   The analogous expectation in  \eqref{price} for the case with  $\alpha<0$ can be computed   using the sets $\mathcal{I}_\alpha$ in \eqref{I2} and $\mathcal{J}$ in \eqref{J}, which we shall prove in Proposition \ref{value_at_passage} below.

In order  to address the case with a negative discount rate  ($\alpha<0$), one of our main steps is to apply a change of measures.   For $\kappa \in[ 0,\beta_0)$,
 we define  a new probability measure  $\p_x^\kappa\sim \p_x$ by
\begin{align}
\left. \frac {\diff \p_x^\kappa} {\diff \p_x}\right|_{\mathcal{F}_t} = \exp(\kappa (X_t-x) - \psi(\kappa) t), \quad t \geq 0. \label{change_of_measure}
\end{align}
Then, for $\beta > -\kappa$, the Laplace exponent of $X$  is given by  \citep[Theorem 3.9]{Kyprianou2006}
  \begin{align}
\psi_\kappa(\beta) :=\Big( \kappa \sigma^2 -c  + \int_{(-1,1)} z (\mathrm{e}^{\kappa z}-1) \Pi(\diff z)\Big) \beta
+\frac{1}{2}\sigma^2 \beta^2 +\int_{{(-\infty,\infty)}} (\mathrm{e}^{\beta z}-1 -\beta z 1_{\{ |z|<1\}}) \mathrm{e}^{\kappa z}\,\Pi(\diff z).\nn
\end{align}
Under the new probability measure $\p^\kappa$, the process is also a L\'evy process with a negative jump distribution and a positive phase-type  distribution, with a new scaled  \lev measure  $\Pi_\kappa (\diff u) := \mathrm{e}^{\kappa u}\,\Pi(\diff u)$.

\begin{prop}\label{lapfirstpass}  We extend the definition \eqref{Xmax_mgf} for $\alpha \leq 0$ and define the function $\psi_\alpha^+(\beta)$ and partial fraction coefficients $(A_i\equiv A(\rho_{i,\alpha}))_{1\le i\le|\mathcal{I}_\alpha|}$ using
\be\label{eq:psi-alpha}
\psi_\alpha^+(\beta):=\prod_{i=1}^{|\mathcal{I}_\alpha|}\frac{{\rho}_{i,\alpha}}{{\rho}_{i,\alpha}+\beta}\prod_{j=1}^{|\mathcal{J}|}\bigg(1+\frac{\beta}{{\eta}_j}\bigg) \equiv\psi_\alpha^+(\infty)+\sum_{i=1}^{|\mathcal{I}_\alpha|}A_{i}\frac{{\rho}_{i,\alpha}}{{\rho}_{i,\alpha}+\beta},\,\,\,\forall \beta\in\mathbb{C}.
\ee
Then for any fixed $b>x$ and $\beta\ge0$, we have
\be
\ex_x\left[\mte^{-\alpha\tau_b^+-\beta(X_{\tau_b^+}-b)}\ind_{\{\tau_b^+<\infty\}}\right]=\frac{1}{\psi_\alpha^+(\beta)}\sum_{i=1}^{|\mathcal{I}_\alpha|}A_i\frac{\rho_{i,\alpha}}{\rho_{i,\alpha}+\beta}\mte^{-\rho_{i,\alpha}(b-x)}.\label{exx}
\ee
\end{prop}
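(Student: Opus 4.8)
The plan is to establish \eqref{exx} first for $\alpha\ge 0$ by fluctuation theory and then to transfer it to $\alpha<0$ through the Esscher change of measure \eqref{change_of_measure}. For $\alpha\ge0$, $\psi_\alpha^+$ is the genuine Laplace transform \eqref{Xmax_mgf} of $\overline{X}_{\ep}$, whose law is \eqref{Xmax_dist}. Since $\ep$ is independent of $\mathbb{F}$ and $\mathbb{P}(\ep>\tau_b^+\mid\F_\infty)=\mte^{-\alpha\tau_b^+}$ on $\{\tau_b^+<\infty\}$, the left-hand side of \eqref{exx} equals $\ex_x[\mte^{-\beta(X_{\tau_b^+}-b)}\ind_{\{\tau_b^+\le\ep\}}]$. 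On $\{\tau_b^+\le\ep\}$ the strong Markov property at $\tau_b^+$ and the lack of memory of $\ep$ show that $\overline{X}_{\ep}-X_{\tau_b^+}$ is distributed as an independent copy of $\overline{X}_{\ep}$ and is independent of $(\F_{\tau_b^+},\ind_{\{\tau_b^+\le\ep\}})$; factoring out the exponential gives
\[\ex_x\big[\mte^{-\beta(\overline{X}_{\ep}-b)}\ind_{\{\tau_b^+\le\ep\}}\big]=\psi_\alpha^+(\beta)\,\ex_x\big[\mte^{-\beta(X_{\tau_b^+}-b)}\ind_{\{\tau_b^+\le\ep\}}\big].\]
Since $\{\tau_b^+\le\ep\}=\{\overline{X}_{\ep}\ge b\}$ and $b>x$ keeps us off the atom of $\overline{X}_{\ep}$, spatial homogeneity together with \eqref{Xmax_dist} yields $\ex_x[\mte^{-\beta(\overline{X}_{\ep}-b)}\ind_{\{\overline{X}_{\ep}\ge b\}}]=\sum_i A_i\frac{\rho_{i,\alpha}}{\rho_{i,\alpha}+\beta}\mte^{-\rho_{i,\alpha}(b-x)}$, and dividing by $\psi_\alpha^+(\beta)$ proves \eqref{exx} for $\alpha\ge0$.

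For $\alpha<0$, the first move is to choose, using Assumption \ref{assume2}, a tilting parameter $\kappa\in[1,\Phi(\alpha))\subseteq[0,\beta_0)$ with $\psi(\kappa)<\alpha$ — one may take $\kappa=1$ in case (i), and any $\kappa\in(1,\Phi(\alpha))$ in case (ii), using the convexity of $\psi$ together with $\psi(1)=\alpha$ and $\psi'(1)<0$. Under $\p^\kappa$ the process stays in the class of Section \ref{sec:single} (exponential tilting preserves the phase-type structure of the positive jumps, cf. \cite{Asmussen2004}), with Laplace exponent $\psi_\kappa(\cdot)=\psi(\cdot+\kappa)-\psi(\kappa)$; hence its roots and poles with positive real part are the shifts $\{\rho_{i,\alpha}-\kappa\}$ and $\{\eta_j-\kappa\}$, and the effective discount rate $\tilde\alpha:=\alpha-\psi(\kappa)$ is strictly positive. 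Applying \eqref{change_of_measure} on $\F_{\tau_b^+}$ and then the $\alpha\ge0$ case under $\p^\kappa$ with discount $\tilde\alpha$ and exponent $\tilde\beta:=\beta+\kappa\ge0$ gives
\[\ex_x\big[\mte^{-\alpha\tau_b^+-\beta(X_{\tau_b^+}-b)}\ind_{\{\tau_b^+<\infty\}}\big]=\mte^{-\kappa(b-x)}\,\ex_x^\kappa\big[\mte^{-\tilde\alpha\tau_b^+-\tilde\beta(X_{\tau_b^+}-b)}\ind_{\{\tau_b^+<\infty\}}\big]=\frac{\mte^{-\kappa(b-x)}}{\psi_{\tilde\alpha,\kappa}^{+}(\tilde\beta)}\sum_i A_i^\kappa\frac{\rho_{i,\alpha}-\kappa}{\rho_{i,\alpha}+\beta}\mte^{-(\rho_{i,\alpha}-\kappa)(b-x)},\]
where $\psi_{\tilde\alpha,\kappa}^{+}$ and $A_i^\kappa$ are the quantities \eqref{eq:psi-alpha}--\eqref{eq:Ai} formed from the shifted roots and poles. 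It then remains to verify the purely algebraic identity that this coincides with the right-hand side of \eqref{exx}: the exponential factors collapse to $\mte^{-\rho_{i,\alpha}(b-x)}$, and a direct computation with \eqref{eq:Ai}--\eqref{eq:psi-alpha} shows that, for every $i$, $\frac{A_i^\kappa(\rho_{i,\alpha}-\kappa)}{A_i\rho_{i,\alpha}}=\frac{\psi_{\tilde\alpha,\kappa}^{+}(\tilde\beta)}{\psi_\alpha^+(\beta)}=\prod_l\frac{\rho_{l,\alpha}-\kappa}{\rho_{l,\alpha}}\prod_j\frac{\eta_j}{\eta_j-\kappa}$, which closes the argument.

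The analytic content of the $\alpha\ge0$ step is routine fluctuation theory, so the main obstacle will be the bookkeeping in the change-of-measure step. One has to produce an admissible $\kappa<\beta_0$ with $\tilde\alpha>0$ — the natural guess $\kappa=\Phi(\alpha)$ gives $\tilde\alpha=0$ and a possibly transient tilted process, for which \eqref{Xmax_dist} fails, so one must stay strictly below $\Phi(\alpha)$ — then check that the roots and poles transform by a clean shift with multiplicities, membership in the right half-plane, and lexicographic labelling all preserved, and finally confirm that the shifted coefficients $A_i^\kappa$ recombine with $\psi_{\tilde\alpha,\kappa}^{+}$ and the prefactor $\mte^{-\kappa(b-x)}$ to reproduce $A_i/\psi_\alpha^+(\beta)$ and $\mte^{-\rho_{i,\alpha}(b-x)}$ term by term.
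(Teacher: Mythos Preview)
Your proof is correct and takes a genuinely different route from the paper's. Both arguments handle the troublesome case $\alpha<0$ by Esscher-tilting to a measure under which the effective discount is nonnegative, but the choice of tilting parameter differs, and that difference drives everything else.

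The paper tilts by $\kappa=\Phi(\alpha)$, so the effective discount becomes exactly zero. It then invokes the Alili--Kyprianou representation under $\p^{\Phi(\alpha)}$ with an auxiliary exponential clock $\mathbf{e}_q$, computes the law of $\overline X_{\mathbf{e}_q}$ via the shifted roots $\tilde\rho_{i,q}$, and sends $q\downarrow 0$. This limiting step is delicate because $\tilde\rho_{1,q}\to 0$, producing a degenerate root that has to be tracked separately (equations (A.17)--(A.19)), and the final answer emerges only after a page of bookkeeping. You instead tilt by some $\kappa\in[1,\Phi(\alpha))$ with $\psi(\kappa)<\alpha$, so the effective discount $\tilde\alpha=\alpha-\psi(\kappa)$ is strictly positive. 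This keeps you squarely inside the classical $\alpha>0$ Wiener--Hopf/Mordecki regime under $\p^\kappa$, avoids any limit, and reduces the remaining work to a clean algebraic identity relating $(A_i^\kappa,\psi_{\tilde\alpha,\kappa}^+)$ to $(A_i,\psi_\alpha^+)$, which you verify correctly. Your route is shorter and more transparent for this proposition taken in isolation.

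What the paper's approach buys, however, is the intermediate representation
\[
g^{(1)}(x,b)=\lim_{q\downarrow 0}\frac{\exph[\mte^{-\Phi(\alpha)\xq}(-u^{(1)}(x+\xq))]}{\exph[\mte^{-\Phi(\alpha)\xq}]},
\]
which is reused verbatim as the induction hypothesis in Proposition~\ref{prop:verify} to prove the supermartingale property of $(\mte^{-\alpha t}g^{(k)}(X_t,b_k^\star))_{t\ge0}$ for general $k$. Your tilting parameter $\kappa<\Phi(\alpha)$ would not reproduce that particular formula, so if you were writing the whole paper you would either need a separate derivation of the $\p^{\Phi(\alpha)}$-representation, or an alternative supermartingale argument downstream. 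For Proposition~\ref{lapfirstpass} alone, though, your proof stands.

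Two minor remarks. First, your $\alpha\ge 0$ step via the exponential clock $\ep$ implicitly assumes $\overline X_{\ep}<\infty$ a.s.; for $\alpha=0$ this need not hold, but your change-of-measure argument covers $\alpha=0$ as well (take $\kappa=1$ in case (i)), so nothing is lost. Second, the application of \eqref{change_of_measure} on $\F_{\tau_b^+}$ with a possibly infinite $\tau_b^+$ should, strictly speaking, pass through $\{\tau_b^+\le t\}$ and monotone convergence, exactly as the paper does in (A.12)--(A.14); this is routine and your integrands are nonnegative, so it goes through.
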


Using Proposition \ref{lapfirstpass}, we can compute the expected payoff of any threshold type strategy $\tau_b^+$. By Theorem 5b on p.58 of \cite{Widder46}, we can extend $\beta$  on  both sides of \eqref{exx} to a complex number as long as $\Re\beta>-\rho_{1,\alpha}$. In particular, by setting $\beta=-1,0>-\rho_{1,\alpha}$ in \eqref{exx}, we obtain the following result.

\begin{cor}\label{value_at_passage}
For all $b\ge \log K$ {and $b>x$} $($and hence $\phi(X_{\tau_b^+}) = \mathrm{e}^{X_{\tau_b^+}}-K$ on $\{\tau_b^+ < \infty \})$, we have
\be\label{first passage}
{g^{(1)}(x,b)}=\ex_x \left[\mathrm{e}^{-\alpha\tau_b^+}\phi(X_{\tau_b^+})\ind_{\{\tau_b^+<\infty\}}\right]=\frac{1}{\psi_\alpha^+(-1)}\cdot\sum_{i=1}^{|\mathcal{I}_\alpha|} A_{i}\mathrm{e}^{-{\rho_{i,\alpha}}(b-x)}\bigg(\frac{{\rho}_{i,\alpha}}{{\rho}_{i,\alpha}-1}\mathrm{e}^{b}-K\psi_\alpha^+(-1)\bigg).
\ee

\end{cor}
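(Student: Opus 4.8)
The plan is to derive \eqref{first passage} directly from Proposition \ref{lapfirstpass} by specializing the Laplace exponent $\beta$ to two particular values and using linearity of expectation. First I would write, for $b \ge \log K$ and $b > x$, that on the event $\{\tau_b^+ < \infty\}$ we have $X_{\tau_b^+} \ge b \ge \log K$, so that $\phi(X_{\tau_b^+}) = (\mathrm{e}^{X_{\tau_b^+}} - K)^+ = \mathrm{e}^{X_{\tau_b^+}} - K$; this justifies dropping the positive part. Then I split
\[
g^{(1)}(x,b) = \ex_x\!\left[\mte^{-\alpha\tau_b^+}\mathrm{e}^{X_{\tau_b^+}}\ind_{\{\tau_b^+<\infty\}}\right] - K\,\ex_x\!\left[\mte^{-\alpha\tau_b^+}\ind_{\{\tau_b^+<\infty\}}\right].
\]
For the second term I apply \eqref{exx} with $\beta = 0$, which gives $\frac{1}{\psi_\alpha^+(0)}\sum_i A_i \mte^{-\rho_{i,\alpha}(b-x)}$; since $\psi_\alpha^+(0) = 1$ by \eqref{eq:psi-alpha}, this is simply $\sum_i A_i \mte^{-\rho_{i,\alpha}(b-x)}$. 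For the first term I write $\mathrm{e}^{X_{\tau_b^+}} = \mathrm{e}^b \cdot \mathrm{e}^{X_{\tau_b^+} - b}$ and apply \eqref{exx} with $\beta = -1$, obtaining $\mathrm{e}^b \cdot \frac{1}{\psi_\alpha^+(-1)}\sum_i A_i \frac{\rho_{i,\alpha}}{\rho_{i,\alpha}-1}\mte^{-\rho_{i,\alpha}(b-x)}$. Collecting the two pieces over a common factor $\frac{1}{\psi_\alpha^+(-1)}$ and factoring out $A_i \mte^{-\rho_{i,\alpha}(b-x)}$ in the sum yields exactly the right-hand side of \eqref{first passage}.

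The one genuine subtlety — and the step I expect to need the most care — is the analytic continuation: Proposition \ref{lapfirstpass} as stated gives \eqref{exx} only for $\beta \ge 0$, whereas I need $\beta = -1$. As the excerpt indicates, this is handled by invoking Theorem 5b on p.58 of \cite{Widder46}: both sides of \eqref{exx} extend to analytic functions of $\beta$ on the half-plane $\Re\beta > -\rho_{1,\alpha} = -\Phi(\alpha)$, and since $\Phi(\alpha) > 1$ under Assumption \ref{assume2}, the point $\beta = -1$ lies strictly inside this domain. I should check that the left-hand side is indeed finite and analytic there — the exponential moment $\ex_x[\mte^{-\alpha\tau_b^+ + X_{\tau_b^+}}\ind_{\{\tau_b^+<\infty\}}]$ is controlled because $\Phi(\alpha) > 1$ implies the overshoot has enough exponential integrability (this is where the phase-type structure, equivalently the rational form of $\psi_\alpha^+$, is used) — and that the right-hand side has no pole at $\beta = -1$, which holds precisely because $-1 > -\rho_{i,\alpha}$ for every $i$. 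Once analyticity on the common domain is established, the identity theorem lets me substitute $\beta = -1$ legitimately, and the rest is the elementary algebraic rearrangement described above.
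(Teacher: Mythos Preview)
Your proposal is correct and follows essentially the same approach as the paper: the paper's argument is precisely to invoke Proposition~\ref{lapfirstpass}, extend \eqref{exx} analytically via Theorem~5b of \cite{Widder46} to $\Re\beta>-\rho_{1,\alpha}$, and then substitute $\beta=-1$ and $\beta=0$ to obtain \eqref{first passage}. Your treatment of the splitting, the observation $\psi_\alpha^+(0)=1$, and the justification that $\Phi(\alpha)>1$ places $\beta=-1$ in the domain of analyticity are exactly what the paper does (indeed the paper states these steps in one sentence immediately before the corollary).
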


Since we already know the optimal stopping time is of threshold type when $k=1$, the analytic expression for the value function of the single stopping problem is then readily available to us by optimizing the exercise threshold $b$.
\begin{thm}\label{thm:value_single_stop}The optimal exercise threshold for the single stopping problem is given by $x_1^\star:=\log(K\psi_\alpha^+(-1))$ and that
the corresponding value function is given by
\be \label{valone}
v^{(1)}(x)=g^{(1)}(x,x_1^\star)=\left\{ \begin{array}{ll} \phi(x) , & x \geq x^\star_1, \\
K\cdot\sum_{i=1}^{|\mathcal{I}_\alpha|} [K\psi_\alpha^+(-1)]^{-\rho_{i,\alpha}}\mathrm{e}^{\rho_{i,\alpha}x}\frac{A_{i} }{\rho_{i,\alpha}-1}, & x < x^\star_1,
\end{array} \right.
\ee
where the function $g^{(1)}(\cdot,\cdot)$ is defined in \eqref{eq:gvalfunction}.
\end{thm}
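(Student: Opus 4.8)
The plan is to prove Theorem \ref{thm:value_single_stop} by reducing the single stopping problem \eqref{single1} for $k=1$ to a one-dimensional optimization over threshold strategies $\tau_b^+$, using the already-established fact (from Lemma \ref{lem_a}, the continuous-fit discussion, and Corollary \ref{value_at_passage}) that the optimal stopping time is of the form $\tau_{x_1^\star}^+$ for some finite $x_1^\star\in(\log K,\infty)$, together with the closed-form expression $g^{(1)}(x,b)$ in \eqref{first passage}. First, I would fix $x$ and differentiate $b\mapsto g^{(1)}(x,b)$ from \eqref{first passage} with respect to $b$, or more cleanly, rewrite $g^{(1)}(x,b) = \mte^{-\rho_{1,\alpha}(b-x)}h(b) + \text{(higher-order terms in } \mte^{-\rho_{i,\alpha}b}\text{)}$ and exploit the smooth-fit / principle of smooth pasting: at the optimal threshold the value function should meet $\phi^{(1)}(x)=\phi(x)$ smoothly. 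Concretely, I would impose the first-order condition by matching $g^{(1)}(x,b)$ and $\phi(x)=\mte^x-K$ at $x=b$ together with their $x$-derivatives at $x=b$; using \eqref{first passage} at $x=b$ gives
\[
g^{(1)}(b,b)=\frac{1}{\psi_\alpha^+(-1)}\sum_{i=1}^{|\mathcal{I}_\alpha|}A_i\Big(\frac{\rho_{i,\alpha}}{\rho_{i,\alpha}-1}\mte^b-K\psi_\alpha^+(-1)\Big),
\]
and the smooth-pasting condition $\partial_x g^{(1)}(x,b)|_{x=b}=\mte^b$ will pin down $b$.

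The key computational input will be two evaluations of the partial-fraction identity \eqref{eq:psi-alpha}: setting $\beta=-1$ gives $\psi_\alpha^+(-1)=\psi_\alpha^+(\infty)+\sum_i A_i\frac{\rho_{i,\alpha}}{\rho_{i,\alpha}-1}$, and setting $\beta=0$ gives $1=\psi_\alpha^+(0)=\psi_\alpha^+(\infty)+\sum_i A_i$. Subtracting these yields $\sum_i A_i\frac{1}{\rho_{i,\alpha}-1}=\psi_\alpha^+(-1)-1$ and $\sum_i A_i\frac{\rho_{i,\alpha}}{\rho_{i,\alpha}-1}=\psi_\alpha^+(-1)-1+\sum_i A_i = \psi_\alpha^+(-1)-\psi_\alpha^+(\infty)$; these telescoping relations are exactly what is needed to simplify $g^{(1)}(b,b)$ and its derivative into $\phi(b)=\mte^b-K$ when $b=x_1^\star:=\log(K\psi_\alpha^+(-1))$. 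I would then verify directly that plugging $b=\log(K\psi_\alpha^+(-1))$ into $g^{(1)}(b,b)$ collapses to $\mte^b-K$ (continuous fit) and that the $b$-derivative of $g^{(1)}(x,b)$ vanishes there for every $x<b$ (optimality of the threshold), which together with the convexity/monotonicity of $g^{(1)}(x,\cdot)$ established earlier confirms $b_1^\star=x_1^\star$.

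Having identified $x_1^\star$, the formula \eqref{valone} follows: for $x\ge x_1^\star$ we are in the stopping region so $v^{(1)}(x)=\phi(x)$; for $x<x_1^\star$ we substitute $b=x_1^\star=\log(K\psi_\alpha^+(-1))$ into \eqref{first passage}, and the factor $\big(\frac{\rho_{i,\alpha}}{\rho_{i,\alpha}-1}\mte^{x_1^\star}-K\psi_\alpha^+(-1)\big) = K\psi_\alpha^+(-1)\big(\frac{\rho_{i,\alpha}}{\rho_{i,\alpha}-1}-1\big) = \frac{K\psi_\alpha^+(-1)}{\rho_{i,\alpha}-1}$ combines with $\mte^{-\rho_{i,\alpha}(x_1^\star-x)} = [K\psi_\alpha^+(-1)]^{-\rho_{i,\alpha}}\mte^{\rho_{i,\alpha}x}$ and the prefactor $1/\psi_\alpha^+(-1)$ to give exactly $K\sum_i [K\psi_\alpha^+(-1)]^{-\rho_{i,\alpha}}\mte^{\rho_{i,\alpha}x}\frac{A_i}{\rho_{i,\alpha}-1}$, as claimed. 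I expect the main obstacle to be purely bookkeeping: justifying that the threshold that satisfies the smooth-fit (first-order) condition is genuinely the global maximizer of $g^{(1)}(x,\cdot)$ — i.e. that there is no competing disconnected stopping region for $k=1$ — but this is precisely where the piecewise-linearity of $\phi(\log s)$ in $s$ (hence convexity arguments of Lemma \ref{lem_a} and the continuous-fit result quoted from \cite{xiazhou}) rules out pathologies, so the argument reduces to the algebraic verification above. A minor subtlety worth flagging is the need to check $x_1^\star>\log K$, equivalently $\psi_\alpha^+(-1)>1$, which follows from $\sum_i A_i\frac{1}{\rho_{i,\alpha}-1}=\psi_\alpha^+(-1)-1$ being strictly positive (since $\psi_\alpha^+(-1)=\ex[\mte^{\overline X_{\ep}}]>1$ when $\alpha\ge0$, and by the analytic-continuation/change-of-measure extension for $\alpha<0$).
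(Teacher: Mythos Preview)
Your approach is valid and reaches the right answer, but it differs from the paper's proof in one essential respect. The paper differentiates $g^{(1)}(x,b)$ in $b$ to obtain the factorization
\[
\partial_b g^{(1)}(x,b)=\frac{K\psi_\alpha^+(-1)-\mte^b}{\psi_\alpha^+(-1)}\cdot\sum_{i=1}^{|\mathcal{I}_\alpha|}A_i\rho_{i,\alpha}\mte^{-\rho_{i,\alpha}(b-x)},
\]
and then spends most of its effort (for $\alpha<0$) proving that the second factor is nonnegative for all $y=b-x>0$, via the change of measure to $\pr^{\Phi(\alpha)}$ and the identification of $\sum_i A_i\rho_{i,\alpha}\mte^{-\rho_{i,\alpha}y}$ with a limit of densities of $\overline X_{\mathbf e_q}$. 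This yields the full increasing/decreasing shape of $b\mapsto g^{(1)}(x,b)$ and thus global optimality of $x_1^\star$ directly. Your route instead leans on the pre-established threshold structure (Lemma~\ref{lem_a} and the continuous-fit argument from \cite{xiazhou}): since the optimal region is already known to be a single half-line $[x_1^\star,\infty)$, the optimal $x_1^\star$ must satisfy $\partial_b g^{(1)}(x,b)|_{b=x_1^\star}=0$ for every $x<x_1^\star$, and by linear independence of the exponentials $\mte^{\rho_{i,\alpha}x}$ the second factor cannot vanish identically, forcing the first factor to vanish and pinning $x_1^\star=\log(K\psi_\alpha^+(-1))$ uniquely. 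This is a legitimate and in some ways slicker shortcut: you never need the positivity of $\sum_i A_i\rho_{i,\alpha}\mte^{-\rho_{i,\alpha}y}$. On the other hand, the paper's positivity result is reused verbatim in the proof of Proposition~\ref{thm:first order} for $k\ge2$, so the extra work there is not wasted.

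Two points in your write-up should be tightened. First, the appeal to ``smooth fit in $x$'' (matching $\partial_x g^{(1)}(x,b)$ with $\phi'(x)$ at $x=b$) is a detour and is not guaranteed for a L\'evy process with positive jumps; the robust first-order condition is $\partial_b g^{(1)}(x,b)=0$, which you also state and which is what actually does the work. Second, the phrase ``convexity/monotonicity of $g^{(1)}(x,\cdot)$ established earlier'' is inaccurate: nothing earlier establishes shape in $b$, only convexity of $s\mapsto v^{(1)}(\log s)$. What you really use is the uniqueness of the optimal threshold, and that comes from the convexity of the value function, not from any a priori structure of $b\mapsto g^{(1)}(x,b)$. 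Likewise, your ``continuous fit'' check $g^{(1)}(b,b)=\phi(b)$ via the formula only singles out $x_1^\star$ when $\psi_\alpha^+(\infty)>0$; in the complementary case the formula limit equals $\phi(b)$ for every $b$, so that check alone is not decisive.
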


\bs
\begin{remark}
Recently, Cai and Sun \cite{Cai2014} consider a single   stock loan problem under a hyper-exponential jump diffusion model, and  provide an analytic solution for the perpetual single stopping problem. In comparison, our Theorem 3.1 applies to  more general \lev models as described by  \eqref{XX} and Assumption \ref{assume2}.
\end{remark}
\begin{remark}\label{rmk:SNLP}
If  $X$ is a spectrally negative L\'evy process, then \eqref{valone} can be simplified to
\be
v^{(1)}(x)= \left\{ \begin{array}{ll} \phi(x) , & x \geq x^\star_1, \\ \phi(x^\star_1) \mathrm{e}^{-\Phi(\alpha)(x^\star_1-x)}, & x < x^\star_1,
\end{array} \right.
\ee
where $x^\star_1=\log(K\psi_\alpha^+(-1))$ with $\psi_\alpha^+(\beta)=\frac{\Phi(\alpha)}{\Phi(\alpha)+\beta}$. Notice that $\ex_x[\mathrm{e}^{-\alpha\tau_{b}^+}]=\mathrm{e}^{-\Phi(\alpha)(b-x)}$ for all  $x < b$.
\end{remark}

\subsection{Optimal Multiple Stopping Problem}
In this subsection, we characterize the optimal exercise thresholds that maximize $g^{(k)}(x,\cdot)$. First, 
recall from  Proposition \ref{lapfirstpass} that for all $x<b$, and the given $\alpha\in\mathbb{R}$  and $\beta\ge 0$,
\bq
\ex_x\left[\mathrm{e}^{-\alpha\tau_b^+-\beta (X_{\tau_b^+}-b)}\ind_{\{\tau_b^+<\infty\}}\right]&=&\frac{1}{\psi_\alpha^+(\beta)}\sum_{i=1}^{|\mathcal{I}_\alpha|}A_i\frac{\rho_{i,\alpha}}{\rho_{i,\alpha}+\beta}\mathrm{e}^{-\rho_{i,\alpha}(b-x)}.
\label{sumpipi}
\eq
The distribution of $X_{\tau_b^+}$ can be retrieved from \eqref{sumpipi} via inverse Laplace transform. To this end,  let us introduce
\bq
\phi_\infty&:=&\lim_{\beta\to\infty}{\beta\psi_\alpha^+(\beta)}=\left\{\begin{array}{ll}\frac{\prod_{i=1}^{|\mathcal{I}_\alpha|}\rho_{i,\alpha}}{\prod_{j=1}^{|\mathcal{J}|}\eta_j}>0,\,\,&\text{ {if $-J$ is not a subordinator}}\\
\infty, &\text{ {else}}\end{array}\right..\nn
\eq
Then there exists a unique (possibly signed) measure on $[0,\infty)$, $\nu(\diff y)$, such that 
\be\int_{[0,\infty)}\mte^{-\beta y}\nu(\diff y)=\frac{1}{\psi_\alpha^+(\beta)}-\frac{\beta}{\phi_\infty},\,\,\,\forall\beta\ge0.\label{eq:numeasure}\ee

\begin{remark}
If we assume that elements in $\mathcal{J}$ are all distinct,\footnote{This is the case, for example, when the upward jumps are hyper-exponential.} then we have,
\be
\nu(\diff y)=\left\{\begin{array}{ll}\frac{(\sum_{i=1}^{|\mathcal{I}_\alpha|}\rho_{i,\alpha}-\sum_{j=1}^{|\mathcal{J}|}\eta_j)}{\phi_\infty}\ind_{\{y=0\}}+\sum_{j=1}^{|\mathcal{J}|}\frac{1}{{\psi_\alpha^+}'(-\eta_j)}\mathrm{e}^{-\eta_jy}\diff y,~&\text{if $-J$ is not a subordinator}\\ \frac{1}{\psi_\alpha^+(\infty)}\ind_{\{y=0\}}+\sum_{j=1}^{|\mathcal{J}|}\frac{1}{{\psi_\alpha^+}'(-\eta_j)}\mathrm{e}^{-\eta_jy}\diff y, &\text{else}\end{array}\right.,\,\,\,\forall y\ge0.\nn
\ee 
\end{remark}

Furthermore, we define a measure on $[0,\infty)$ for each $1\le i\le |\mathcal{I}_\alpha|$:
\be\label{nudensity}
\bar{\nu}_i(\diff y):=\frac{\rho_{i,\alpha}}{\phi_\infty}\ind_{\{y=0\}}+\rho_{i,\alpha}\mte^{-\rho_{i,\alpha}y}\bigg(-\frac{\rho_{i,\alpha}}{\phi_\infty}+\int_{[0,y)}\mte^{\rho_{i,\alpha}z}\nu(\diff z)\bigg)\diff y,\,\,\,\forall y\ge0.
\ee
Then it can be easily verified that 
\be
\int_{[0,\infty)}\mte^{-\beta y}\bar{\nu}_i(\diff y)=\frac{\rho_{i,\alpha}}{\rho_{i,\alpha}+\beta}\frac{1}{\psi_\alpha^+(\beta)},\,\,\,\forall\beta\ge0, 1\le i\le |\mathcal{I}_\alpha|.\label{nubar}
\ee
As a result of \eqref{sumpipi} and \eqref{nubar}, we have for all $x<b$,
\bq
\label{eq:touch}\ex_x\left[\mte^{-\alpha\tau_b^+}\ind_{\{X_{\tau_b^+}=b\}}\ind_{\{\tau_b^+<\infty\}}\right]&=&\frac{1}{\phi_\infty}\sum_{i=1}^{|\mathcal{I}_\alpha|}A_i\rho_{i,\alpha}\mte^{-\rho_{i,\alpha}(b-x)},\\
\label{eq:overshoot}\ex_x\left[\mte^{-\alpha\tau_b^+}\ind_{\{X_{\tau_b^+}-b\in\diff y\}}\ind_{\{\tau_b^+<\infty\}}\right]&=&\sum_{i=1}^{|\mathcal{I}_\alpha|}A_i\mte^{-\rho_{i,\alpha}(b-x)}\bar{\nu}_i(\diff y),\,\,\,\forall y>0.
\eq

Equations \eqref{eq:touch} and \eqref{eq:overshoot} can be used to compute $\ex_x[\mathrm{e}^{-\alpha(\tau_b^++\delta)}v^{(k)}(X_{\tau_b^++\delta})]$. To this end, let $Y$ have the same distribution as $X_\delta$ under $\pr$, but is independent of $\mathcal{F}_{\tau_b^+}$. Then, for all $x<b$,
\begin{align}\label{delayed payoff}
&\ex_x\left[\mathrm{e}^{-\alpha(\tau_b^++\delta)}v^{(k)}(X_{\tau_b^++\delta})\ind_{\{\tau_b^+<\infty\}}\right]\nn\\
=&\sum_{i=1}^{|\mathcal{I}_\alpha|}A_i\mathrm{e}^{-\rho_{i,\alpha}(b-x)}\bigg(\int_{[0,\infty)}\ex[\mte^{-\alpha\delta}v^{(k)}(b+Y+y)]\bar{\nu}_i(\diff y)\bigg)\nn\\
=&\sum_{i=1}^{|\mathcal{I}_\alpha|}A_i\mathrm{e}^{-\rho_{i,\alpha}(b-x)}\bigg(\frac{\rho_{i,\alpha}}{\phi_\infty}\ex [\mathrm{e}^{-\alpha\delta}v^{(k)}(b+Y)]+\int_{(b,\infty)} \ex [\mathrm{e}^{-\alpha\delta}v^{(k)}(u+Y)]\bar{\nu}_i(-b+\diff u)\bigg).
\end{align}

Recall from Corollary \ref{diff} that $\ex[\mte^{-\alpha\delta}v^{(k)}(b+Y)]$ is  differentiable for all $b\in\R$.
As a result, the optimal exercise threshold $b_k^\star$ can be characterized by the first order condition: $\frac{\partial}{\partial b}|_{b=b_k^\star}g^{(k)}(x,b)=0$ for  any $x<b_k^\star$. In the remaining of this subsection, we will inductively prove that, if the threshold strategy is optimal for problem \eqref{single1} with up to $k-1$ exercise opportunities, for some $k\in\{2,\cdots,n\}$, then there exists a unique optimal exercise threshold $b_{k}^\star$ for problem \eqref{single1} with $k$ exercise opportunities.   To show   that the threshold type strategy $\tau_{b_k^\star}^+$ is indeed optimal over all $\mathbb{F}$-stopping times in $\setT$, we further prove that
the process $(\mte^{-\alpha t}g^{(k)}(X_t,b_k^\star))_{t\ge0}$ is a supermartingale.  Finally, based on the result for $k=1$ in Section \ref{subse:single}, mathematical induction will subsequently conclude the existence and uniqueness of the optimal exercise threshold $b_k^\star$, and the optimality of the threshold type strategy $\tau_{b_k^\star}^+$, for all $k\in\{2,\cdots,n\}$.

To facilitate later calculations, we define for all $1\le k\le n$ that
\be\label{function:u}
u^{(k)}(x):=\frac{{v_+^{(k)}}'(x)}{\phi_\infty}-\int_{[0,\infty)} v^{(k)}(x+y)\nu(\diff y),\ee 
 where ${v_+^{(k)}}'$ is the right derivative of $v^{(k)}$.
In particular, when $k=1$, we can use \eqref{valone} and \eqref{function:u} (see also the proof of Proposition \ref{prop:ukk} below using $v^{(1)}(x)=g^{(1)}(x,x_1^\star)$ and $v^{(0)}(x)\equiv u^{(0)}(x)\equiv0$) to obtain: 
\be
\label{eq:u1}
u^{(1)}(x)=\frac{\mte^{x_1^\star}-\mte^{x}}{\psival}\ind_{\{x\ge x_1^\star\}}.
\ee
Notice that $u^{(1)}$ is  continuous, non-positive  and non-increasing on $\R$ and is strictly decreasing on $[x_1^\star,\infty)$.

The following result characterizes the first order condition for the optimal exercise threshold $b_k^\star$.
\begin{prop}\label{thm:first order}
If for some fixed $k\in\{2,\cdots, n\}$, the threshold type strategy $\tau_{x_{k-1}^\star}^+$ is optimal for \eqref{single1} with $(k-1)$ exercise opportunities, $v^{(k-1)}(x)>v^{(k-2)}(x)$ for all $x\in\R$, and the function $u^{(k-1)}(x)$ is continuous on $\R$, then there exists at least one solution $b_k>\log K$ to the   equation:
\be\label{first order condition}
\tilde{u}^{k}_{0} (b_k) = 0, \quad \text{ where} ~~~\tilde{u}_0^{(k)}(x):=\frac{\mathrm{e}^{x_1^\star}-\mathrm{e}^{x}}{\psi_\alpha^+(-1)}+\ex[\mathrm{e}^{-\alpha\delta}u^{(k-1)}(x+X_\delta)],\,\,\,\forall x\in\R. \ee
Moreover, if the optimal exercise threshold $b_k^\star$ exists, it satisfies \eqref{first order condition}.
\end{prop}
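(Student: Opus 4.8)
The plan is to compute $\frac{\partial}{\partial b}g^{(k)}(x,b)$ explicitly using the expansion \eqref{delayed payoff} together with the definition \eqref{eq:gvalfunction}, \eqref{single2} of $g^{(k)}$ as a sum over $i$ of terms of the form $A_i\mathrm{e}^{-\rho_{i,\alpha}(b-x)}(\cdots)$, and to show that this derivative, evaluated at a candidate threshold $b$, vanishes precisely when $\tilde u_0^{(k)}(b)=0$. Concretely, I would first write $g^{(k)}(x,b)=\ex_x[\mathrm{e}^{-\alpha\tau_b^+}\phi(X_{\tau_b^+})\ind_{\{\tau_b^+<\infty\}}]+\ex_x[\mathrm{e}^{-\alpha(\tau_b^++\delta)}v^{(k-1)}(X_{\tau_b^++\delta})\ind_{\{\tau_b^+<\infty\}}]$; the first summand is given in closed form by Corollary \ref{value_at_passage} (valid since by the induction hypothesis and Remark \ref{rmk:plus}, $x_{k-1}^\star,b_k>\log K$, so the $+$ in $\phi$ is inactive at the threshold), and the second by \eqref{delayed payoff}. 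Differentiating in $b$ and collecting the prefactors $A_i\rho_{i,\alpha}\mathrm{e}^{-\rho_{i,\alpha}(b-x)}$ (using \eqref{eq:touch}–\eqref{eq:overshoot} and $\sum_i A_i\rho_{i,\alpha}\mathrm{e}^{-\rho_{i,\alpha}(b-x)}/\phi_\infty$ as the common ``touching'' mass), one identifies a single scalar factor multiplying the whole expression; this factor is exactly $\tilde u_0^{(k)}(b)$ as defined in \eqref{first order condition}, after recognizing $\ex[\mathrm{e}^{-\alpha\delta}u^{(k-1)}(x+X_\delta)]$ via the definition \eqref{function:u} of $u^{(k-1)}$ and an integration by parts converting the derivative term and the $\nu$-integral term into a single boundary contribution. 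The differentiability needed throughout is supplied by Corollary \ref{diff} (so $\ex[\mathrm{e}^{-\alpha\delta}v^{(k-1)}(b+Y)]$ is $C^1$ in $b$) and by the hypothesis that $u^{(k-1)}$ is continuous.

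Having reduced the first-order condition to $\tilde u_0^{(k)}(b)=0$, I would establish existence of a root $b_k>\log K$ by an intermediate-value argument on $\tilde u_0^{(k)}$. At $x=\log K$ the first term $(\mathrm{e}^{x_1^\star}-\mathrm{e}^x)/\psi_\alpha^+(-1)=(\mathrm{e}^{x_1^\star}-K)/\psi_\alpha^+(-1)>0$ since $x_1^\star>\log K$, while $\ex[\mathrm{e}^{-\alpha\delta}u^{(k-1)}(\log K+X_\delta)]$ should be shown to be non-negative or at least to not overwhelm this positive term — here I would use that $u^{(k-1)}\le 0$ but is ``small'' near and below $x_{k-1}^\star$, or more robustly, compare $\tilde u_0^{(k)}(\log K)$ against $\tilde u_0^{(k)}$ evaluated where $g^{(k)}$ must switch to $\phi^{(k)}$ (namely for $x\ge x_1^\star$, where Lemma \ref{lem:x_k<x_1} forces $v^{(k)}=\phi^{(k)}$, hence $g^{(k)}(x,\cdot)$ is maximized at $\phi^{(k)}(x)$ and the corresponding sign of $\tilde u_0^{(k)}$ is reversed). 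Combining a positive value at $\log K$ with a non-positive value at $x_1^\star$ — this second fact follows because for $x\ge x_1^\star$ one has $u^{(k)}(x)=u^{(1)}(x)+\ex[\mathrm{e}^{-\alpha\delta}u^{(k-1)}(x+X_\delta)]$ with $u^{(1)}$ strictly decreasing and $\le 0$ past $x_1^\star$ — and invoking continuity of $\tilde u_0^{(k)}$ (again from Corollary \ref{diff} and continuity of $u^{(k-1)}$), the intermediate value theorem yields a root $b_k\in(\log K,x_1^\star]$.

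The last assertion — that the optimal threshold $b_k^\star$, when it exists, must satisfy \eqref{first order condition} — follows from Definition \ref{def_b_k}: for $x<b_k^\star$ the map $b\mapsto g^{(k)}(x,b)$ attains an interior maximum at $b=b_k^\star$, and since we have just shown this map is differentiable in $b$, the stationarity $\frac{\partial}{\partial b}|_{b=b_k^\star}g^{(k)}(x,b)=0$ holds, which by the reduction above is equivalent to $\tilde u_0^{(k)}(b_k^\star)=0$ (note the scalar factor $\tilde u_0^{(k)}(b)$ is independent of $x$, which is consistent with $b_k^\star$ not depending on $x$). I expect the main obstacle to be the bookkeeping in the differentiation step: one must justify differentiating under the integral sign in \eqref{delayed payoff} — using the bound $0\le \frac{\partial}{\partial x}\ex[\mathrm{e}^{-\alpha\delta}v^{(k-1)}(x+X_\delta)]\le (k-1)\mathrm{e}^x$ from Corollary \ref{diff} for dominated convergence — and, more delicately, carry out the integration by parts against $\bar\nu_i$ cleanly enough that the $\bar\nu_i(\diff y)$ structure in \eqref{nudensity} collapses to reproduce exactly the defining formula \eqref{function:u} of $u^{(k-1)}$ evaluated at the shifted argument, so that the common scalar factor is literally $\tilde u_0^{(k)}(b)$ and not merely proportional to it; pinning down the precise constants (the role of $\phi_\infty$, the $\ind_{\{y=0\}}$ atom, and the $1/\psi_\alpha^+(-1)$) is where the calculation is most error-prone.
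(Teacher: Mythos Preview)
Your overall strategy matches the paper's: differentiate $g^{(k)}(x,b)$ in $b$ using \eqref{first passage} and \eqref{delayed payoff}, factor the result as $\tilde u_0^{(k)}(b)$ times the nonnegative kernel $\sum_i A_i\rho_{i,\alpha}\mathrm e^{-\rho_{i,\alpha}(b-x)}$, invoke an intermediate-value argument for existence, and read off the first-order condition at $b_k^\star$ from Definition~\ref{def_b_k}. The derivative computation and the last assertion are handled exactly as you outline (the paper's \eqref{eq:deri} is the target identity, and your concerns about dominated convergence and the $\bar\nu_i$ bookkeeping are well-placed but resolve as you anticipate).

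The only substantive divergence is in the sign-checking for the IVT step. Your plan to show $\tilde u_0^{(k)}(\log K)>0$ directly does not go through: the second term $\E[\mathrm e^{-\alpha\delta}u^{(k-1)}(\log K+X_\delta)]$ is nonpositive (not nonnegative), and you have no control on its size from the stated hypotheses alone. Likewise, your argument for $\tilde u_0^{(k)}(x_1^\star)\le 0$ invokes $u^{(k-1)}\le 0$, which is not among the hypotheses of this proposition (it is proved later, in Proposition~\ref{prop:ukk}). The paper instead extracts both signs from the optimal-stopping structure already available: for $b\ge x_k^\star$ one has $g^{(k)}(x,b)\le v^{(k)}(x)=\phi^{(k)}(x)$, hence $\partial_b g^{(k)}(x,b)|_{b=x+}\le 0$ and so $\tilde u_0^{(k)}(b)\le 0$ there; for the positive side it uses Proposition~\ref{x_k}, namely $g^{(k)}(x,x_{k-1}^\star)>\phi^{(k)}(x)$ for $x\le\log K$, which forces $\partial_b g^{(k)}(x,b)>0$ at some $b\in(\log K,x_{k-1}^\star]$ and hence $\tilde u_0^{(k)}(b)>0$ at that $b$. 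This is precisely the ``more robust'' route you gesture at, but anchored at $x_k^\star$ and at the comparison with $\phi^{(k)}$ rather than at the specific points $\log K$ and $x_1^\star$.
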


In the next results, we establish the monotonicity of the  function $\tilde{u}_0^{(k)}$,  and the uniqueness of the solution to \eqref{first order condition}.

\begin{prop}\label{prop:ukk}
Under the assumption of Proposition \ref{thm:first order}, and further assume that
the function $u^{(k-1)}(x)$ is a non-increasing function. Let $b_k$ be any solution to \eqref{first order condition},  and define  
\be
\tilde{u}^{(k)}(x):=\frac{1}{\phi_\infty}\frac{\partial^+}{\partial x} {{g^{(k)}}(x,b_k)} -\int_{[0,\infty)} g^{(k)}(x+y, b_k)\nu(\diff y),\label{tildeuk}
\ee
where $\frac{\partial^+}{\partial x}$ is the right partial derivative operator.
Then,  $\tilde{u}^{(k)}$ is  continuous, non-positive, and  non-increasing on $\R$. Moreover, it can be expressed as 
\bq \label{tildeu2}
\tilde{u}^{(k)}(x)=\ind_{\{x\ge {b_k}\}}\tilde{u}_0^{(k)}(x).
\eq
\end{prop}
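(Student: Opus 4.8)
The plan is to establish \eqref{tildeu2} first by a direct computation from \eqref{tildeuk}, and then to read off continuity, non-positivity, and monotonicity of $\tilde{u}^{(k)}$ from the corresponding structural properties of the right-hand side. To derive \eqref{tildeu2}, I would start from the closed form for $g^{(k)}(x,b_k)$. For $x<b_k$ this is given by plugging \eqref{eq:touch}--\eqref{eq:overshoot} into the expression $g^{(k)}(x,b_k) = \ex_x[\mathrm{e}^{-\alpha\tau_{b_k}^+}\phi^{(k)}(X_{\tau_{b_k}^+})\ind_{\{\tau_{b_k}^+<\infty\}}]$ with $\phi^{(k)} = \phi + \ex_\cdot[\mathrm{e}^{-\alpha\delta}v^{(k-1)}(X_\delta)]$ and using \eqref{delayed payoff}; for $x\ge b_k$ we have simply $g^{(k)}(x,b_k)=\phi^{(k)}(x)$ (by the definition of $g^{(k)}$ via the up-crossing time, which fires immediately). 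The key algebraic fact is that, for any function $h$ admitting the representation $h(x) = \sum_i A_i \mathrm{e}^{-\rho_{i,\alpha}(b_k-x)}\big(\tfrac{\rho_{i,\alpha}}{\phi_\infty}\Psi(b_k) + \int_{(b_k,\infty)}\Psi(u)\,\bar\nu_i(-b_k+\mathrm{d}u)\big)$ on $(-\infty,b_k)$, the combination $\tfrac1{\phi_\infty}h'(x) - \int_{[0,\infty)} h(x+y)\,\nu(\mathrm{d}y)$ telescopes to zero on that region, because \eqref{eq:numeasure} and \eqref{nubar} together encode exactly the identity $\tfrac{\rho_{i,\alpha}}{\phi_\infty}\cdot\tfrac{\rho_{i,\alpha}}{\rho_{i,\alpha}+\beta} - \big(\tfrac1{\psi_\alpha^+(\beta)}-\tfrac\beta{\phi_\infty}\big)\cdot\tfrac{\rho_{i,\alpha}}{\rho_{i,\alpha}+\beta}\cdot\tfrac1{\psi_\alpha^+(\beta)}\cdots$ --- more cleanly, the operator $f\mapsto \tfrac1{\phi_\infty}f' - \nu*f$ annihilates the potential kernels $\bar\nu_i$ by construction of \eqref{nudensity}. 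So on $\{x<b_k\}$, $\tilde{u}^{(k)}(x)=0$. On $\{x\ge b_k\}$, $g^{(k)}=\phi^{(k)}$ and the same operator applied to $\phi^{(k)}(x) = (\mathrm{e}^x - K) + \ex[\mathrm{e}^{-\alpha\delta}v^{(k-1)}(x+X_\delta)]$ (using $b_k>\log K$ so the $+$ is inactive on the relevant range) produces, via \eqref{function:u} and the elementary computation $\tfrac1{\phi_\infty}(\mathrm{e}^x)' - \int \mathrm{e}^{x+y}\nu(\mathrm{d}y) = \mathrm{e}^x(\tfrac1{\phi_\infty} - (\tfrac1{\psi_\alpha^+(-1)}-\tfrac{-1}{\phi_\infty})) = -\mathrm{e}^x/\psi_\alpha^+(-1)$ and a similar identity for the $-K$ term, exactly $\tilde{u}_0^{(k)}(x) = \tfrac{\mathrm{e}^{x_1^\star}-\mathrm{e}^x}{\psi_\alpha^+(-1)} + \ex[\mathrm{e}^{-\alpha\delta}u^{(k-1)}(x+X_\delta)]$, matching \eqref{first order condition}. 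This gives \eqref{tildeu2}.

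Granting \eqref{tildeu2}, continuity of $\tilde{u}^{(k)}$ reduces to: (a) continuity of $\tilde{u}_0^{(k)}$ on $\R$, which follows from continuity of $u^{(k-1)}$ (assumed) together with dominated convergence under Assumption \ref{assume2}/Corollary \ref{diff} bounds, and (b) matching at the boundary, i.e. $\tilde{u}_0^{(k)}(b_k)=0$, which holds because $b_k$ solves \eqref{first order condition}. Non-positivity: $\tilde{u}_0^{(k)}(x) = \tfrac{\mathrm{e}^{x_1^\star}-\mathrm{e}^x}{\psi_\alpha^+(-1)} + \ex[\mathrm{e}^{-\alpha\delta}u^{(k-1)}(x+X_\delta)]$; the second term is $\le 0$ since $u^{(k-1)}\le 0$ (inductively, with the base case \eqref{eq:u1}), and the first term, being linear in $\mathrm{e}^x$ and vanishing at $x=x_1^\star$, is $\le 0$ on $[x_1^\star,\infty)$ --- but I must cover the full range $[b_k,\infty)$, and here the fact that $b_k>\log K$ is not enough; I would instead argue that since $\tilde{u}_0^{(k)}$ is non-increasing (see below) and $\tilde{u}_0^{(k)}(b_k)=0$, it is $\le 0$ on all of $[b_k,\infty)$, hence $\tilde{u}^{(k)}\le0$ on $\R$. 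Monotonicity of $\tilde{u}_0^{(k)}$: differentiate --- $\tfrac{\mathrm{d}}{\mathrm{d}x}\tilde{u}_0^{(k)}(x) = -\tfrac{\mathrm{e}^x}{\psi_\alpha^+(-1)} + \ex[\mathrm{e}^{-\alpha\delta}{u^{(k-1)}}'(x+X_\delta)]$; since $\psi_\alpha^+(-1)>0$ and $u^{(k-1)}$ is non-increasing (this is the newly added hypothesis), both terms are $\le 0$, so $\tilde{u}_0^{(k)}$ is non-increasing, and by \eqref{tildeu2} so is $\tilde{u}^{(k)}$ (it is $0$ to the left of $b_k$ and non-increasing and starting from $0$ to the right).

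The main obstacle is the operator-annihilation computation that produces \eqref{tildeu2}: one must carefully justify differentiating \eqref{delayed payoff} in $x$ (legitimate because $\tau_b^+$ enters only through the smooth factor $\mathrm{e}^{-\rho_{i,\alpha}(b-x)}$ and Corollary \ref{diff} controls the growth of $v^{(k-1)}$ so all integrals converge), handle the possible atom at $y=0$ in $\bar\nu_i$ and in $\nu$ correctly when taking right derivatives, and track the constant $\phi_\infty$ (including the degenerate case $\phi_\infty=\infty$ when $-J$ is a subordinator, where the $\tfrac1{\phi_\infty}$ terms simply drop out and the signed measure $\nu$ has no atom at $0$). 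A secondary subtlety is the interchange $\tfrac{\partial^+}{\partial x}\ex[\mathrm{e}^{-\alpha\delta}v^{(k-1)}(x+X_\delta)] = \ex[\mathrm{e}^{-\alpha\delta}{v^{(k-1)}_+}'(x+X_\delta)]$, which is exactly the content of Corollary \ref{diff} (the law of $X_\delta$ has no atoms, so the at-most-countable non-differentiability set of $v^{(k-1)}$ is null), so that $\tilde{u}^{(k)}$ restricted to $\{x\ge b_k\}$ genuinely equals the $u$-type expression built from $u^{(k-1)}$ via \eqref{function:u} and \eqref{first order condition}.
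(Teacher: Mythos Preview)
Your overall strategy---first establish \eqref{tildeu2} by direct computation, then read off the qualitative properties---matches the paper exactly, and your treatment of the region $\{x\ge b_k\}$ (applying the operator to $\phi^{(k)}$ and using \eqref{eq:numeasure} to evaluate the $\mathrm{e}^x$ and $K$ terms) is essentially identical to what the paper does. Likewise your argument for continuity, non-positivity, and monotonicity from \eqref{tildeu2} is correct and mirrors the paper.

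The genuine gap is in the region $\{x<b_k\}$. Your ``operator-annihilation'' heuristic does not go through as stated. The formal claim that $f\mapsto \tfrac{1}{\phi_\infty}f'-\nu*f$ kills $\mathrm{e}^{\rho_{i,\alpha}x}$ requires $\int_{[0,\infty)}\mathrm{e}^{\rho_{i,\alpha}y}\,\nu(\diff y)$ to converge, but since $\psi_\alpha^+(-\rho_{i,\alpha})=0$ the Laplace integral \eqref{eq:numeasure} blows up there; for $i\ge 2$ one also has $\Re\rho_{i,\alpha}\ge\beta_0=\eta_1$, so divergence is immediate. More fundamentally, $g^{(k)}(\cdot,b_k)$ is only of exponential type on $(-\infty,b_k)$; for $x<b_k$ the integral $\int_{[0,\infty)}g^{(k)}(x+y,b_k)\,\nu(\diff y)$ must be split at $y=b_k-x$, and on $[b_k-x,\infty)$ the integrand is $\phi^{(k)}(x+y)$, not the exponential representation. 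Your sketch never accounts for this regime change, and the cancellation you need is not a pure annihilation.

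The paper resolves this with an auxiliary convolution identity (its Lemma~\ref{lem:nu}): for $y,z>0$,
\[
-\sum_{i=1}^{|\mathcal{I}_\alpha|}\frac{A_i}{\rho_{i,\alpha}}\,(\bar{\nu}_i)_+'(z)\,(\bar{\nu}_i)_+'(y)=\nu_+'(z+y),
\]
proved by matching bivariate Laplace transforms via \eqref{nubar} and \eqref{eq:numeasure}. With this in hand, one writes $g^{(k)}(x,b_k)=\sum_i A_i\mathrm{e}^{\rho_{i,\alpha}(x-b_k)}\int_{[0,\infty)}\phi^{(k)}(b_k+y)\,\bar{\nu}_i(\diff y)$ for $x<b_k$, computes $\tfrac{1}{\phi_\infty}\partial_x g^{(k)}$ and the piece $\int_{[0,b_k-x)}g^{(k)}(x+z,b_k)\,\nu(\diff z)$ separately, and the lemma identifies their difference as $\int_{[0,\infty)}\phi^{(k)}(b_k+y)\,\nu(b_k-x+\diff y)$, which after the shift $y\mapsto y-(b_k-x)$ is exactly $\int_{[b_k-x,\infty)}\phi^{(k)}(x+y)\,\nu(\diff y)$---the remaining piece of the $\nu$-integral. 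Everything then cancels to give $\tilde{u}^{(k)}(x)=0$ on $(-\infty,b_k)$. This identity is the missing ingredient in your proposal.
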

 We now prove the existence and the uniqueness of the optimal exercise threshold $b_k^\star$. 
\begin{lem}\label{uniqueness}
Under the condition of Proposition \ref{prop:ukk}, there is a unique solution to \eqref{first order condition}, and this solution is the optimal exercise threshold $b_k^\star$.
\end{lem}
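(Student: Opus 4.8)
The plan is to leverage the monotonicity and sign information already assembled in Propositions \ref{thm:first order} and \ref{prop:ukk}. By Proposition \ref{thm:first order} we already know that at least one solution $b_k > \log K$ to \eqref{first order condition} exists, so the burden is uniqueness together with the identification of any such solution as the optimal exercise threshold $b_k^\star$.

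First I would establish uniqueness. Fix a solution $b_k$ to \eqref{first order condition} and consider $\tilde u_0^{(k)}$ from \eqref{first order condition}. The key observation is that $\tilde u_0^{(k)}(x) = (\mathrm{e}^{x_1^\star}-\mathrm{e}^x)/\psi_\alpha^+(-1) + \ex[\mathrm{e}^{-\alpha\delta}u^{(k-1)}(x+X_\delta)]$ is \emph{strictly decreasing} on the relevant range: the first term $(\mathrm{e}^{x_1^\star}-\mathrm{e}^x)/\psi_\alpha^+(-1)$ is strictly decreasing in $x$ (recall $\psi_\alpha^+(-1) > 0$), while $\ex[\mathrm{e}^{-\alpha\delta}u^{(k-1)}(x+X_\delta)]$ is non-increasing in $x$ because $u^{(k-1)}$ is non-increasing by hypothesis and $\mathrm{e}^{-\alpha\delta}>0$. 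Hence $\tilde u_0^{(k)}$ is the sum of a strictly decreasing function and a non-increasing function, so it is strictly decreasing, and can therefore vanish at at most one point. Combined with the existence from Proposition \ref{thm:first order}, this pins down $b_k$ uniquely; call it $b_k^\star$.

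Next I would verify that this unique solution is indeed the optimal exercise threshold in the sense of Definition \ref{def_b_k}, i.e.\ that $g^{(k)}(x,\cdot)$ is maximized at $b=b_k^\star$ for every $x$. Here I would use the representation \eqref{tildeu2}: $\tilde u^{(k)}(x)=\ind_{\{x\ge b_k^\star\}}\tilde u_0^{(k)}(x)$, together with the structure of \eqref{tildeuk}, which says that $\tilde u^{(k)}$ is, up to the strictly positive constant $1/\phi_\infty$ and the smoothing by $\nu$, essentially the spatial derivative of $b\mapsto g^{(k)}(x,b)$ evaluated on the diagonal. The strategy is to differentiate $g^{(k)}(x,b)$ in $b$ using the explicit formula \eqref{delayed payoff} together with Corollary \ref{value_at_passage}: because each term carries the factor $\mathrm{e}^{-\rho_{i,\alpha}(b-x)}$, the $b$-derivative of $g^{(k)}(x,b)$ factors through the quantity $\mathrm{e}^{\rho_{i,\alpha} b}\times(\text{a function of } b \text{ only})$, so that $\partial_b g^{(k)}(x,b)$ has the \emph{same sign as} $\tilde u_0^{(k)}(b)$ for all $x<b$. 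Since $\tilde u_0^{(k)}$ is strictly decreasing, it is positive for $b<b_k^\star$ and negative for $b>b_k^\star$, so $g^{(k)}(x,\cdot)$ is unimodal with unique maximizer $b_k^\star$ on $(x,\infty)$; this gives case (a) of Definition \ref{def_b_k}. For case (b), when $x\ge b_k^\star$ the continuity (continuous fit) of $g^{(k)}(x,b)$ as $b\uparrow x$ and the fact that $g^{(k)}(x,x)=\phi^{(k)}(x)$ shows that the supremum of $g^{(k)}(x,\cdot)$ over $b\le x$ is $\phi^{(k)}(x)$, while the derivative sign argument rules out any larger value at $b>x$.

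The main obstacle I expect is the second step — converting the sign information about $\tilde u_0^{(k)}$ into the statement that $\partial_b g^{(k)}(x,b)$ has the same sign for every $x<b$. This requires carefully differentiating the double-integral expression \eqref{delayed payoff} in $b$ (justifying the interchange of differentiation with the integral against the signed measures $\bar\nu_i$, using the continuity and differentiability granted by Corollary \ref{diff} and the continuity of $u^{(k-1)}$), and then recognizing the resulting expression — after dividing out the common positive factor $\sum_i A_i \rho_{i,\alpha}\mathrm{e}^{-\rho_{i,\alpha}(b-x)}$, which equals $\phi_\infty\,\ex_x[\mathrm{e}^{-\alpha\tau_b^+}\ind_{\{X_{\tau_b^+}=b\}}\ind_{\{\tau_b^+<\infty\}}]$ by \eqref{eq:touch} — as exactly $\tilde u_0^{(k)}(b)$. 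Once that algebraic identity is in hand, the monotonicity of $\tilde u_0^{(k)}$ does all the remaining work, and the induction that propagates the hypotheses (threshold optimality, $v^{(k-1)}>v^{(k-2)}$, continuity and monotonicity of $u^{(k-1)}$) up one level is routine given Propositions \ref{x_k}, \ref{prop_mono}, \ref{thm:first order}, and \ref{prop:ukk}.
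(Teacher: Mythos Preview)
Your uniqueness argument and the identification of the sign of $\partial_b g^{(k)}(x,b)$ with that of $\tilde u_0^{(k)}(b)$ are both correct and coincide with the paper's route (the derivative factorization you anticipate is exactly \eqref{eq:deri}, already derived in the proof of Proposition~\ref{thm:first order}). The gap is in the boundary step of your optimality verification. Knowing that $g^{(k)}(x,\cdot)$ is decreasing on $(x,\infty)$ when $x\ge b_k^\star$ does \emph{not} by itself yield $g^{(k)}(x,b)\le\phi^{(k)}(x)$ for $b>x$: one still needs $\lim_{b\downarrow x}g^{(k)}(x,b)\le\phi^{(k)}(x)$, and your ``continuous fit'' assumption fails precisely when $-J$ is a subordinator. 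In that regime $X$ crosses levels only by upward jumps, so there is always an overshoot at $\tau_b^+$ and $b\mapsto g^{(k)}(x,b)$ is \emph{not} right-continuous at $b=x$. The paper handles this by an explicit computation using \eqref{ANU}:
\[
\lim_{b\downarrow x}g^{(k)}(x,b)=\phi^{(k)}(x)+\psi_\alpha^+(\infty)\,\tilde u_0^{(k)}(x),
\]
which is $\le\phi^{(k)}(x)$ for $x\ge b_k^\star$ and $>\phi^{(k)}(x)$ for $x<b_k^\star$ --- exactly the comparison needed for both parts of Definition~\ref{def_b_k}. When $-J$ is not a subordinator one has $\psi_\alpha^+(\infty)=0$ and the right limit does equal $\phi^{(k)}(x)$, so your continuity claim is valid only in that sub-case. (A minor related point: in the subordinator case $\phi_\infty=\infty$, so your interpretation of the common factor via \eqref{eq:touch} breaks down there; its nonnegativity is instead obtained from \eqref{eq:Arho}.)
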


 Hence,  Proposition \ref{thm:first order} applies, and  the optimal exercise threshold $b_k^\star$ is uniquely determined by \eqref{first order condition}. 

Next,  we prove the supermartingale property of the process $(\mte^{-\alpha t}g^{(k)}(X_t,b_k^\star))_{t\ge0}$, in order to show that the optimal stopping region for problem \eqref{single1} is one-sided, and hence, $b_k^\star=x_k^\star$ yields the optimal stopping time $\tau_{x_k^\star}^+.$ The main tool is to re-express the value of the threshold type strategy $g^{(k)}(x,b_k^\star)$, by an expectation of functionals of $\xq$ (see \cite{alili-kyp} for a similar solution approach).

 We begin with the case $k=1$. It can be easily seen from the proof of Proposition \ref{lapfirstpass} that
\be
v^{(1)}(x)=g^{(1)}(x,x_1^\star)=\lim_{q\downarrow0}\frac{\exph[\mte^{-\Phi(\alpha)\xq}(-u^{(1)}(x+\xq))]}{\exph[\mte^{-\Phi(\alpha)\xq}]}.\label{eq:v11}
\ee
We use \eqref{eq:v11} to initialize our induction step.

\begin{prop}\label{prop:verify}
Under the condition of Proposition \ref{prop:ukk}, and further assume that
\be
v^{(k-1)}(x)=g^{(k-1)}(x,x_{k-1}^\star)=\lim_{q\downarrow0}\frac{\exph[\mte^{-\Phi(\alpha)\xq}(-u^{(k-1)}(x+\xq))]}{\exph[\mte^{-\Phi(\alpha)\xq}]}.\label{eq:vgE}\ee 
Then we have
\be
g^{(k)}(x,b_k^\star)=\lim_{q\downarrow0}\frac{\exph[\mte^{-\Phi(\alpha)\xq}(-\tilde{u}^{(k)}(x+\xq))]}{\exph[\mte^{-\Phi(\alpha)\xq}]},\label{vEXP}
\ee
where the function $\tilde{u}^{(k)}$ is defined in \eqref{tildeuk}.
Moreover, the process $(\mte^{-\alpha t}g^{(k)}(X_t,b_k^\star))_{t\ge0}$ is a supermartingale.
\end{prop}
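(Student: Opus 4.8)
The plan is to handle the proposition's two claims separately: the Wiener--Hopf type representation \eqref{vEXP} comes out of the same computation that produced the $k=1$ identity \eqref{eq:v11}, so the real work lies in the supermartingale property.

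For \eqref{vEXP}: by its definition \eqref{eq:gvalfunction} the function $g^{(k)}(\cdot,b_k^\star)$ is $\alpha$-harmonic below $b_k^\star$, with $g^{(k)}(x,b_k^\star)=\ex_x[\mte^{-\alpha\tau_{b_k^\star}^+}\phi^{(k)}(X_{\tau_{b_k^\star}^+})\ind_{\{\tau_{b_k^\star}^+<\infty\}}]$ for $x<b_k^\star$ and $g^{(k)}(x,b_k^\star)=\phi^{(k)}(x)$ for $x\ge b_k^\star$. Writing $\phi^{(k)}=\phi+\ex_{\cdot}[\mte^{-\alpha\delta}v^{(k-1)}(X_\delta)]$, the first-passage expectation splits into the plain-call part --- in closed form via Corollary \ref{value_at_passage}, legitimate because $b_k^\star>\log K$ by Proposition \ref{x_k} --- plus the refracted term $\ex_x[\mte^{-\alpha(\tau_{b_k^\star}^++\delta)}v^{(k-1)}(X_{\tau_{b_k^\star}^++\delta})\ind_{\{\tau_{b_k^\star}^+<\infty\}}]$, which I would compute from \eqref{delayed payoff} after inserting the induction hypothesis \eqref{eq:vgE} for $v^{(k-1)}$. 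Collecting terms and using the defining transforms \eqref{nubar} of the $\bar\nu_i$ and \eqref{eq:numeasure} of $\nu$, the whole expression collapses to the $\xq$-integral on the right of \eqref{vEXP}, whose integrand is $-\tilde u^{(k)}$ with $\tilde u^{(k)}$ given by the operator \eqref{tildeuk} --- read off by exactly the algebra that, in the case $k=1$, carries $v^{(1)}=g^{(1)}(\cdot,x_1^\star)$ to $u^{(1)}$ of \eqref{eq:u1} through \eqref{function:u}. That Proposition \ref{prop:ukk} identifies $\tilde u^{(k)}=\ind_{\{\cdot\ge b_k^\star\}}\tilde u_0^{(k)}$ as continuous, non-positive, non-increasing and supported on $[b_k^\star,\infty)$ --- the same type as $u^{(1)}$ --- is what lets the argument in the proof of Proposition \ref{lapfirstpass} run verbatim. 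The exchange of $\lim_{q\downarrow0}$ with the expectations is justified by the bound ${v^{(k)}}'\le k\mte^x$ of Corollary \ref{diff} and $\Phi(\alpha)>1$, which makes $\mte^{-\Phi(\alpha)\xq}\,\tilde u^{(k)}(x+\xq)$ dominated by a constant times $\mte^x\mte^{(1-\Phi(\alpha))\xq}$.

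For the supermartingale property, $\mte^{-\alpha t}g^{(k)}(X_t,b_k^\star)\ge0$ is immediate from \eqref{vEXP} with $-\tilde u^{(k)}\ge0$, and its $\mathrm{L}^\varrho$-boundedness (hence class $(D)$) follows from $0\le g^{(k)}(\cdot,b_k^\star)\le v^{(k)}$ and Theorem \ref{prop:regularity}(iii). It remains to show $\ex_x[\mte^{-\alpha t}g^{(k)}(X_t,b_k^\star)]\le g^{(k)}(x,b_k^\star)$ for all $t\ge0$ and $x\in\R$. Stopping at $\tau_{b_k^\star}^+$ makes this an equality on $\{t<\tau_{b_k^\star}^+\}$ by the $\alpha$-harmonicity below $b_k^\star$, so by the strong Markov property the claim reduces to the stopping region $\{x\ge b_k^\star\}$, where $g^{(k)}(\cdot,b_k^\star)=\phi^{(k)}$. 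There I would establish the $\alpha$-superharmonicity $(\mathcal L-\alpha)\phi^{(k)}(x)\le0$ for a.e.\ $x>b_k^\star$, with $\mathcal L$ the generator associated with \eqref{laplace_exp}; this is where the first-order condition $\tilde u_0^{(k)}(b_k^\star)=0$ of Proposition \ref{thm:first order} and the monotonicity of $\tilde u_0^{(k)}$ from Proposition \ref{prop:ukk} are used, since via the Wiener--Hopf factorisation of $\alpha-\psi$ for the class \eqref{XX} the operator in \eqref{function:u} is, up to the descending-ladder factor, a positive multiple of $\alpha-\mathcal L$, so $\tilde u^{(k)}=\ind_{\{\cdot\ge b_k^\star\}}\tilde u_0^{(k)}\le0$ on $[b_k^\star,\infty)$ turns into $(\mathcal L-\alpha)\phi^{(k)}\le0$ there (alternatively the superharmonicity can be read off by differentiating the explicit formula \eqref{vEXP}). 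A Meyer--It\^o/Dynkin argument applied to $\mte^{-\alpha t}\phi^{(k)}(X_t)$ --- its kinks harmless since $\phi^{(k)}$ is convex and $C^1$ off an at most countable set with $0\le{\phi^{(k)}}'\le k\mte^x$ by Corollary \ref{diff}, its compensated jump term convergent by \eqref{integrability_levy_measure} and the same exponential bound --- then glues the harmonic and superharmonic regions at $x=b_k^\star$ (using continuity of $g^{(k)}(\cdot,b_k^\star)$) and, with the uniform integrability above, delivers the global supermartingale property.

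The main obstacle I anticipate is the supermartingale claim: converting the algebraic data $\tilde u_0^{(k)}(b_k^\star)=0$ and $\tilde u_0^{(k)}$ non-increasing into $(\mathcal L-\alpha)\phi^{(k)}\le0$ on $(b_k^\star,\infty)$ for a two-sided L\'evy process whose only positive jumps are phase-type, and then coping with the lack of $C^2$-smoothness of $\phi^{(k)}$ in the change-of-variable step; by contrast the representation \eqref{vEXP} itself is bookkeeping on top of the $k=1$ computation already contained in the proof of Proposition \ref{lapfirstpass}.
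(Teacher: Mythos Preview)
Your proposal diverges from the paper in both halves, and for the supermartingale part the divergence matters.

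\textbf{Representation \eqref{vEXP}.} You propose to expand $g^{(k)}(\cdot,b_k^\star)$ via Corollary \ref{value_at_passage} and \eqref{delayed payoff}, insert the induction hypothesis, and collapse the algebra. The paper instead runs a purely probabilistic Wiener--Hopf argument under $\pr^{\Phi(\alpha)}$: since $\tilde u^{(k)}=\ind_{\{\cdot\ge b_k^\star\}}\tilde u_0^{(k)}$, the event $\{x+\xq\ge b_k^\star\}=\{\tau_{b_k^\star-x}^+\le\mathbf{e}_q\}$ lets one condition on $\mathcal{F}_{\tau_{b_k^\star-x}^+}$, use that $M_q:=\xq-X_{\tau_{b_k^\star-x}^+}$ has the law of $\xq$ and is independent of $\mathcal{F}_{\tau_{b_k^\star-x}^+}$, undo the measure change, and recognize the inner limit as $\phi^{(k)}$ via \eqref{eq:vgE}. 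No explicit use of \eqref{nubar}--\eqref{delayed payoff} is needed. Your computational route is plausible but longer; the paper's is closer in spirit to the $k=1$ computation in the proof of Proposition \ref{lapfirstpass}.

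\textbf{Supermartingale.} Here the paper does \emph{not} touch the generator at all. It reuses \eqref{vEXP}: on $\{t<\mathbf{e}_q\}$ one has $X_t+M_q\le\xq$ with $M_q:=\sup_{s\in[t,\mathbf{e}_q]}(X_s-X_t)\stackrel{d}{=}\xq$ independent of $\mathcal{F}_t$; together with $-\tilde u^{(k)}\ge0$ non-decreasing, a short chain of inequalities (drop $\{t\ge\mathbf{e}_q\}$, replace $\xq$ by $X_t+M_q$, Fatou, change of measure) gives $g^{(k)}(x,b_k^\star)\ge\ex_x[\mte^{-\alpha t}g^{(k)}(X_t,b_k^\star)]$ directly. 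Your route through $(\mathcal L-\alpha)\phi^{(k)}\le0$ and Meyer--It\^o is a genuinely different strategy, and the step you flag as the ``main obstacle'' is indeed a gap as written: the operator in \eqref{function:u} has Fourier symbol $-1/\psi_\alpha^+(-\beta)$, so $\tilde u^{(k)}$ equals (up to constants) the descending-ladder average of $(\alpha-\mathcal L)\phi^{(k)}$, not a positive multiple of it. Non-positivity of that average on $[b_k^\star,\infty)$ does not yield pointwise $(\mathcal L-\alpha)\phi^{(k)}\le0$ there, because inverting the descending factor does not preserve sign; and when $\alpha<0$ there is no genuine Wiener--Hopf factorisation of $\alpha-\psi$ to invoke, which is precisely why the paper works under $\pr^{\Phi(\alpha)}$ throughout. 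The paper's argument buys you exactly what you lack: it converts the monotonicity of $-\tilde u^{(k)}$ into the supermartingale inequality without ever passing through $\mathcal L$, smoothness of $\phi^{(k)}$, or boundary gluing.
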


In summary, we   apply mathematical induction to  Propositions \ref{x_k}, \ref{prop_mono}, \ref{thm:first order}, \ref{prop:ukk} and  \ref{prop:verify} and Lemma \ref{uniqueness} to obtain the following result.
\begin{thm}\label{main2}
For every $k\in\setN$, the optimal stopping problem \eqref{single1} is solved by the up-crossing time $\tau_{x_k^\star}^+$, where $x_k^\star$ is the unique solution to \eqref{first order condition}, satisfying
\[\log K<x_n^\star\le x_{n-1}^\star\le \cdots\le x_1^\star.\]
The value function is given by
\[v^{(k)}(x)=g^{(k)}(x,x_k^\star)\,,\]
which can be expressed as \eqref{vEXP} above.  Moreover, the value functions are ordered as follows: 
\[0<v^{(1)}(x)<v^{(2)}(x)<\cdots<v^{(n)}(x),\,\,\,\forall x\in\R.\]
\end{thm}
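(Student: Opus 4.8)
The plan is to carry out an induction on $k$, using the single-stopping result (Theorem \ref{thm:value_single_stop}) as the base case and chaining together the structural results already established. First I would record the base case $k=1$: Theorem \ref{thm:value_single_stop} gives $x_1^\star = \log(K\psi_\alpha^+(-1)) \in (\log K,\infty)$, $v^{(1)}(x) = g^{(1)}(x,x_1^\star)$, the up-crossing time $\tau_{x_1^\star}^+$ is optimal, $[x_1^\star,\infty)$ is the only optimal stopping region, $v^{(1)}(x) > 0 = v^{(0)}(x)$ for all $x$, and the representation \eqref{eq:v11} holds; moreover $u^{(1)}$ from \eqref{eq:u1} is continuous, non-positive and non-increasing. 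These are exactly the hypotheses needed to start the induction.

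The inductive step proceeds in the following order. Assume that for some $k \in \{2,\dots,n\}$ the threshold strategy $\tau_{x_{k-1}^\star}^+$ is optimal for \eqref{single1} with $k-1$ rights, that $v^{(k-1)}(x) > v^{(k-2)}(x)$ for all $x$, that $u^{(k-1)}$ is continuous and non-increasing on $\R$, that $[x_{k-1}^\star,\infty)$ is the only optimal stopping region, and that the representation \eqref{eq:vgE} holds for $k-1$. Then: (1) apply Proposition \ref{thm:first order} to get existence of a solution $b_k > \log K$ to the first-order condition \eqref{first order condition}, and to know that if $b_k^\star$ exists it satisfies \eqref{first order condition}; (2) apply Proposition \ref{prop:ukk} to obtain that $\tilde u^{(k)}$ is continuous, non-positive, non-increasing, and has the form \eqref{tildeu2}; (3) apply Lemma \ref{uniqueness} to conclude the solution to \eqref{first order condition} is unique and equals $b_k^\star$; (4) apply Proposition \ref{prop:verify} to get the representation \eqref{vEXP} for $g^{(k)}(x,b_k^\star)$ and, crucially, the supermartingale property of $(\mathrm{e}^{-\alpha t} g^{(k)}(X_t,b_k^\star))_{t\ge 0}$; (5) feed this supermartingale property into Proposition \ref{x_k}(iii) to conclude $x_k^\star = b_k^\star$, that $[x_k^\star,\infty)$ is the only optimal stopping region, and that $\tau_{x_k^\star}^+$ is optimal, while Proposition \ref{x_k}(i)--(ii) gives $v^{(k)}(x) > v^{(k-1)}(x)$ for all $x$ and $x_k^\star \in (\log K, x_1^\star]$. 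Finally I would verify that the induction hypotheses are regenerated at level $k$: continuity of $u^{(k)}$ follows from \eqref{tildeu2} together with $g^{(k)}(x,b_k^\star) = v^{(k)}(x)$, which identifies $\tilde u^{(k)} = u^{(k)}$; monotonicity of $u^{(k)}$ is part of Proposition \ref{prop:ukk}; and strict positivity $v^{(k)} > v^{(k-1)}$ is step (5). The ordering $\log K < x_n^\star \le \cdots \le x_1^\star$ then follows directly from Proposition \ref{prop_mono} applied with $l$ running up to $n-1$ (its hypothesis — that $[x_k^\star,\infty)$ is the unique optimal stopping region for $1\le k\le l$ — is supplied inductively by step (5)), and the chain $0 < v^{(1)} < \cdots < v^{(n)}$ is the accumulation of the strict inequalities from step (5).

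The main obstacle is bookkeeping rather than a single hard estimate: one must be careful that the collection of hypotheses passed into each of Propositions \ref{thm:first order}, \ref{prop:ukk}, \ref{prop:verify} and Lemma \ref{uniqueness} is exactly the collection of conclusions one has just proved at the previous level, and that no auxiliary assumption (continuity of $u^{(k-1)}$, monotonicity of $u^{(k-1)}$, strict gap $v^{(k-1)} > v^{(k-2)}$, unique one-sided stopping region, and the exponential representation \eqref{eq:vgE}) slips through without being re-established at level $k$. In particular the identification $\tilde u^{(k)} = u^{(k)}$ (needed to transport continuity and monotonicity of $\tilde u^{(k)}$ to $u^{(k)}$) rests on the fact — obtained only after step (5) — that $g^{(k)}(\cdot,b_k^\star)$ coincides with the true value function $v^{(k)}$; so the order of steps (4) then (5) then the regeneration of hypotheses is essential and cannot be permuted. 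Once the induction closes, the stated ordering of thresholds and of value functions is immediate from Proposition \ref{prop_mono} and the accumulated strict inequalities.
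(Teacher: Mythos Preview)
Your proposal is correct and follows exactly the approach the paper takes: the paper itself summarizes the proof of Theorem~\ref{main2} as ``apply mathematical induction to Propositions~\ref{x_k}, \ref{prop_mono}, \ref{thm:first order}, \ref{prop:ukk} and~\ref{prop:verify} and Lemma~\ref{uniqueness},'' and you have faithfully unpacked that induction, including the careful regeneration of the hypotheses (continuity and monotonicity of $u^{(k)}$ via the identification $\tilde u^{(k)}=u^{(k)}$ once $v^{(k)}=g^{(k)}(\cdot,b_k^\star)$, and the representation~\eqref{eq:vgE} at level $k$ from~\eqref{vEXP}). Your remark on the essential ordering of steps (4), (5), then hypothesis regeneration is exactly right.
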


\section{Numerical Examples } \label{section_numerical}
In this section, we present numerical examples based on our analytical results. In particular, we illustrate  the sensitivity of the optimal thresholds   with respect to the distribution of refraction times.  The numerical implementation is generally  challenging. It involves the evaluation of the expectation $\E_x \left[ \mathrm{e}^{- \alpha \delta}v^{(k-1)} (X_\delta) \right]$ while the distribution of the random variable $X_\delta$ is most commonly not explicit or even unknown.     As is used in \cite{ZeghalSwing}, Monte Carlo simulation is the most straightforward approach. However, it is far from being practical unless $k$ is a very small number. For refracted  multiple stopping problems, one needs to know the entire expected future payoff functional to carry out the backward induction. The simulation approach would require the computation of  these expectations for arbitrarily large number of starting points for each step, which adds to the computational burden and limits its applicability.  In particular, the payoff function of our problem is unbounded (and increases exponentially); the truncation that would be needed under the simulation method will produce non-negligible errors that would further be amplified as $k$ increases. 

For our numerical examples, we assume that $\delta$ is Erlang distributed (i.e. a sum of i.i.d. exponential random variables), and numerically solve for the optimal exercise  thresholds using the methods described in our separate paper \cite{TimKazuHZ14}.  The approach utilizes the resolvent measure (or the distribution of $X$ at an independent exponential random time) and carries out repeatedly and analytically the  integrations with respect to this measure. The resulting value functions are shown to be in a piecewise analytic form.  The results are exact when the jump size distribution is phase-type, and can be used as an approximation to problem with other \lev jumps  thanks to the denseness of the phase-type \lev processes.   The approach   can also  be applied  to  the case with  constant refraction times via the technique of Canadization.  For detailed analysis on its computational performance, we refer the reader to \cite{TimKazuHZ14}.

In our numerical results, we consider from  \eqref{XX} a spectrally negative \lev process with i.i.d.\ exponential jumps:
\begin{equation*}
  X_t  - X_0= \widetilde{c} t+\sigma B_t - \sum_{n=1}^{N_t} Z_n, \quad 0\le t <\infty, 
\end{equation*}
for some $\widetilde{c} \in \R$ and $\sigma \geq 0$.  Here $B=(B_t)_{ t\ge 0}$ is a standard Brownian motion, $N=(N_t)_{t\ge 0}$ is a Poisson process with arrival rate $\rho$, and  $Z = (Z_n)_{n = 1,2,\ldots}$ is an i.i.d.\ sequence of exponential random variables with parameter $\lambda > 0$.  These processes are assumed to be mutually independent.
For our studies below, we set  $\sigma = 0.2$, $\rho = \lambda = 1$ and $K = 50$.  Also, we use  $\alpha = -0.02$ and $\tilde{c} = 0.36$ so that $\alpha  -\psi(1) = 0.1 > 0$, which guarantees that Assumption \ref{assume2} is satisfied.  We consider two types of refraction times: (1) exponential and (2) Erlang with shape parameter $2$. We compute the results for a range of the expected refraction times, denoted by  $\bar{\delta}:=\E \delta$.

In Figure \ref{sensitivity_threshold}, we plot the optimal exercise thresholds $x_k^\star$  for $k=1,\ldots, 5$ against different  means of the  refraction time $\bar{\delta} =  0.5, 1.0, 1.5, \ldots, 10$.  Consistent with Proposition \ref{prop_mono},  the thresholds   monotonically decrease as  $k$ increases. In particular, the highest threshold corresponds to the last remaining exercise ($k = 1$). In this case,   the refraction time is completely  irrelevant, so the threshold value stays constant over different mean refraction times under any distribution. Interestingly, with $k$ fixed, the thresholds are \emph{not} monotone in the mean refraction time. On one hand, refraction times are constraints on the stopping times, so they reduce  the value functions but not necessarily the exercise thresholds.      Intuitively, a very long refraction time reduces the value of subsequent exercise opportunities, and  incentivizes the holder to focus more on the next immediate stopping. This helps explain that the thresholds tend to be closer for very long mean  refraction times.

\begin{figure}[h!]
\begin{small}
\begin{center}
\begin{minipage}{1.0\textwidth}
\centering
\begin{tabular}{cc}
   \includegraphics[scale=0.5]{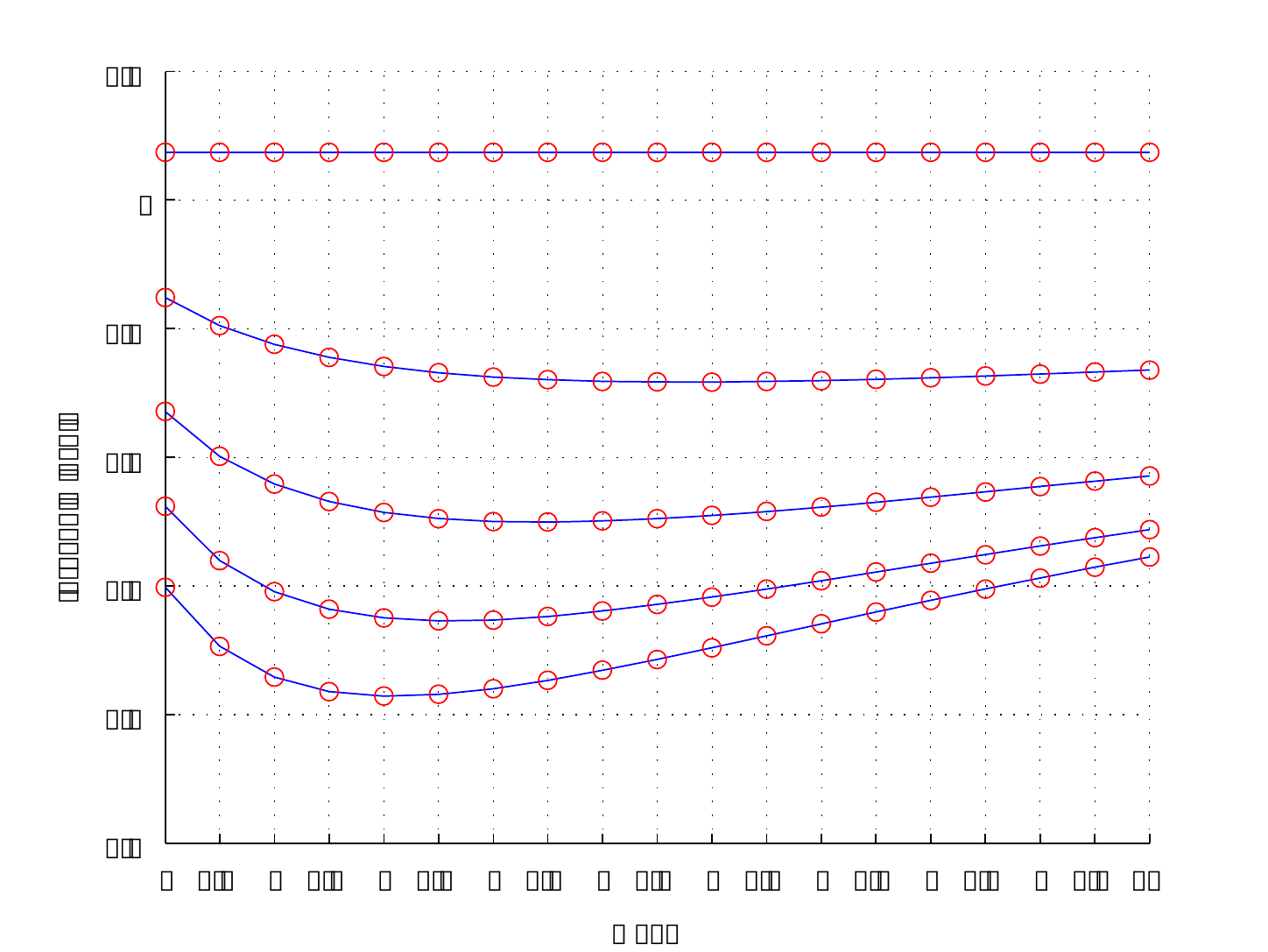} & \includegraphics[scale=0.5]{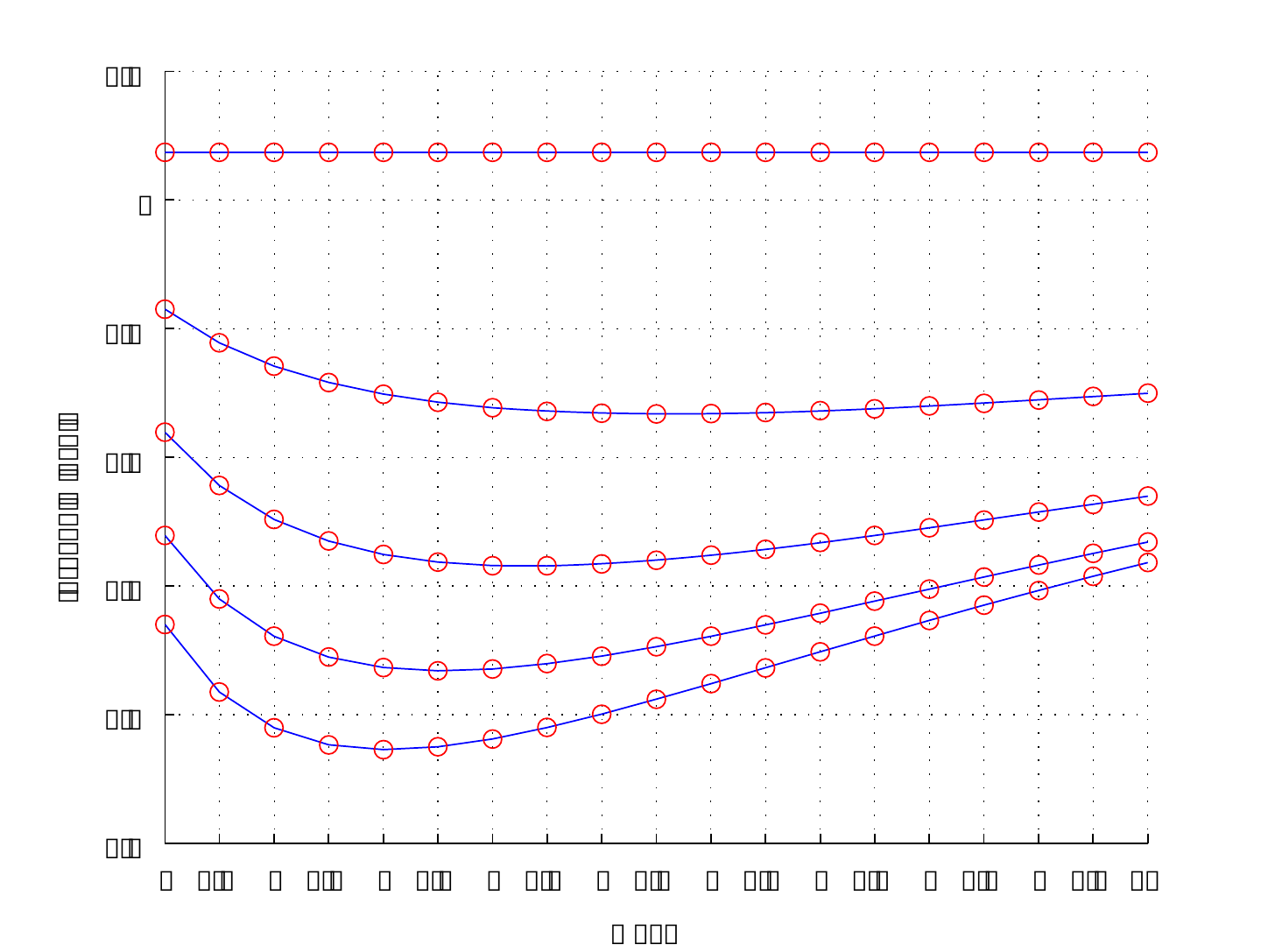}  \\
(1) Exponential & (2) Erlang \vspace{0.1cm} \\
\end{tabular}
\end{minipage}
\caption{Optimal multiple exercise thresholds $x_k^\star$ for $k=1,2,3,4,5$, plotted for different  mean refraction times $\bar{\delta}$ when they are exponentially  (left) and Erlang (right) distributed respectively.} \label{sensitivity_threshold}
\end{center}
\end{small}
\end{figure}

\section{Conclusions}\label{sect-conclude}
We have studied an optimal multiple stopping problem with the features of negative discount rate and random refraction times under a general class of \lev models.  In order to account for the negative discount rate,   the technique of change of measure is shown to be very useful though the analysis under the new measure is challenging.  As seen in Theorems  \ref{thm:value_single_stop} and \ref{main2} above, the   optimal exercise  thresholds are determined explicitly in the single stopping case and recursively in the multiple stopping case. While our problem setting is selected with the application to stock loans in mind, the current paper also presents a blueprint to rigorously analyze perpetual optimal refracted multiple stopping problems with alternative payoffs, such as put options. These would be natural directions for future research.

\appendix
\section{Appendix}
\subsection{Proof of Lemma \ref{lem_a}}\label{App:1}
We begin by noticing that, for any fixed $k\in\setN$,
\[U^{(k)}(s)=\sup_{\tau\in\setT}\ex \left[\mathrm{e}^{-\alpha\tau}\phi^{(k)}(\log s+X_\tau)\ind_{\{\tau<\infty\}}\right].\]
For $k=1$, for any stopping time $\tau$ and any $s_1,s_2\in\R_+$ such that $s_1>s_2$, it follows from the monotonicity of $\phi$ that
\bq
U^{(1)}(s_1)=\sup_{\tau\in\setT}\ex \left[\mathrm{e}^{-\alpha\tau}\phi(\log s_1+X_\tau)\ind_{\{\tau<\infty\}}\right]
&\ge&\sup_{\tau\in\setT}\ex \left[\mathrm{e}^{-\alpha\tau}\phi(\log s_2+X_\tau)\ind_{\{\tau<\infty\}}\right]=U^{(1)}(s_2).\nn
\eq
Similarly, from the subadditivity of supremum and the convexity of $(\phi\circ\log)$, we have,  for any $p,q>1$ such that $\frac{1}{p}+\frac{1}{q}=1$, and any $s_1, s_2\in\R_+$, that
\begin{multline}
\frac{U^{(1)}(s_1)}{p}+\frac{U^{(1)}(s_2)}{q}=\frac{1}{p}\sup_{\tau\in\setT}\ex\left[\mathrm{e}^{-\alpha\tau}\phi(\log(s_1\exp(X_\tau)))\ind_{\{\tau<\infty\}}\right]+\frac{1}{q}\sup_{\tau\in\setT}\ex\left[\mathrm{e}^{-\alpha\tau}\phi(\log(s_2\exp(X_\tau)))\ind_{\{\tau<\infty\}}\right]\\
\ge\sup_{\tau\in\setT}\ex\left[\mathrm{e}^{-\alpha\tau}\bigg(\frac{\phi(\log(s_1\exp(X_\tau)))}{p}+\frac{\phi(\log(s_2\exp(X_\tau)))}{q}\bigg)\ind_{\{\tau<\infty\}}\right]\\
\ge\sup_{\tau\in\setT}\ex\left[\mathrm{e}^{-\alpha\tau}(\phi\circ\log)\bigg(\Big(\frac{s_1}{p}+\frac{s_2}{q} \Big)\exp(X_\tau)\bigg)\ind_{\{\tau<\infty\}}\right]=\sup_{\tau\in\setT}\ex_{\log(\frac{s_1}{p}+\frac{s_2}{q})}\left[\mathrm{e}^{-\alpha\tau}\phi(X_\tau)\ind_{\{\tau<\infty\}}\right]=U^{(1)}\bigg(\frac{s_1}{p}+\frac{s_2}{q}\bigg).\nn
\end{multline}
Hence, the convexity holds also  for $k=1$. This  implies that  $U^{(1)}(s)$ is differentiable almost everywhere on $\R_+$.

Now suppose that the claim is true for $k=l-1$ for some $l\in\{2,\cdots, n-1\}$; that is,  $U^{(l-1)}$ is non-decreasing and convex. By the same argument  above, we conclude that $\ex[\mathrm{e}^{-\alpha\delta}U^{(l-1)}( s\exp(X_\delta))]$ is also non-decreasing and convex.
This implies that  $\phi^{(l)}$ and  $v^{(l)}$ all have the  monotonicity and convexity properties. By induction, we conclude.

\subsection{Proof of Corollary \ref{diff}}
For $k=1$ we have $0\le {U^{(1)}}'(s)= {v^{(1)}}'(\log s) /s\le 1$,  for a.e.\ $s\in\R_+$, or equivalently $0\le {v^{(1)}}'(x)\le \mathrm{e}^x$, a.e.\  $x\in\R$ and $|v^{(1)}(x+\epsilon)-v^{(1)}(x)|\le \mte^x|\mte^\epsilon-1|$,  for any $x,\epsilon\in\R$.
Let us denote by $D^{(1)}:=\{x\in\R\,:\, {U^{(1)}_+}'(x)>{U^{(1)}_-}'(x)\}=\{x\in\R\,:\, {v^{(1)}_+}'(x)>{v^{(1)}_-}'(x)\}$, where ${U_+^{(1)}}'$ and ${U_-^{(1)}}'$ (${v_+^{(1)}}'$ and ${v_-^{(1)}}'$, resp.) are, respectively, the right and left derivatives of $U^{(1)}$ ($v^{(1)}$, resp.). Then we know that $D^{(1)}$ is at most a countable set by Lemma \ref{lem_a}.
On the other hand, using the fact that $X_\delta$ has no atom, we know that, for any fixed $x\in\R$, we have for $\p$-almost every $\omega \in \Omega$,  $x+X_{\delta(\omega)}(\omega)\notin D^{(1)}$. It follows that $\mathrm{e}^{-\alpha\delta(\omega)}v^{(1)}(x+X_{\delta(\omega)}(\omega))$ is almost surely differentiable at this fixed $x\in\R$, and $0\le \mathrm{e}^{-\alpha\delta }{v^{(1)}}'(x+X_{\delta})\le \mathrm{e}^{-\alpha\delta +X_{\delta}}\cdot\mathrm{e}^x$, $\pr$-a.s. By the assumption, the nonnegative random variable $\mathrm{e}^{-\alpha\delta+X_{\delta}}\cdot\mathrm{e}^x$ has a finite expectation.
Now for any $\epsilon\neq0$, as $\epsilon\to 0$, we have by Jensen's inequality that
\begin{align*}
{0\le\,} &\bigg|\frac{\ex[\mte^{-\alpha\delta}(v^{(1)}(x+\epsilon+X_\delta)-v^{(1)}(x+X_\delta))]}{\epsilon}-\ex[\mte^{-\alpha\delta}{v^{(1)}}'(x+X_\delta)]\bigg|\\
{\le}\, &\ex\bigg[\mte^{-\alpha\delta}\bigg|\frac{v^{(1)}(x+\epsilon+X_\delta)-v^{(1)}(x+X_\delta)}{\epsilon}-{v^{(1)}}'(x+X_\delta)\bigg|\bigg]\to 0,\end{align*}  where we have  used the fact that $|\frac{v^{(1)}(x+\epsilon+X_\delta)-v^{(1)}(x+X_\delta)}{\epsilon}-{v^{(1)}}'(x+X_\delta)|\le \mte^{x+X_\delta}(\frac{|\mte^{\epsilon}-1|}{|\epsilon|}+1)$, and
\[\mte^{-\alpha\delta}\cdot\lim_{\epsilon\to0}\bigg|\frac{v^{(1)}(x+\epsilon+X_\delta)-v^{(1)}(x+X_\delta)}{\epsilon}-{v^{(1)}}'(x+X_\delta)\bigg|=0,\,\,\,\pr\text{-a.s.}\]
 and the dominated convergence theorem.
It follows that $\ex[\mathrm{e}^{-\alpha\delta}v^{(1)}(x+X_{\delta})]$ is differentiable in $x$, and
\[0\le\frac{\partial}{\partial x}\ex[\mathrm{e}^{-\alpha\delta}v^{(1)}(x+X_\delta)]=\ex[\mathrm{e}^{-\alpha\delta}{v^{(1)}}'(x+X_\delta)]\le\ex[\mathrm{e}^{-\alpha\delta+X_\delta}]\mathrm{e}^x\le\mathrm{e}^x.\]

Now suppose that the claim is true for $k=l-1$ for some $ l\in\{2,\cdots, n-1\}$. Then we know that $\phi^{(l)}(x)$ is differentiable on {$\R\backslash\{\log K\}$}  and its derivative admits the upper bound
\[{\phi^{(l)}}'(x)\, {\le}\, \mathrm{e}^x+\frac{\partial}{\partial x}\ex[\mathrm{e}^{-\alpha\delta}v^{(l-1)}(x+X_\delta)]\le \mathrm{e}^x+(l-1)\mathrm{e}^x=l\mathrm{e}^x,\,\,{\forall x\neq\log K}.\]
That is, $0\le \frac{\partial}{\partial s}\phi^{(l)}(\log s)\le l$ for all {$s\in \R_+\backslash\{K\}$}. It follows that $0\le {U^{(l)}}'(s)\le l$, a.e. which implies that $0\le {v^{(l)}}'(x)\le le^{x}$, a.e. By the same arguments as above, we also obtain that, for all $x\in\R$,
\[0\le \frac{\partial}{\partial x}\ex[\mathrm{e}^{-\alpha\delta}v^{(l)}(x+X_\delta)]\le l\mathrm{e}^{x}.\]
The result now follows from mathematical induction.

\subsection{Proof of Theorem \ref{prop:regularity}}
First,   Lemma   \ref{lem:x_k<x_1} implies  that $\pr_x(\tau_i^\star<\infty, 1\le i\le k)>0$  holds provided that $x^\star_1<\infty$ and $-X$ is not a subordinator. 

Next, following   Lemma 2.1 of  \cite{Carmona2008}, we deduce  recursively  that
\be v^{(k)}(x)\,\le\,\ex_x\left[\sup_{0\le t<\infty}\mathrm{e}^{-\alpha t}\phi^{(k)}(X_t)\right]\le \left(\ex_x\left[\left(\sup_{0\le t<\infty}\mathrm{e}^{-\alpha t}\phi^{(k)}(X_t)\right)^\varrho\right]\right)^{\frac{1}{\varrho}}<\infty,\label{eq:regularity}\ee
for all $k\ge1$. Hence, the single optimal stopping  problem \eqref{single1} is well defined. To ensure the existence of an optimal stopping time,  we may  adapt the proof of Proposition 3.2 in  \cite{ZeghalSwing} to the setting with possibly negative discount rate $\alpha$ and call-like payoff. More precisely, by Lemma \ref{lem_a} and Corollary \ref{diff} we know that $U^{(k)}(s)$ is globally Lipschitz in $s\in\R_+$, which implies that, by the proof of Proposition 3.2 in  \cite{ZeghalSwing},
the expected jump of $\mathrm{e}^{-\alpha \tau}v^{(k)}(X_\tau)$, at any predictable time $\tau$, is zero, namely,
\[\ex_x[\Delta(\mathrm{e}^{-\alpha \tau}v^{(k)}(X_\tau))\ind_{\{\tau<\infty\}}]=0, \quad k \ge 1,\]
where $\Delta(\mathrm{e}^{-\alpha \tau}v^{(k)}(X_\tau)):=\mathrm{e}^{-\alpha \tau}[v^{(k)}(X_\tau)-v^{(k)}(X_{\tau-})]$. This implies  that the Snell envelope $(\mathrm{e}^{-\alpha t}v^{(k)}(X_t))_{t\ge0}$ is left-continuous in expectation. In turn, this allows us to apply the arguments in Theorem 2.1 of \cite{Carmona2008} to conclude \eqref{vn1}.
This proves {\em(i)}.

To prove {\it(ii)}, we first comment that the result  holds trivially for the case $k=1$. For $k\in\{2,\cdots, n\}$, we observe from \eqref{vn1} that  $v^{(k)}(x) \le \tilde{v}^{(k)}(x)$ since $(\tau_i^\star)_{1\le i\le k}$ are admissible candidate stopping times (see\eqref{taus1}-\eqref{tausi}). 
The reverse inequality can be proved by induction. To this end,
notice that  $v^{(1)}(x) \ge \ex_x[ \mathrm{e}^{-\alpha \nu} \phi(X_\nu)\ind_{\{\nu<\infty\}}]$ for any arbitrary $\mathbb F$-stopping time $\nu\in\setT$ by \eqref{single1} for $k=1$.  Now by applying  \eqref{single1}, \eqref{single2} and repeated expectations, we get  
\begin{align} \label{v22}v^{(2)}(x)  &\ge\ex_x\left[\mathrm{e}^{-\alpha\tau}\left(\phi(X_\tau)+\ex_{X_\tau}\left[\mathrm{e}^{-\alpha\delta} v^{(1)}(X_\delta)\right]\right)\ind_{\{\tau<\infty\}}\right]\nn\\
&\ge\ex_x\left[\mathrm{e}^{-\alpha\tau}\left(\phi(X_\tau)+\ex_{X_\tau}\left[\mathrm{e}^{-\alpha\delta} \ex_{X_\delta}\left[\mathrm{e}^{-\alpha(\nu-\delta-\tau)}\phi(X_{\nu-\delta-\tau})\ind_{\{\nu<\infty\}}\right]\right]\right)\ind_{\{\tau<\infty\}}\right]\nn\\
&=\ex_x\!\left[\left(\mathrm{e}^{-\alpha\tau}\phi(X_{\tau})\ind_{\{\tau<\infty\}} + \mathrm{e}^{-\alpha\nu} \phi(X_\nu)\ind_{\{\nu<\infty\}} \right)\right] \end{align} for every  $\mathbb F$-stopping time $\tau$ and $\mathbb F(\tau+\delta)$-stopping time $\nu \ge \tau + \delta$. Maximizing \eqref{v22} over $(\tau,\nu)\in\setT^{(2)}$ yields that $v^{(2)}(x)\ge \tilde{v}^{(2)}(x)$.  The result now follows from mathematical induction.

Finally, for (iii), for any finite $\mathbb{F}$-stopping time $\tau\in\setT$, by the strong Markov property, we have
\bq&&\ex_x\left[\left(\mathrm{e}^{-\alpha \tau}v^{(k)}(X_\tau)\right)^\varrho\right]\le \ex_x\left[\left(\mathrm{e}^{-\alpha \tau}\cdot\ex_{X_\tau}\left[\sup_{0\le s<\infty}\mathrm{e}^{-\alpha s}\phi^{(k)}(X_s)\right]\right)^\varrho\right]\nn\\
&\le&\ex_x\left[\left(\mathrm{e}^{-\alpha \tau}\right)^{\varrho}\cdot\ex_{X_\tau}\left[\left(\sup_{0\le s<\infty}\mathrm{e}^{-\alpha s}\phi^{(k)}(X_{s})\right)^\varrho\right]\right]=\ex_x\left[\sup_{0\le s<\infty}\left(\mathrm{e}^{-\alpha(\tau+s)}\phi^{(k)}(X_{\tau+s})\right)^\varrho\right]\nn\\
&\le&\ex_x\left[\sup_{0\le s<\infty}\left(\mathrm{e}^{-\alpha s}\phi^{(k)}(X_s)\right)^\varrho\right]<\infty,\nn
\eq
where the first and second inequalities follow from \eqref{eq:regularity} and the equality is due to repeated expectations. Hence we have the uniform boundedness of elements of $\S^{(k)}$ in $\mathrm{L}^\varrho(\diff\pr_x)$. This proves {\em (iii)} and completes  the proof.

\subsection{Proof of Proposition \ref{x_k}}
 If $v^{(k-1)}(x)>v^{(k-2)}(x)$ for all $x\in\R$,
then by \eqref{single2}, we know that $\phi^{(k)}(x)>\phi^{(k-1)}(x)$ for all $x\in\R$.
Furthermore, if $[x_{k-1}^\star,\infty)$ is the only optimal stopping region for problem \eqref{single1} with $(k-1)$ exercise opportunities, then the up-crossing time $\tau_{x_{k-1}^\star}^+$ is the optimal stopping time. Hence, for all $x\in\R$, we prove  {\em(i)} through the inequality
\begin{align*}v^{(k)}(x)\ge g^{(k)}(x,x_{k-1}^\star)=&\ex_x\left[\mathrm{e}^{-\alpha\tau_{x_{k-1}^\star}^+}\phi^{(k)}(X_{\tau_{x_{k-1}^\star}^+})\ind_{\{\tau_{x_{k-1}^\star}^+<\infty\}}\right]\\
>&\ex_x\left[\mathrm{e}^{-\alpha\tau_{x_{k-1}^\star}^+}\phi^{(k-1)}(X_{\tau_{x_{k-1}^\star}^+})\ind_{\{\tau_{x_{k-1}^\star}^+<\infty\}}\right]=g^{(k-1)}(x,x_{k-1}^\star)=v^{(k-1)}(x).\end{align*}

To prove {\em (ii)},  we first recall from Lemma \ref{lem:x_k<x_1} that $x_k^\star \in (-\infty, x_1^\star]$.  Hence, for the first claim, it is sufficient to show that $v^{(k)}(x) \neq \phi^{(k)}(x)$ on $(-\infty, \log K]$.  Indeed,  we use
 the supermartingale property of value functions to obtain that $\ex_x[\mathrm{e}^{-\alpha\delta}v^{(k-1)}(X_\delta)]\le v^{(k-1)}(x)$ for all $x\in\R$. Therefore, for all $x\le\log K$,
 \[v^{(k)}(x)>v^{(k-1)}(x)\ge \ex_x[\mathrm{e}^{-\alpha\delta}v^{(k-1)}(X_\delta)]{=}\,\phi^{(k)}(x).\]
It follows that $x_k^\star\in(\log K, x_1^\star]$.
Similarly, if the optimal exercise threshold $b_k^\star$ exists, then we have
\[g^{(k)}(x,b_k^\star)\ge g^{(k)}(x,x_{k-1}^\star)>g^{(k-1)}(x,x_{k-1}^\star)=v^{(k-1)}(x)\ge\ex_x[\mte^{-\alpha\delta}v^{(k-1)}(x)].\]  
Therefore, for $x=\log K$, we have $g^{(k)}(\log K,b_k^\star)>\phi^{(k)}(\log K)$, which implies that $b_k^\star>\log K$.

We now proceed to prove {\em(iii)} by establishing the  sufficient conditions for optimality (see e.g. \citep[Sect. 6]{alili-kyp}). If the optimal exercise level  $b_k^\star$ exists,  then it is easily seen that
\begin{enumerate}
\item[(a)] for  all $x\in\R$, $g^{(k)}(x,b_k^\star)\ge g^{(k)}(x,x)=\phi^{(k)}(x)$, and $g^{(k)}(x,b_k^\star)>0$ by the fact that $b_k^\star>\log K$ (see {\em(ii)});
\item[(b)] for all  $x\in[b_k^\star,\infty)$, we have $g^{(k)}(x,b_k^\star)=\phi^{(k)}(x)$;
\item[(c)] for all $t>0$, by the strong Markov property of $X$, we have
\begin{align*}
g^{(k)}(x,b_k^\star)=&\ex_x\left[\mte^{-\alpha\tau_{b_k^\star}^+}\phi^{(k)}(X_{\tau_{b_k^\star}^+})\ind_{\{\tau_{b_k^\star}^+<\infty\}}\right]=\ex_x\bigg[\ex_x\bigg[\mte^{-\alpha\tau_{b_k^\star}^+}\phi^{(k)}(X_{\tau_{b_k^\star}^+})\ind_{\{\tau_{b_k^\star}^+<\infty\}}\bigg|\mathcal{F}_{t}\bigg]\bigg]\\
=&\ex_x\bigg[\mte^{-\alpha\tau_{b_k^\star}^+}\phi^{(k)}(X_{\tau_{b_k^\star}^+})\ind_{\{\tau_{b_k^\star}^+\le t\}}\bigg]+\ex_x\bigg[\ind_{\{\tau_{b_k^\star}^+>t\}}\ex_{X_t}\bigg[\mte^{-\alpha\tau_{b_k^\star}^+}\phi^{(k)}(X_{\tau_{b_k^\star}^+})\ind_{\{\tau_{b_k^\star}^+<\infty\}}\bigg]\bigg]\\
=&\ex_x\left[\mte^{-\alpha(t\wedge\tau_{b_k^\star}^+)}g^{(k)}(X_{t\wedge\tau_{b_k^\star}^+},b_k^\star)\right].
\end{align*}
That is, the stopped process $(\mte^{-\alpha(t\wedge\tau_{b_k^\star}^+)}g^{(k)}(X_{t\wedge\tau_{b_k^\star}^+},b_k^\star))_{t\ge0}$ is a $(\pr_x,\mathbb{F})$-martingale for any fixed $x\in\R$.
\end{enumerate}

Now if we know additionally that $(\mte^{-\alpha t}g^{(k)}(X_t,b_k^\star))_{t\ge0}$  is a supermartingale, then we can conclude that $v^{(k)}(x)=g^{(k)}(x,b_k^\star)$, and $[x_k^\star,\infty)=[b_k^\star,\infty)$ is the only stopping region.

\subsection{Proof of Proposition \ref{prop_mono}}
Clearly the claim holds for $l=2$   as we already know that $x_2^\star\le x_1^\star$, $V^{(1)}_t=\mathrm{e}^{-\alpha t}v^{(1)}(X_t)$ is a $(\pr_x,\mathbb{F})$-supermartingale,  and all random variables in the  set $\S^{(1)}$ in Theorem \ref{prop:regularity}  are uniformly bounded in $\mathrm{L}^\varrho(\diff\p_x)$.
If $l\ge 3$, suppose the claim is true  for $k=2, \cdots,h$ for some $h\in\{2,\cdots,l-1\}$. That is,
\[\log K<x_h^\star\le x_{h-1}^\star\le\cdots\le  x_1^\star\qquad\text{ and }\qquad \ex_x[V_t^{(k-1)}]\le V_0^{(k-1)}=v^{(k-1)}(x)-v^{(k-2)}(x),\,\,\forall k\in\{2,\cdots,h\}.\]  
Now, from the general theory of optimal stopping we know that the stopped process $(\mathrm{e}^{-\alpha(t\wedge\tau_{x_h^\star}^+)}v^{(h)}(X_{t\wedge\tau_{x_h^\star}^+}))_{t\ge0}$ is a martingale (see e.g. \citep[Sect. 6]{alili-kyp}). Therefore, for $x_h^\star=\min_{1 \leq k\le h}x_k^\star$, we know that the stopped process $(V_{t\wedge\tau_{x_h^\star}^+}^{(l)})_{t\ge0}$ is a martingale. Moreover,
let us introduce the first down-crossing time
\[\tau_b^-:=\inf\{t\ge0\,:\,X_t\le b\},\,\,\,\forall b\in\mathbb{R}.\]
If $x_h^\star<x_{h-1}^\star$, then the stopped process $(V_{t\wedge\tau_{x_h^\star}^-\wedge\tau_{x_{h-1}^\star}^{+}}^{(h)})_{t\ge0}$  is equal to a stopped supermartingale less a stopped martingale, and hence a supermartingale. Finally, for all $x>x_{h-1}^\star\ge x_h^\star$,
we have that
\[v^{(h)}(x)-v^{(h-1)}(x)=\phi^{(h)}(x)-\phi^{(h-1)}(x)=\ex_x[\mathrm{e}^{-\alpha\delta}[v^{(h-1)}(X_\delta)-v^{(h-2)}(X_\delta)]].\]
Hence, for all $t<\infty$ and $x>x_{h-1}^\star$, we have  
\bq
\ex_x[V_{t\wedge\tau_{x_{h-1}^\star}^-}^{(h)}]&=&\ex_x[\mathrm{e}^{-\alpha(\delta+t\wedge\tau_{x_{h-1}^\star}^-)}[v^{(h-1)}(X_{\delta+t\wedge\tau_{x_{h-1}^\star}^-})-v^{(h-2)}(X_{\delta+t\wedge\tau_{x_{h-1}^\star}^-})]]\nn\\
&\le&\ex_x[\mathrm{e}^{-\alpha\delta}[v^{(h-1)}(X_\delta)-v^{(h-2)}(X_\delta)]]=v^{(h)}(x)-v^{(h-1)}(x)=V_0^{(h)},\label{ineq}
\eq
where we used the assumption that $(V_t^{(h-1)})_{t\ge0}$ is a supermartingale, and the  independence between $\delta$ and $X$. Combining all cases, we conclude that the process $(V_t^{(h)})_{t\ge0}$ is a supermartingale.

The class (D) property of $(V_t^{(h)})_{t\ge0}$ now follows from  Minkowski's inequality and the fact that the elements in $\S^{(k)}$, $k=h-1,h$, are uniformly bounded in $\mathrm{L}^\varrho(\diff\pr_x)$ (see Proposition \ref{prop:regularity} above).

To finish the proof, we need to show that $\log K<x_{h+1}^\star\le x_{h}^\star$. To this end, we notice that  for all $x\ge x_h^\star$,
\begin{align}
v^{(h+1)}(x)=&\sup_{\tau\in\setT}\ex_{x}[\mathrm{e}^{-\alpha \tau}\phi^{(h+1)}(X_\tau)]\le\sup_{\tau\in\setT}\ex_{x}[\mathrm{e}^{-\alpha\tau}\phi^{(h)}(X_\tau)]+\sup_{\tau\in\setT}\ex_{x}[\mathrm{e}^{-\alpha\tau}[\phi^{(h+1)}(X_\tau)-\phi^{(h)}(X_\tau)]]\nn\\
=&v^{(h)}(x)+\sup_{\tau\in\setT}\ex_{x}[\mathrm{e}^{-\alpha(\tau+\delta)}[v^{(h)}(X_{\tau+\delta})-v^{(h-1)}(X_{\tau+\delta})]]=v^{(h)}(x)+\sup_{\tau\in\setT}\ex_{x}[V_{\tau+\delta}^{(h)}]\nn\\
\le&v^{(h)}(x)+\ex_{x}[V_{\delta}^{(h)}]=v^{(h)}(x)+[\phi^{(h+1)}(x)-\phi(x)]-[\phi^{(h)}(x)-\phi(x)]\nn\\
=&v^{(h)}(x)-\phi^{(h)}(x)+\phi^{(h+1)}(x)=\phi^{(h+1)}(x),\label{ineq2}
\end{align}
where we used the class ($D$) property of $(V_t^{(h)})_{t\ge0}$ in the second inequality and \eqref{single2} in the fourth equality.  This shows that  $v^{(h+1)}(x) = \phi^{(h+1)}(x)$. Since $x_{h+1}^\star=\sup\{x\le x_1^\star\,:\,\phi^{(h+1)}(y)=v^{(h+1)}(y),\,\,\forall y>x\}$, we can conclude from \eqref{ineq2} that $x_{h+1}^\star\le x_h^\star$.

 \subsection{Proof of Lemma \ref{lem:boundedness}}\label{Appen_1}
 For any $p,q>1$ satisfying $p^{-1}+q^{-1}=1$, and any sufficiently large $z>0$, we have  
 \begin{multline}
 \pr_x\left(\sup_{t\ge0}\,[\mathrm{e}^{-\alpha t}\phi({X_t})]>z\right)
 =\pr_x(\exists t\ge0, \mathrm{e}^{-\alpha t}\phi({X_t})>z)=\pr_x(\exists t\ge0, X_t-\alpha t>\log(z+K\mathrm{e}^{-\alpha t}))\nn\\
 \le\pr_x\bigg(\exists t\ge0, X_t-\alpha t>\frac{1}{p}\log(p z)+\frac{1}{q}\log(q K\mathrm{e}^{-\alpha t})\bigg)=\pr_x\bigg(\exists t\ge0, X_t-\alpha t+\frac{\alpha }{q}t>\frac{1}{p}\log (p z)+\frac{1}{q}\log (q K)\bigg)\nn\\
 =\pr_x\bigg(\sup_{0\le t<\infty}(X_t-\frac{\alpha}{p}t)>\frac{1}{p}\log (pz)+\frac{1}{q}\log (q K)\bigg)\sim\exp\bigg(-\tilde\rho_0 \Big[\frac{1}{p}\log (p z)+\frac{1}{q}\log (q K)-x \Big]\bigg)=\frac{\mathrm{e}^{\tilde{\rho}_0x}(qK)^{-\frac{\tilde\rho_0}{q}}}{(p z)^{\frac{\tilde\rho_0}{p}}}, \notag
 \end{multline}
 where we used Proposition 1.8 on page 259 of \cite{AsmussenBook} and $\tilde\rho_0$ is the smallest positive root of
 \[\psi({\tilde\rho_0})-\frac{\alpha }{p}{\tilde\rho_0}=0.\]
 
 It is now sufficient to show that it is possible to choose  $p>1$ such that $\tilde\rho_0>p$, and hence the random variable $(\sup_{t\ge0}[\mathrm{e}^{-\alpha t}(\mathrm{e}^{X_t}-K)^+])^\varrho$ has a finite expectation for $\varrho=\frac{1}{2}(1+\frac{\tilde{\rho_0}}{p})>1$. To this end, we show that $\psi(p)-\alpha<0$ for a sufficiently small $p>1$.  Indeed,  for all $0<\beta<p$, we have $0>\frac{\beta}{p}(\psi(p)-\alpha)\ge\psi(\beta)-\frac{\beta}{p}\alpha$, where we used Jensen's inequality $(\ex[Y])^{\frac{\beta}{p}}\ge\ex[Y^{\frac{\beta}{p}}]$ for positive random variable $Y=\mathrm{e}^{pX_1}$ in the last step. Hence, the smallest positive solution $\tilde{\rho}_0>p>1$.

 Let us first assume that $\psi(1) -\alpha<0$. Then for sufficiently small $p>1$, by the continuity of $\psi$ at 1, we have
 \[\psi(p)- \alpha<0.\]
 On the other hand, if  $\psi(1)-\alpha=0$, and $\psi'(1)<0$. Then for sufficiently small $p>1$, we have $\psi'(p)<0$, and
 \be\label{ineq1}
 \psi(p)-\alpha =\psi(p)-\psi(1)<\psi'(p)(p-1)<0.\nn
 \ee

\subsection{Proof of Proposition \ref{lapfirstpass} }
Let us define a new measure $\pr^{\Phi(\alpha)}$ by \eqref{change_of_measure} for $\kappa = \Phi(\alpha)$.
 Under this measure, $X$ is a L\'evy process with Laplace exponent \citep[Corollary 3.10]{Kyprianou2006}:
\be\psi_{\Phi(\alpha)}(\beta)=\psi(\beta+\Phi(\alpha)) -\alpha, \quad {\beta > - \Phi(\alpha)}.\label{change measure}\ee
 Then, for any  $t>0$, the change of measure yields  the expectation 
\bq
\ex_x\left[\mathrm{e}^{-\alpha \tau_b^+}\ind_{\{\tau_b^+<t\}}\right]
&=&\mathrm{e}^{\Phi(\alpha)x}\,\ex_x\left[\mathrm{e}^{-\alpha \tau_b^+ +\Phi(\alpha)(X_{\tau_b^+}-x)}\cdot\mathrm{e}^{-\Phi(\alpha)X_{\tau_b^+}}\,\ind_{\{\tau_b^+<t\}}\right]\nn\\
&=& \ex_x^{\Phi(\alpha)}\left[\mathrm{e}^{-\Phi(\alpha)(X_{\tau_b^+}-x)}\ind_{\{\tau_b^+<t\}}\right]. \label{value_at_a}
\eq
We now let $t\to\infty$ in  \eqref{value_at_a}. By applying the monotone convergence theorem to the left hand side of \eqref{value_at_a}, we obtain $\ex_x[\mathrm{e}^{-\alpha\tau_b^+}\ind_{\{\tau_b^+<\infty\}}]$.
Similarly, notice that the non-negative random variable in the expectation \eqref{value_at_a}  is bounded by 1, $\pr^{\Phi(\alpha)}_x$-a.s. we can
apply the bounded convergence theorem to obtain that
\[\ex_x\left[\mte^{-\alpha\tau_b^+}\ind_{\{\tau_b^+<\infty\}}\right]=\exph\left[\mte^{-\Phi(\alpha)X_{\tau_{b-x}^+}}\ind_{\{\tau_b^+<\infty\}}\right]<\infty.\]
Now  because for any $\beta\ge0$ we have $\mte^{-\alpha\tau_b^+-\beta(X_{\tau_b^+}-b)}\ind_{\{\tau_b^+<\infty\}}\le \mte^{-\alpha\tau_b^+}\ind_{\{\tau_{b}^+<\infty\}}$, $\pr_x$-a.s., the dominated convergence theorem yields that
\be
\ex_x\left[\mte^{-\alpha\tau_b^+-\beta(X_{\tau_b^+}-b)}\ind_{\{\tau_b^+<\infty\}}\right]=\mte^{\beta(b-x)}\exph\left[\mte^{-(\Phi(\alpha)+\beta)X_{\tau_{b-x}^+}}\ind_{\{\tau_{b-x}^+<\infty\}}\right].\label{value_at_a1}
\ee

The right hand side of \eqref{value_at_a1} can be computed using Lemma 1 of \cite{alili-kyp}. Precisely,  we have
\bq\label{ht}
\ex^{\Phi(\alpha)}[\mathrm{e}^{-(\Phi(\alpha)+\beta)X_{\tau_{b-x}^+}}\ind_{\{\tau_{b-x}^+<\infty\}}]&=&\lim_{q\downarrow 0}\frac{\ex^{\Phi(\alpha)}[\mathrm{e}^{-(\Phi(\alpha)+\beta)\overline{X}_{\mathbf{e}_q}}\ind_{\{\overline{X}_{\mathbf{e}_q}>b-x\}}]}{\ex^{\Phi(\alpha)}[\mathrm{e}^{-(\Phi(\alpha)+\beta)\overline{X}_{\mathbf{e}_q}}]}.
\eq
The law of $\overline{{X}}_{\mathbf{e}_q}$  under $\p^{\Phi(\alpha)}$
 can be extracted from \eqref{Xmax_mgf} and \eqref{Xmax_dist}.  More precisely, for any $q>0$, let $\tilde{\mathcal{I}}_q:=\{\tilde\rho_{i,q}\}_{1\le i\le |\tilde{\mathcal{I}}_q|}$ and $\tilde{\mathcal{J}}:=\{\tilde\eta_{j}\}_{1\le j\le |\tilde{\mathcal{J}}|}$ be, respectively, the roots to
\be\label{rhotil}\psi_{\Phi(\alpha)}(\tilde\rho_{i,q})=q\,\textrm{ and }\,\psi_{\Phi(\alpha)}(\tilde\eta_j)=\infty,\,\text{ s.t. }\,\Re\tilde\rho_{i,q},\, \Re\tilde\eta_j>0,\,\,\,\forall 1\le i\le |\tilde{\mathcal{I}}_q|,\, 1\le j\le|\tilde{\mathcal{J}}|, \ee
which are indexed in the same way as the elements of $\mathcal{I}_\alpha$ and $\mathcal{J}$.
Then, we infer from  \eqref{change measure} that   $\tilde{\eta}_1=\beta_0-\Phi(\alpha)$, $\tilde{\rho}_{1,q}<\tilde{\eta}_1$ and  $\Re\tilde{\rho}_{i,q}\ge\tilde{\eta}_1$ for all $i\ge2$. Similarly, we let $\tilde{\mathcal{I}}_0:=\{\tilde{\rho}_{i,0}\}_{1\le i\le |\tilde{\mathcal{I}}_0|}$ be the roots to
\be
\psi_{\Phi(\alpha)}(\tilde{\rho}_{i,0})=0, \,\,\,\text{s.t.}\,\,\,\Re\tilde{\rho}_{i,0}\ge0.\nn
\ee
From \eqref{I2} and \eqref{change measure} we deduce  that $\tilde{\mathcal{I}}_0+\Phi(\alpha)=\mathcal{I}_\alpha$ and $\tilde{\mathcal{J}}+\Phi(\alpha)=\mathcal{J}$, which means that
\be{\rho}_{i,\alpha}=\tilde\rho_{i,0}+\Phi(\alpha),\,\,\forall 1\le i\le |\mathcal{I}_\alpha|;\,\quad {\eta}_j=\tilde\eta_{j}+\Phi(\alpha),\,\,\,\forall 1\le j\le |\mathcal{J}|.\label{eq:rho2rhot}
\ee
By our assumption, $\tilde{\rho}_{i,0}$'s are distinct and
$\tilde{\rho}_{1,0}=\rho_{1, \alpha} - \Phi(\alpha) = 0$ and $\Re\tilde{\rho}_{i,0}\ge\Re\tilde{\eta}_1>0$ for all $2\le i\le |\mathcal{I}_\alpha|$.
Moreover, the fact that the roots $\rho_{i,\alpha}$'s are single implies that $\psi_{\Phi(\alpha)}'(\tilde{\rho}_{i,0})=\psi'(\rho_{i,\alpha})\neq 0$, and hence each branch of the mapping $q \mapsto \psi_{\Phi(\alpha)}^{-1}(q)=\tilde{\rho}_{i,q}\in\tilde{\mathcal{I}}_q$ is locally a diffeomorphism around 0. It follows that $\tilde{\rho}_{i,q}$'s are also distinct for all sufficiently small $q>0$.  It follows from \eqref{Xmax_mgf} and \eqref{Xmax_dist} that
\bq
\ex ^{\Phi(\alpha)}[\mathrm{e}^{-(\Phi(\alpha)+\beta)\overline{{X}}_{\mathbf{e}_q}}]&=& \prod_{i=1}^{|\tilde{\mathcal{I}}_q|} \frac{\tilde\rho_{i,q}}{\tilde\rho_{i,q}+\Phi(\alpha)+\beta}\prod_{j=1}^{|\tilde{\mathcal{J}}|} \bigg(1+\frac{\Phi(\alpha)+\beta}{\tilde\eta_j}\bigg),\,\,\,\forall \beta\ge0,\nn\\
\pr^{\Phi(\alpha)}(\overline{X}_{\mathbf{e}_q}\in \diff y)&=&\sum_{i=1}^{|\tilde{\mathcal{I}}_q|}\bigg(\prod_{\substack{j=1\\j\neq i}}^{|\tilde{\mathcal{I}}_q|}\frac{\tilde\rho_{j,q}}{\tilde\rho_{j,q}-\tilde\rho_{i,q}}\prod_{j=1}^{|\tilde{\mathcal{J}}|}\frac{\tilde\eta_j-\tilde\rho_{i,q}}{\tilde\eta_j}\bigg)\tilde\rho_{i,q}\mathrm{e}^{-\tilde\rho_{i,q}y}\,\diff y,\,\,\,\forall y> 0.\nn
\eq
As $q\downarrow0$, we have that $\tilde{\rho}_{i,q}\to\tilde{\rho}_{i,0}$, and in particular, $\tilde{\rho}_{1,q}\to \tilde{\rho}_{1,0}=0$ and $\lim_{q\downarrow0}\tilde{\rho}_{i,q}\neq 0$ for all $i\ge 2$.  As a result,  for $\beta>0$,
\bq
\lim_{q\downarrow 0}\frac{\tilde\rho_{1,q}}{\ex^{\Phi(\alpha)} [\mathrm{e}^{-(\Phi(\alpha)+\beta)\overline{X}_{\mathbf{e}_q}}] }&=&(\Phi(\alpha)+\beta)\cdot\prod_{i=2}^{|\tilde{\mathcal{I}}_0|}\frac{\tilde\rho_{i,0}+\Phi(\alpha)+\beta}{\tilde\rho_{i,0}}\prod_{j=1}^{|\tilde{\mathcal{J}}|}\frac{\tilde\eta_j}{\eta_j+\Phi(\alpha)+\beta}\nn\\
&=&(\rho_{1,\alpha}+\beta)\cdot\prod_{i=2}^{|\tilde{\mathcal{I}}_0|}\frac{\rho_{i,\alpha}+\beta}{\tilde\rho_{i,0}}\prod_{j=1}^{|\tilde{\mathcal{J}}|}\frac{\tilde\eta_j}{\eta_j+\beta},\label{cal01}
\eq
where we used
\eqref{eq:rho2rhot} in the second equality.
Moreover, for $x<b$ and $\beta\ge 0$, the ratio
\begin{align}
\frac{\ex^{\Phi(\alpha)}[\mathrm{e}^{-(\Phi(\alpha)+\beta)\overline{X}_{\mathbf{e}_q}}\ind_{\{\overline{X}_{\mathbf{e}_q}>b-x\}}]}{\tilde\rho_{1,q}}
=&\frac{1}{\tilde\rho_{1,q}}\int_{(b-x,\infty)}\mathrm{e}^{-(\Phi(\alpha)+\beta) y}\sum_{i=1}^{|\tilde{\mathcal{I}}_q|}\bigg(\prod_{\substack{j=1\\j\neq i}}^{|\tilde{\mathcal{I}}_q|}\frac{\tilde\rho_{j,q}}{\tilde\rho_{j,q}-\tilde\rho_{i,q}}\prod_{j=1}^{|\tilde{\mathcal{J}}|}\frac{\tilde\eta_j-\tilde\rho_{i,q}}{\tilde\eta_j}\bigg)
\tilde\rho_{i,q}\mathrm{e}^{-\tilde\rho_{i,q}y}\diff y\nn\\
=&\frac{1}{\tilde\rho_{1,q}}\sum_{i=1}^{|\tilde{\mathcal{I}}_q|}\bigg(\prod_{\substack{j=1\\j\neq i}}^{|\tilde{\mathcal{I}}_q|}\frac{\tilde\rho_{j,q}}{\tilde\rho_{j,q}-\tilde\rho_{i,q}}\prod_{j=1}^{|\tilde{\mathcal{J}}|}\frac{\tilde\eta_j-\tilde\rho_{i,q}}{\tilde\eta_j}\bigg)\cdot\frac{\tilde\rho_{i,q}}{\tilde\rho_{i,q}+\beta}\mathrm{e}^{-[\Phi(\alpha)+\beta+\tilde\rho_{i,q}](b-x)}\nn\\
=&\sum_{i=2}^{|\tilde{\mathcal{I}}_q|}\frac{1}{\tilde\rho_{1,q}-\tilde\rho_{i,q}}\bigg(\prod_{\substack{j=2\\j\neq i}}^{|\tilde{\mathcal{I}}_q|}\frac{\tilde\rho_{j,q}}{\tilde\rho_{j,q}-\tilde\rho_{i,q}}\prod_{j=1}^{|\tilde{\mathcal{J}}|}\frac{\tilde\eta_j-\tilde\rho_{i,q}}{\tilde\eta_j}\bigg)\cdot\frac{\tilde\rho_{i,q}}{\tilde\rho_{i,q}+\beta}\mathrm{e}^{-[\Phi(\alpha)+\beta+\tilde\rho_{i,q}](b-x)}\nn\\
&+\frac{1}{\tilde\rho_{1,q}+\Phi(\alpha)+\beta}\bigg(\prod_{j=2}^{|\tilde{\mathcal{I}}_q|}\frac{\tilde\rho_{j,q}}{\tilde\rho_{j,q}-\tilde\rho_{1,q}}\prod_{j=1}^{|\tilde{\mathcal{J}}|}\frac{\tilde\eta_j-\tilde\rho_{1,q}}{\tilde\eta_j}\bigg)\mathrm{e}^{-[\Phi(\alpha)+\beta+\tilde\rho_{1,q}](b-x)},\nn\end{align}
which, as $q\downarrow 0$, tends to
\begin{align}&\sum_{i=2}^{|\tilde{\mathcal{I}}_0|}\bigg(\prod_{\substack{j=2\\j\neq i}}^{|\tilde{\mathcal{I}}_0|}\frac{\tilde\rho_{j,0}}{\tilde\rho_{j,0}-\tilde\rho_{i,0}}\prod_{j=1}^{|\tilde{\mathcal{J}}|}\frac{\tilde\eta_j-\tilde\rho_{i,0}}{\tilde\eta_j}\bigg)\cdot\frac{-1}{\tilde\rho_{i,0}+\Phi(\alpha)+\beta}\mathrm{e}^{-[\Phi(\alpha)+\beta+\tilde\rho_{i,0}](b-x)}+\frac{1}{\Phi(\alpha)+\beta}\mathrm{e}^{-(\Phi(\alpha)+\beta)(b-x)}\nn\\
=&\sum_{i=2}^{|\tilde{\mathcal{I}}_0|}\bigg(\prod_{\substack{j=2\\j\neq i}}^{|\tilde{\mathcal{I}}_0|}\frac{\tilde\rho_{j,0}}{\rho_{j,\alpha}-\rho_{i,\alpha}}\prod_{j=1}^{|\tilde{\mathcal{J}}|}\frac{\eta_j-\rho_{i,\alpha}}{\tilde\eta_j}\bigg)\cdot\frac{-1}{\rho_{i,\alpha}+\beta}\mathrm{e}^{-(\rho_{i,\alpha}+\beta)(b-x)}+\frac{1}{\rho_{1,\alpha}+\beta}\mathrm{e}^{-(\rho_{1,\alpha}+\beta)(b-x)}.\label{cal02}
\end{align}
Combining \eqref{value_at_a1}, \eqref{ht}, \eqref{cal01} and \eqref{cal02}, we obtain that, for all $\beta\ge0$,
\begin{align}
&\ex_{x}[\mathrm{e}^{-\alpha\tau_b^+-\beta (X_{\tau_b^+}-b)}\ind_{\{\tau_b^+<\infty\}}]\nn\\
= \,& \sum_{i=2}^{|\mathcal{I}_{\alpha}|}\bigg(\prod_{\substack{j=2\\j\neq i}}^{|\mathcal{I}_\alpha|}\frac{\rho_{j,\alpha}+\beta}{\rho_{j,\alpha}-\rho_{i,\alpha}}\prod_{j=1}^{|\mathcal{J}|}\frac{\eta_j-\rho_{i,\alpha}}{\eta_j+\beta}\bigg)\cdot\frac{-\Phi(\alpha)-\beta}{\tilde\rho_{i,0}}\mathrm{e}^{-\rho_{i,\alpha}(b-x)}+\prod_{j=2}^{|\mathcal{I}_{\alpha}|}\frac{\rho_{j,\alpha}+\beta}{\rho_{j,\alpha}-\rho_{1,\alpha}}\prod_{j=1}^{|\mathcal{J}|}\frac{\eta_j-\rho_{1,\alpha}}{\eta_j+\beta}\,\mte^{-\rho_{1,\alpha}(b-x)}\nn\\
=\,&\sum_{i=2}^{|\mathcal{I}_{\alpha}|}\bigg(\prod_{\substack{j=1\\j\neq i}}^{|\mathcal{I}_{\alpha}|}\frac{\rho_{j,\alpha}+\beta}{\rho_{j,\alpha}-\rho_{i,\alpha}}\prod_{j=1}^{|\tilde{\mathcal{J}}|}\frac{\tilde\eta_j-\tilde\rho_{i,0}}{\tilde\eta_j+\beta}\bigg)\mathrm{e}^{-\rho_{i,\alpha}(b-x)}+\prod_{j=2}^{|\mathcal{I}_{\alpha}|}\frac{\rho_{j,\alpha}+\beta}{\rho_{j,\alpha}-\rho_{1,\alpha}}\prod_{j=1}^{|\mathcal{J}|}\frac{\eta_j-\rho_{1,\alpha}}{\eta_j+\beta}\,\mte^{-\rho_{1,\alpha}(b-x)}\nn\\
=\,&\sum_{i=1}^{|\mathcal{I}_{\alpha}|}\bigg(\prod_{\substack{j=1\\j\neq i}}^{|\mathcal{I}_{\alpha}|}\frac{\rho_{j,\alpha}+\beta}{\rho_{j,\alpha}-\rho_{i,\alpha}}\prod_{j=1}^{|\mathcal{J}|}\frac{\eta_j-\rho_{i,\alpha}}{\eta_j+\beta}\bigg)\mathrm{e}^{-\rho_{i,\alpha}(b-x)}=\frac{1}{\psi_\alpha^+(\beta)}\sum_{i=1}^{|\mathcal{I}_\alpha|}A_i\frac{\rho_{i,\alpha}}{\rho_{i,\alpha}+\beta}\mte^{-\rho_{i,\alpha}(b-x)}.\label{mgf_x_bar}
\end{align}
This completes the proof.

\subsection{Proof of Theorem \ref{thm:value_single_stop}}
We only need to prove the assertion for the case $\alpha\le0$ since the case of non-negative discount rate $\alpha> 0$ has been addressed in \cite{mordecki2002}.

We begin by differentiating  $g^{(1)}(x,b)$ with respect to $b>x\vee\log K$ to get 
\be
\frac{\partial}{\partial b}g^{(1)}(x,b)=\frac{-\mathrm{e}^{b}+K\psi_\alpha^+(-1)}{\psi_\alpha^+(-1)}\cdot\sum_{i=1}^{|\mathcal{I}_\alpha|}A_i\rho_{i,\alpha}\mathrm{e}^{-\rho_{i,\alpha}(b-x)}, \quad \forall x < b.\label{cal5}\ee
Clearly, $x_1^\star=\log(K\psi_\alpha^+(-1))$ satisfies the first order condition $ {\partial}g^{(1)}(b,x) / {\partial b} = 0$. To show that $x_1^\star$ is indeed the optimal exercise threshold, we only need to verify  the followings (see, for example, \cite{xiazhou}):
\begin{enumerate}
\item  for $x\ge x_1^\star$, $g^{(1)}(x,\cdot)$ is  decreasing for all $b> x$ and $\sup_{b\le x}g^{(1)}(x,b)=\phi(x)\ge\lim_{b\downarrow x}g^{(1)}(x,b)$; 
\item for $x<x_1^\star$, $g^{(1)}(x,\cdot)$ is increasing for all $x<b\le x_1^\star$ and is non-increasing for all $b\ge x_1^\star$, and $\sup_{b\le x}g^{(1)}(x,b)=\phi(x)\le \lim_{b\downarrow x}g^{(1)}(x,b)$.
\end{enumerate}
Since $\psi_\alpha^+(-1)>0$, it follows from    \eqref{cal5}  that the   monotonicity of $g^{(1)}(x,\cdot)$   {for $b>x$} amounts to showing that 
\[\sum_{i=1}^{|\mathcal{I}_\alpha|}A_i\rho_{i,\alpha}\mathrm{e}^{-\rho_{i,\alpha}y}\ge0,\,\,\,\forall y>0.\]
By setting $\beta=0$ in  \eqref{cal01},  we obtain
\be
\lim_{q\downarrow0}\frac{\tilde{\rho}_{1,q}}{\ex^{\Phi(\alpha)}[\mathrm{e}^{-\Phi(\alpha)\overline{X}_{\mathbf{e}_q}}]}=\rho_{1,\alpha}\cdot\prod_{i=2}^{|\mathcal{I}_\alpha|}\frac{\rho_{i,\alpha}}{\rho_{i,\alpha}-\rho_{1,\alpha}}\prod_{j=1}^{|\mathcal{J}|}\frac{\eta_j-\rho_{1,\alpha}}{\eta_j}=\rho_{1,\alpha}\cdot A_1.\label{cal3}
\ee
Similarly, it follows from  \eqref{cal02} that,  for all $y>0$,
\bq
\frac{1}{\diff y}\lim_{q\downarrow0}\frac{\mathrm{e}^{-\Phi(\alpha)y}\pr^{\Phi(\alpha)}(\overline{X}_{\mathbf{e}_q}\in\diff y)}{\tilde{\rho}_{1,q}}&=&-\sum_{i=2}^{|\mathcal{I}_\alpha|}\bigg(\prod_{\substack{j=2\\j\neq i}}^{|\mathcal{I}_\alpha|}\frac{\rho_{j,\alpha}-\rho_{1,\alpha}}{\rho_{j,\alpha}-\rho_{i,\alpha}}\prod_{j=1}^{|\mathcal{J}|}\frac{\eta_j-\rho_{i,\alpha}}{\eta_j-\rho_{1,\alpha}}\bigg)\mathrm{e}^{-\rho_{i,\alpha}y}+\mathrm{e}^{-\rho_{1,\alpha}y}\nn\\
&=&\sum_{i=2}^{|\mathcal{I}_\alpha|}\frac{\rho_{i,\alpha}\cdot A_{i}}{\rho_{1,\alpha}\cdot A_1}\mathrm{e}^{-\rho_{i,\alpha}y}+\mathrm{e}^{-\rho_{1,\alpha}y}
= \sum_{i=1}^{|\mathcal{I}_\alpha|}\frac{\rho_{i,\alpha}\cdot A_{i}}{\rho_{1,\alpha}\cdot A_1}\mathrm{e}^{-\rho_{i,\alpha}y}.\label{cal4}
\eq
From \eqref{cal3} and \eqref{cal4}, we obtain
\be
\sum_{i=1}^{|\mathcal{I}_\alpha|}A_i\rho_{i,\alpha}\mathrm{e}^{-\rho_{i,\alpha}y}=\frac{\mathrm{e}^{-\Phi(\alpha)y}}{\diff y}\lim_{q\downarrow0}\frac{\pr^{\Phi(\alpha)}(\overline{X}_{\mathbf{e}_q}\in\diff y)}{\ex^{\Phi(\alpha)}[\mathrm{e}^{-\Phi(\alpha)\overline{X}_{\mathbf{e}_q}}]}\ge 0,\,\,\,\forall y>0.\label{eq:Arho}
\ee
To complete the proof that $x_1^\star$ is indeed the optimal exercise threshold, we need to show that, for any $x\ge x_1^\star$, $\phi(x)\ge \lim_{b\downarrow x}g^{(1)}(x,b)$; and for $x<x_1^\star$, $\phi(x)\le \lim_{b\downarrow x}g^{(1)}(x,b)$. To this end, notice that $\phi(x)=g^{(1)}(x,x_1^\star)$ for all $x\ge x_1^\star$. On the other hand,
 using Corollary \ref{value_at_passage} we have that 
\bq\lim_{b\downarrow x}g^{(1)}(x,b)=\bigg(1-\frac{\psi_\alpha^+(\infty)}{\psival}\bigg)\mte^x-K(1-\psi_\alpha^+(\infty))=\phi(x)+\psi_\alpha^+(\infty)\bigg(\frac{\mte^{x_1^\star}-\mte^x}{\psival}\bigg)=\left\{\begin{array}{ll}\le\phi(x), ~&\text{if }x\ge x_1^\star\\
\ge\phi(x), ~&\text{if }x< x_1^\star\end{array}\right..\nn\eq
Thus, $b_k$ is indeed the optimal exercise threshold for any $x\in\R$.  Finally, \eqref{valone} follows from \eqref{first passage} by setting $b=x_1^\star$.

 \subsection{Proof of Proposition \ref{thm:first order}}
 First, notice that   \eqref{first passage} and \eqref{delayed payoff} imply that, for any fixed $x\in\R$, the function $g^{(k)}(x,b)$   is differentiable in $b$ for all $b> x\vee\log K$. 
 Direct calculation (using \eqref{nudensity}) gives the derivative
 \begin{align}
&\frac{\partial}{\partial b}g^{(k)}(x,b)\nn\\
=&\sum_{i=1}^{|\mathcal{I}_\alpha|}A_i\rho_{i,\alpha}\mathrm{e}^{-\rho_{i,\alpha}(b-x)}\bigg(\frac{\mathrm{e}^{x_1^\star}-\mathrm{e}^{b}}{\psi_\alpha^+(-1)}+\frac{1}{\phi_\infty}
\ex [\mathrm{e}^{-\alpha\delta}({v^{(k-1)}_+}'(b+X_\delta)-\frac{\rho_{i,\alpha}}{\phi_\infty}v^{(k-1)}(b+X_\delta))]\bigg)\nn\\
&-\bigg[\sum_{i=1}^{|\mathcal{I}_\alpha|}A_i\rho_{i,\alpha}\mathrm{e}^{-\rho_{i,\alpha}(b-x)}\bigg(\ex [\mathrm{e}^{-\alpha\delta}v^{(k-1)}(b+X_\delta)](-\frac{\rho_{i,\alpha}}{\phi_\infty}+\nu(\{0\}))-\int_{(0,\infty)} \ex [\mathrm{e}^{-\alpha\delta}v^{(k-1)}(b+y+X_\delta)]\nu(\diff y)\bigg)\bigg]\nn\\
=&\bigg[\frac{\mathrm{e}^{x_1^\star}-\mathrm{e}^{b}}{\psi_\alpha^+(-1)}+\bigg(\frac{\ex[\mathrm{e}^{-\alpha\delta}{v^{(k-1)}_+}'(b+X_\delta)]}{\phi_\infty}-\int_{[0,\infty)} \ex [\mathrm{e}^{-\alpha\delta}v^{(k-1)}(b+y+X_\delta)]\nu(\diff y)\bigg)\bigg]\times\bigg(\sum_{i=1}^{|\mathcal{I}_\alpha|}A_i\rho_{i,\alpha}\mathrm{e}^{-\rho_{i,\alpha}(b-x)}\bigg)\nn\\
=&\bigg[\frac{\mathrm{e}^{x_1^\star}-\mathrm{e}^{b}}{\psi_\alpha^+(-1)}+\ex[\mte^{-\alpha\delta}u^{(k-1)}(b+X_\delta)]\bigg]\times\bigg(\sum_{i=1}^{|\mathcal{I}_\alpha|}A_i\rho_{i,\alpha}\mathrm{e}^{-\rho_{i,\alpha}(b-x)}\bigg).\label{eq:deri}
\end{align}

Recall the inequality    \eqref{eq:Arho} in the case with  $\alpha\le0$. For $\alpha>0$, we compute  from \eqref{Xmax_dist} to get \[\sum_{i=1}^{|\mathcal{I}_\alpha|}A_i\rho_{i,\alpha}\mathrm{e}^{-\rho_{i,\alpha}y}=\frac{1}{\diff y}\pr(\overline{X}_{\ep}\in\diff y)\ge0,\,\,\,\forall y> 0.\] 
Notice that, due to the linear independence of $\mte^{-\rho_{i,\alpha}y}$'s,  the left hand side of the above equation is strictly positive on all but a possibly  finite set in $\R_+$. Moreover, for all $x\ge x_k^\star$, we have $g^{(k)}(x,x_k^\star) \ge \phi^{(k)}(x)=v^{(k)}(x)\ge g^{(k)}(x,b)$  for all $b\ge x$, hence $\frac{\partial}{\partial b}|_{b=x+}g^{(k)}(x,b)\le 0$. This  implies that
\[\frac{\mathrm{e}^{x_1^\star}-\mathrm{e}^{b}}{\psi_\alpha^+(-1)}+\ex[\mte^{-\alpha\delta}u^{(k-1)}(b+X_\delta)]\le 0,\,\,\,\forall b\ge x_k^\star.\]
On the other hand, for all $x\le\log K$, from the proof of Proposition \ref{x_k} we know that $g^{(k)}(x,x_{k-1}^\star)\ge g^{(k)}(x,x)=\phi^{(k)}(x)$. It follows that there exists at least a $b\in[x,x_{k-1}^\star]\subset[x, x_1^\star]$ such that $\frac{\partial}{\partial b}g^{(k)}(x,b)> 0$, and hence
\[\frac{\mathrm{e}^{x_1^\star}-\mathrm{e}^{b}}{\psi_\alpha^+(-1)}+\ex[\mte^{-\alpha\delta}u^{(k-1)}(b+X_\delta)]> 0.\]
By the assumed continuity of $u^{(k-1)}$, we know that there exists at least one solution to \eqref{first order condition}. If the optimal exercise threshold $b_k^\star$ exists, then we know that $b_k^\star\le x_k^\star\le x_1^\star<\infty$. For any  fixed $x<b_k^\star$, the function $g^{(k)}(x,b)$ is maximized at $b=b_k^\star$, hence $\frac{\partial}{\partial b}|_{b=b_k^\star}g^{(k)}(x,b)=0$ for all $x<b_k^\star$. This implies that $b_k^\star$ is a solution to \eqref{first order condition}.

\subsection{Proof of Proposition \ref{prop:ukk}}
We will first prove an auxiliary lemma connecting measures $\{\bar{\nu}_i(\diff y)\}_{i=1}^{|\mathcal{I}_\alpha|}$ with $\nu(\diff y)$.
\begin{lemma}\label{lem:nu}
Let $(\bar{\nu}_i)_+'(z)$ and $\nu_+'(y)$ be the right derivatives of $\bar{\nu}_i[0,z)$ and $\nu[0,y)$, respectively. Then
\be
-\sum_{i=1}^{|\mathcal{I}_\alpha|}\frac{A_i}{\rho_{i,\alpha}}[(\bar{\nu}_i)_+'(z)][(\bar{\nu}_i)_+'(y)]=\nu_+'(z+y),\,\,\,\forall y,z>0.\label{nunu}
\ee
\end{lemma}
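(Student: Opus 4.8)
The plan is to prove \eqref{nunu} by showing that its two sides, regarded as functions on $(0,\infty)^2$, are both continuous and carry the same two–dimensional Laplace transform, and then to invoke uniqueness of Laplace transforms. All Laplace transforms below are taken for $\beta,\gamma$ with sufficiently large real part, where every integral converges absolutely (recall that $\nu$ and the $\bar\nu_i$ are in general signed measures); the resulting algebraic identities then extend by analytic continuation. To begin, since $\nu$ restricted to $(0,\infty)$ is absolutely continuous (evident from \eqref{eq:numeasure}), \eqref{nudensity} gives, for $z>0$,
\[
(\bar\nu_i)_+'(z)=\rho_{i,\alpha}\,\mathrm{e}^{-\rho_{i,\alpha}z}\Big(-\frac{\rho_{i,\alpha}}{\phi_\infty}+\int_{[0,z)}\mathrm{e}^{\rho_{i,\alpha}u}\,\nu(\diff u)\Big),
\]
which is continuous on $(0,\infty)$; likewise $\nu_+'$ is continuous on $(0,\infty)$. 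Hence both sides of \eqref{nunu} are continuous on $(0,\infty)^2$, and it suffices to match their Laplace transforms.

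For the left side, write $\bar\nu_i(\diff z)=\frac{\rho_{i,\alpha}}{\phi_\infty}\,\delta_0(\diff z)+(\bar\nu_i)_+'(z)\,\diff z$; then \eqref{nubar} yields $\int_0^\infty\mathrm{e}^{-\beta z}(\bar\nu_i)_+'(z)\,\diff z=\widehat f_i(\beta):=\frac{\rho_{i,\alpha}}{(\rho_{i,\alpha}+\beta)\psi_\alpha^+(\beta)}-\frac{\rho_{i,\alpha}}{\phi_\infty}$, so the Laplace transform of the left side of \eqref{nunu} is $-\sum_{i=1}^{|\mathcal{I}_\alpha|}\frac{A_i}{\rho_{i,\alpha}}\widehat f_i(\beta)\widehat f_i(\gamma)$. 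For the right side, the substitution $w=z+y$ gives
\[
\int_0^\infty\!\!\int_0^\infty \mathrm{e}^{-\beta z-\gamma y}\nu_+'(z+y)\,\diff z\,\diff y=\frac{\widehat N(\gamma)-\widehat N(\beta)}{\beta-\gamma},\qquad \widehat N(\beta):=\int_0^\infty\mathrm{e}^{-\beta w}\nu_+'(w)\,\diff w,
\]
and $\widehat N(\beta)=\frac{1}{\psi_\alpha^+(\beta)}-\frac{\beta}{\phi_\infty}-\nu(\{0\})$ by \eqref{eq:numeasure}. Thus it remains to establish the rational identity $-\sum_i\frac{A_i}{\rho_{i,\alpha}}\widehat f_i(\beta)\widehat f_i(\gamma)=\frac{\widehat N(\gamma)-\widehat N(\beta)}{\beta-\gamma}$.

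To verify it, I would expand $\widehat f_i(\beta)\widehat f_i(\gamma)$, use $\frac{1}{(\rho_{i,\alpha}+\beta)(\rho_{i,\alpha}+\gamma)}=\frac{1}{\gamma-\beta}\big(\frac{1}{\rho_{i,\alpha}+\beta}-\frac{1}{\rho_{i,\alpha}+\gamma}\big)$, and collapse the sums over $i$ using the two identities read off from \eqref{Xmax_mgf}--\eqref{eq:psi-alpha}, namely $\sum_i\frac{A_i\rho_{i,\alpha}}{\rho_{i,\alpha}+\beta}=\psi_\alpha^+(\beta)-\psi_\alpha^+(\infty)$ and $\sum_iA_i\rho_{i,\alpha}=\phi_\infty$ in the case $\psi_\alpha^+(\infty)=0$. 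The cross terms then telescope into $\frac{R(\gamma)-R(\beta)}{\beta-\gamma}$ with $R:=1/\psi_\alpha^+$, while the remaining terms combine to $1/\phi_\infty$; separating the cases $\psi_\alpha^+(\infty)=0$ (i.e.\ $-J$ not a subordinator, $\phi_\infty<\infty$) and $\psi_\alpha^+(\infty)>0$ (i.e.\ $\phi_\infty=\infty$, so every $1/\phi_\infty$--term vanishes) shows that the expression equals $\frac{R(\gamma)-R(\beta)}{\beta-\gamma}+\frac{1}{\phi_\infty}=\frac{\widehat N(\gamma)-\widehat N(\beta)}{\beta-\gamma}$ in both regimes. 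Uniqueness of the two–dimensional Laplace transform, together with the continuity of both sides of \eqref{nunu}, then completes the proof.

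The step I expect to be most delicate is this final algebraic reduction of the two–variable rational identity --- in particular, carefully tracking the boundary value $\psi_\alpha^+(\infty)$ and the dichotomy for $\phi_\infty$ --- along with the routine but necessary justification of Fubini's theorem and of the analytic continuation, which is needed because $\nu$ and the $\bar\nu_i$ need not be positive measures.
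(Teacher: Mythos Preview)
Your proposal is correct and follows essentially the same route as the paper's proof: both compute the bivariate Laplace transforms of the two sides of \eqref{nunu}, reduce the left-hand side via the partial-fraction identity $\frac{1}{(\rho_{i,\alpha}+\beta)(\rho_{i,\alpha}+\gamma)}=\frac{1}{\gamma-\beta}\big(\frac{1}{\rho_{i,\alpha}+\beta}-\frac{1}{\rho_{i,\alpha}+\gamma}\big)$ together with $\sum_iA_i\frac{\rho_{i,\alpha}}{\rho_{i,\alpha}+\beta}=\psi_\alpha^+(\beta)-\psi_\alpha^+(\infty)$ and (when $\psi_\alpha^+(\infty)=0$) $\sum_iA_i\rho_{i,\alpha}=\phi_\infty$, and handle the right-hand side by the substitution $w=z+y$; the dichotomy on $\phi_\infty$ is treated in exactly the same way. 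Your explicit remarks about continuity of both sides and uniqueness of the two-dimensional Laplace transform are a welcome addition, since the paper's proof simply writes ``we will prove \eqref{nunu} by using the bivariate Laplace transform'' and leaves this justification implicit.
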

\begin{proof}
We will prove \eqref{nunu}  by using the bivariate Laplace transform. To this end, let $\beta_1,\beta_2\ge0$ and that $\beta_1\neq\beta_2$, then by \eqref{nudensity} and \eqref{nubar} we have
\begin{align}
&-\int_{(0,\infty)}\int_{(0,\infty)}\mte^{-\beta_1 y-\beta_2 z}\sum_{i=1}^{|\mathcal{I}_\alpha|}\frac{A_i}{\rho_{i,\alpha}}[(\bar{\nu}_i)_+'(y)][(\bar{\nu}_i)_+'(z)]\diff y\diff z\nn\\
=&-\sum_{i=1}^{|\mathcal{I}_\alpha|}A_i\rho_{i,\alpha}\bigg(\frac{1}{\rho_{i,\alpha}+\beta_1}\frac{1}{\psi_\alpha^+(\beta_1)}-\frac{1}{\phi_\infty}\bigg)\bigg(\frac{1}{\rho_{i,\alpha}+\beta_2}\frac{1}{\psi_\alpha^+(\beta_2)}-\frac{1}{\phi_\infty}\bigg)\nn\\
=&\frac{1/(\beta_2-\beta_1)}{\psi_\alpha^+(\beta_1)\psi_\alpha^+(\beta_2)}\sum_{i=1}^{|\mathcal{I}_\alpha|}A_i\bigg(\frac{\rho_{i,\alpha}}{\rho_{i,\alpha}+\beta_2}-\frac{\rho_{i,\alpha}}{\rho_{i,\alpha}+\beta_1}\bigg)+\sum_{i=1}^{|\mathcal{I}_\alpha|}\frac{A_i}{\phi_\infty}\bigg(\frac{1}{\psi_\alpha^+(\beta_1)}\frac{\rho_{i,\alpha}}{\rho_{i,\alpha}+\beta_1}+\frac{1}{\psi_\alpha^+(\beta_2)}\frac{\rho_{i,\alpha}}{\rho_{i,\alpha}+\beta_2}-\frac{\rho_{i,\alpha}}{\phi_\infty}\bigg)\nn\\
=&\frac{1}{\beta_2-\beta_1}\frac{\psi_\alpha^+(\beta_2)-\psi_\alpha^+(\beta_1)}{\psi_\alpha^+(\beta_1)\psi_\alpha^+(\beta_2)}+\frac{1}{\phi_\infty}\bigg(1-\frac{\psi_\alpha^+(\infty)}{\psi_\alpha^+(\beta_1)}+1-\frac{\psi_\alpha^+(\infty)}{\psi_\alpha^+(\beta_2)}\bigg)-\frac{1}{\phi_\infty^2}\sum_{i=1}^{|\mathcal{I}_\alpha|}A_i\rho_{i,\alpha}\nn\\
=&\frac{1}{\beta_2-\beta_1}\bigg(\frac{1}{\psi_\alpha^+(\beta_1)}-\frac{1}{\psi_\alpha^+(\beta_2)}\bigg)+\frac{1}{\phi_\infty}\bigg(2-\frac{\psi_\alpha^+(\infty)}{\psi_\alpha^+(\beta_1)}-\frac{\psi_\alpha^+(\infty)}{\psi_\alpha^+(\beta_2)}-\frac{1}{\phi_\infty}\sum_{i=1}^{|\mathcal{I}_\alpha|}A_i\rho_{i,\alpha}\bigg)\nn\\
=&\frac{1}{\beta_2-\beta_1}\bigg(\frac{1}{\psi_\alpha^+(\beta_1)}-\frac{1}{\psi_\alpha^+(\beta_2)}\bigg)+\frac{1}{\phi_\infty},\label{LAP1}\end{align}
where,
in the last equality, we used that fact that, if $-J$ is a subordinator, then $\phi_\infty=\infty$; otherwise, we have $\psi_\alpha^+(\infty)=0$, and 
\[\sum_{i=1}^{|\mathcal{I}_\alpha|}A_i\rho_{i,\alpha}=\lim_{\beta \rightarrow \infty}\bigg(\beta\cdot\sum_{i=1}^{|\mathcal{I}_\alpha|}A_i\frac{\rho_{i,\alpha}}{\rho_{i,\alpha}+\beta}\bigg)=\lim_{\beta\to\infty}\beta\psi_\alpha^+(\beta)=\phi_\infty.\]
On the other hand, by \eqref{eq:numeasure} we have, for $\beta_1,\beta_2\ge0$ such that $\beta_1\neq\beta_2$,
\begin{align}
&\int_{(0,\infty)}\int_{(0,\infty)}\mte^{-\beta_1y-\beta_2z}\nu_+'(y+z)\diff y\diff z
\stackrel{s=y+z}{=}\int_{(0,\infty)}\mte^{-\beta_2s}\nu_+'(s)\diff s\int_{(0,s)}\mte^{(\beta_2-\beta_1)y}\diff y\nn\\
=&\int_{(0,\infty)}\mte^{-\beta_2s}\frac{\mte^{(\beta_2-\beta_1)s}-1}{\beta_2-\beta_1}\nu_+'(s)\diff s=\frac{1}{\beta_2-\beta_1}\bigg(\int_{[0,\infty)}\mte^{-\beta_1s}\nu(\diff s)-\nu(\{0\})-\int_{[0,\infty)}\mte^{-\beta_2s}\nu(\diff s)+\nu(\{0\})\bigg)\nn\\
=&\frac{1}{\beta_2-\beta_1}\bigg(\frac{1}{\psi_\alpha^+(\beta_1)}-\frac{\beta_1}{\phi_\infty}-\frac{1}{\psi_\alpha^+(\beta_2)}+\frac{\beta_2}{\phi_\infty}\bigg)=\frac{1}{\beta_2-\beta_1}\bigg(\frac{1}{\psi_\alpha^+(\beta)}-\frac{1}{\psi_\alpha^+(\beta_2)}\bigg)+\frac{1}{\phi_\infty}.\label{LAP2}
\end{align}
From \eqref{LAP1} and \eqref{LAP2} we know that \eqref{nunu} holds for all $y,z>0$.
\end{proof}

We are now ready to give the proof of Proposition \ref{prop:ukk}.  Since threshold-type strategies are optimal for problem \eqref{single1} with  {up to} $(k-1)$ exercise opportunities by assumption, it follows   that $v^{(k-1)}(x)=g^{(k-1)}(x,x_{k-1}^\star)$ for all $x\in\R$, and  $u^{(k-1)}(x)=\tilde{u}^{(k-1)}(x)$ for all $x\in\R$ by comparing \eqref{function:u} and  \eqref{tildeuk}. Also,  observe that $g^{(k)}(x,b_k)=\phi^{(k)}(x)$ for  $x\ge b_k>\log K$. Applying this fact to  \eqref{tildeuk}, we  get
\begin{align}
\tilde{u}^{(k)}(x)=&\frac{{\phi^{(k)}}'(x)}{\phi_\infty}-\int_{[0,\infty)}\phi^{(k)}(x+y)\nu(\diff y)\nn\\
=&\frac{\mathrm{e}^x+\ex_x[\mathrm{e}^{-\alpha\delta}{v^{(k-1)}_+}'(X_\delta)]}{\phi_\infty}-\int_{[0,\infty)}(\mathrm{e}^{x+y}-K+\ex_x[\mathrm{e}^{-\alpha\delta}v^{(k-1)}(y+X_\delta)])\nu(\diff y)\nn\\
=&\mathrm{e}^x\bigg[\frac{1}{\phi_\infty}-\bigg(\frac{1}{\psi_\alpha^+(-1)}-\frac{(-1)}{\phi_\infty}\bigg)\bigg]+K+\ex\bigg[\mathrm{e}^{-\alpha\delta}\bigg(\frac{{v^{(k-1)}_+}'(x+X_\delta)}{\phi_\infty}-\int_{[0,\infty)} v^{(k-1)}(x+y+X_\delta)\nu(\diff y)\bigg)\bigg]\nn\\
=&\frac{\mathrm{e}^{x_1^\star}-\mathrm{e}^x}{\psi_\alpha^+(-1)}+\ex[\mathrm{e}^{-\alpha\delta}u^{(k-1)}(x+X_\delta)]=\tilde{u}_0^{(k)}(x).\label{tildeulast}
\end{align}

For $x<b_k$, we use \eqref{eq:touch} and \eqref{eq:overshoot} to write
\be
g^{(k)}(x,{b_k})=\ex_x[\mathrm{e}^{-\alpha\tau_{{b_k}}^+}\phi^{(k)}(X_{\tau_{{b_k}}^+})\ind_{\{\tau_{{b_k}}^+<\infty\}}]=\sum_{i=1}^{|\mathcal{I}_\alpha|}A_i\mathrm{e}^{\rho_{i,\alpha}(x-{b_k})}\int_{[0,\infty)}\phi^{(k)}(b_k+y)\bar{\nu}_i(\diff y).\nn
\ee
It follows that,
\begin{align}\frac{1}{\phi_\infty}\frac{\partial}{\partial x}g^{(k)}(x,b_k)=&\int_{[0,\infty)}\phi^{(k)}(b_k+y)\bigg(\sum_{i=1}^{|\mathcal{I}_\alpha|}A_i\mte^{-\rho_{i,\alpha}(b_k-x)}\frac{\rho_{i,\alpha}}{\phi_\infty}\bar{\nu}_i(\diff y)\bigg),\label{Es1}\\
\int_{[0,b_k-x)}g^{(k)}(x+z,b_k)\nu(\diff z)=&\int_{[0,b_k-x)}\int_{[0,\infty)}\phi^{(k)}(b_k+y)\bigg(\sum_{i=1}^{|\mathcal{I}_\alpha|}A_i\mte^{-\rho_{i,\alpha}(b_k-x-z)}\bar{\nu}_i(\diff y)\bigg)\nu(\diff z)\nn\\
=&\int_{[0,\infty)}\phi^{(k)}(b_k+y)\bigg[\sum_{i=1}^{|\mathcal{I}_\alpha|}A_i\mte^{-\rho_{i,\alpha}(b_k-x)}\bigg(\int_{[0,b_k-x)}\mte^{\rho_{i,\alpha}z}\nu(\diff z)\bigg)\bar{\nu}_i(\diff y)\bigg].\label{Es2}
\end{align}
Moreover, from Lemma \ref{lem:nu} we know that 
\begin{align}
\sum_{i=1}^{|\mathcal{I}_\alpha|}A_i\mte^{-\rho_{i,\alpha}(b_k-x)}\bigg(\frac{\rho_{i,\alpha}}{\phi_\infty}-\int_{[0,b_k-x)}\mte^{\rho_{i,\alpha}z}\nu(\diff z)\bigg)\bar{\nu}_i(\diff y)=-\sum_{i=1}^{|\mathcal{I}_\alpha|}\frac{A_i}{\rho_{i,\alpha}}[(\bar{\nu}_i)_+'(b_k-x)][(\bar{\nu}_i)_+'(y)]\diff y=\nu(b_k-x+\diff y).\label{Es3}
\end{align}
From \eqref{Es1}, \eqref{Es2} and \eqref{Es3} we have 
\begin{align}
\tilde{u}^{(k)}(x)=&\frac{1}{\phi_\infty}\frac{\partial}{\partial x}g^{(k)}(x,b_k)-\int_{[0,b_k-x)}g^{(k)}(x+z,b_k)\nu(\diff z)-\int_{[b_k-x,\infty)}g^{(k)}(x+y,b_k)\nu(\diff y)\nn\\
=&\int_{[0,\infty)}\phi^{(k)}(b_k+y)\nu(b_k-x+\diff y)-\int_{[b_k-x,\infty)}\phi^{(k)}(x+y)\nu(\diff y)\nn\\
=&\int_{[b_k-x,\infty)}\phi^{(k)}(x+y)\nu(\diff y)-\int_{[b_k-x,\infty)}\phi^{(k)}(x+y)\nu(\diff y)=0.
\end{align}
As a result, we have  $\tilde{u}^{(k)}(x)\equiv0$ for all $x<{b_k}$. Combining this with \eqref{tildeulast} yields \eqref{tildeu2}.  
Moreover, from the definition of $\tilde{u}_0^{(k)}(x)$ in \eqref{first order condition}, we know that  $\tilde{u}^{(k)}$ is  non-increasing on $[b_k,\infty)$. 
The first order condition \eqref{first order condition} shows that $\tilde{u}^{(k)}$ is also continuous at ${b_k}$, and thus $\tilde{u}^{(k)}$ is  non-positive on $\R$.

 \subsection{Proof of Lemma \ref{uniqueness} } 
Suppose there are two distinct solutions to \eqref{first order condition}, say, ${b_k}<{b_k}'$. Then by Proposition \ref{prop:ukk} we have a non-increasing, non-positive function
\[\ind_{\{x\ge {b_k}\}}\left(\frac{\mathrm{e}^{x_1^\star}-\mathrm{e}^{x}}{\psi_\alpha^+(-1)}+\ex[\mathrm{e}^{-\alpha\delta}u^{(k-1)}(x+X_\delta)]\right),\,\,\,\forall x\in\R.\]
Moreover, this function is strictly decreasing for all $x\ge{b_k}$. This implies that
\[\frac{\mathrm{e}^{x_1^\star}-\mathrm{e}^{{b_k}'}}{\psi_\alpha^+(-1)}+\ex[\mathrm{e}^{-\alpha\delta}u^{(k-1)}({b_k}'+X_\delta)]<0.\]
 This contradicts the assumption that  ${b_k}'$ solves \eqref{first order condition}. 

To finish the proof, we let ${b_k}$ be the unique solution to \eqref{first order condition}. Notice from \eqref{eq:deri} that
\begin{enumerate}
\item for all fixed $x\ge{b_k}$, the function $g^{(k)}(x,\cdot)$ is decreasing in $b$ for all $b>x$;
\item for all fixed $x<{b_k}$, the function $g^{(k)}(x,\cdot)$ is increasing in $b$ for all $x<b\le {b_k}$, and decreasing in $b$ for all $b\ge{b_k}$.
\end{enumerate}

We now apply  a similar argument as in the proof of Theorem \ref{thm:value_single_stop} to show that, for all $x\ge b_k$, $\lim_{b\downarrow x}g^{(k)}(x,b)\le \phi^{(k)}(x)$; and for all $x<b_k$, $\lim_{b\downarrow x}g^{(k)}(x,b)\ge \phi^{(k)}(x)$. This will allow us to conclude that $b_k$ is indeed the optimal exercise threshold. To this end, from \eqref{Xmax_mgf}, \eqref{eq:psi-alpha}, \eqref{eq:numeasure}, \eqref{nubar} and the fact that $\psi_\alpha^+(\infty)\cdot\phi_\infty=0$, we know that
\be\sum_{i=1}^{|\mathcal{I}_\alpha|}A_i\bar{\nu}_i(\diff y)=
\ind_{\{y=0\}}-\psi_\alpha^+(\infty)\nu(\diff y).\label{ANU}\ee
Below we  consider  two cases separately:
\begin{enumerate}
\item if $-J$ is not a subordinator, then $\psi_\alpha^+(\infty)=0$ and $\phi_\infty>0$. 
Using \eqref{eq:touch} and \eqref{eq:overshoot} and the monotone convergence theorem, we obtain  the limit  for any fixed $x\in\R$:
\be
\lim_{b\downarrow x}g^{(k)}(x,b)=\int_{[0,\infty)}\phi^{(k)}(x+y)\bigg(\sum_{i=1}^{|\mathcal{I}_\alpha|}A_i\bar{\nu}_i(\diff y)\bigg)=\phi^{(k)}(x),\label{eq:glimit}\ee
where we used \eqref{ANU} in the last equality.
\item if $-J$ is a subordinator, then $\psi_\alpha^+(\infty)>0$ and $\phi_\infty=0$.  
Similarly to \eqref{eq:glimit}, we use \eqref{ANU} to obtain that,  for any fixed $x\in\R$,
\begin{align}
\lim_{b\downarrow x}g^{(k)}(x,b)=&\int_{[0,\infty)}\phi^{(k)}(x+y)\bigg(\sum_{i=1}^{|\mathcal{I}_\alpha|}A_i\bar{\nu}_i(\diff y)\bigg)=\phi^{(k)}(x)-\psi_\alpha^+(\infty)\int_{[0,\infty)}\phi^{(k)}(x+y)\nu(\diff y)\nn\\
=&\phi^{(k)}(x)-\psi_\alpha^+(\infty)\int_{[0,\infty)}\left(\mte^{x+y}-K+\ex[\mte^{-\alpha\delta}v^{(k-1)}(x+y+X_\delta)]\right)\nu(\diff y)\nn\\
=&\phi^{(k)}(x)+\psi_\alpha^+(\infty)\bigg(\frac{\mte^{x_1^\star}-\mte^{x}}{\psi_\alpha^+(-1)}+\ex[\mte^{-\alpha\delta}u^{(k-1)}(x+X_\delta)]\bigg)\nn\\
=&\phi^{(k)}(x)+\psi_\alpha^+(\infty)\tilde{u}_0^{(k)}(x)=\left\{\begin{array}{ll}\le\phi^{(k)}(x),\,\,&\text{if }x\ge b_k\\ >\phi^{(k)}(x), &\text{if }x<b_k\end{array}\right.,
\end{align}
where we have used \eqref{eq:numeasure}, \eqref{function:u} and \eqref{first order condition} in the third line, and the fact that $\tilde{u}_0^{(k)}(x)>0$ if and only if $x<b_k$ in the last step.
\end{enumerate}

\subsection{Proof of Proposition \ref{prop:verify}}
For any fixed $x\in\R$, notice that $\{x+\xq\ge b_k^\star\}=\{\tau_{b_k^\star-x}^+\le \mathbf{e}_q\}$ except for a null set under measure $\pr^{\Phi(\alpha)}$ and that $X_{\tau_{b_k^\star-x}^+}=\overline{X}_{\tau_{b_k^\star-x}^+}$ on $\{\tau_{b_k^\star-x}^+<\infty\}$. Also, recall that $\tilde{u}_0^{(k)}(x)\ind_{\{x\ge b_k^\star\}}=\tilde{u}^{(k)}(x)$ for all $x\in\R$. We thus have
\begin{align}
&\lim_{q\downarrow0}\frac{\exph[\mte^{-\Phi(\alpha)\xq}(-\tilde{u}_0^{(k)}(x+\xq))\ind_{\{x+\xq\ge b_k^\star\}}]}{\exph[\mte^{-\Phi(\alpha)\xq}]}=\lim_{q\downarrow0}\frac{\exph[\mte^{-\Phi(\alpha)\xq}(-\tilde{u}_0^{(k)}(x+\xq))\ind_{\{\tau_{b_k^\star-x}^+\le \mathbf{e}_q\}}]}{\exph[\mte^{-\Phi(\alpha)\xq}]}\nn\\
=&\lim_{q\downarrow0}\frac{\exph[\exph[\mte^{-\Phi(\alpha)\xq}(-\tilde{u}_0^{(k)}(x+\xq))\ind_{\{\tau_{b_k^\star-x}^+\le \mathbf{e}_q\}}|\mathcal{F}_{\tau_{b_k^\star-x}^+}]]}{\exph[\mte^{-\Phi(\alpha)\xq}]}\nn\\
=&\lim_{q\downarrow0}\frac{\exph[\ind_{\{\tau_{b_k^\star-x}^+\le \mathbf{e}_q\}}\exph[\mte^{-\Phi(\alpha)(X_{\tau_{b_k^\star-x}^+}+\xq-X_{\tau_{b_k^\star-x}^+})}(-\tilde{u}_0^{(k)}(x+X_{\tau_{b_k^\star-x}^+}+\xq-X_{\tau_{b_k^\star-x}^+}))|\mathcal{F}_{\tau_{b_k^\star-x}^+}]]}{\exph[\mte^{-\Phi(\alpha)\xq}]}.\label{eqttt}
\end{align}

Now let us denote by $M_q:=\xq-X_{\tau_{b_k^\star-x}^+}$. Notice that $M_q\stackrel{\text{law}}{=}\xq$ and $M_q$ is independent of $\mathcal{F}_{\tau_{b_k^\star-x}^+}$ on the event $\{\tau_{b_k^\star-x}^+\le \mathbf{e}_q\}$, so  expression \eqref{eqttt} above is further equal to
\begin{align}
&\lim_{q\downarrow0}\frac{\exph[\ind_{\{\tau_{b_k^\star-x}^+\le \mathbf{e}_q\}}\mte^{-\Phi(\alpha)X_{\tau_{b_{k}^\star-x}^+}}\exph[\mte^{-\Phi(\alpha)M_q}(-\tilde{u}_0^{(k)}(x+X_{\tau_{b_k^\star-x}^+}+M_q))|X_{\tau_{b_k^\star-x}^+}]]}{\exph[\mte^{-\Phi(\alpha)\xq}]}\nn\\
=&\lim_{q\downarrow0}\frac{\exph[\mte^{-q\tau_{b_k^\star-x}^+-\Phi(\alpha)X_{\tau_{b_{k}^\star-x}^+}}\exph[\mte^{-\Phi(\alpha)M_q}(-\tilde{u}_0^{(k)}(x+X_{\tau_{b_k^\star-x}^+}+M_q))|X_{\tau_{b_k^\star-x}^+}]\ind_{\{\tau_{b_k^\star-x}^+<\infty\}}]}{\exph[\mte^{-\Phi(\alpha)\xq}]}\nn\\
=&\lim_{q\downarrow0}\ex\bigg[\mte^{-(q+\alpha)\tau_{b_k^\star-x}^{+}}\frac{\exph[\mte^{-\Phi(\alpha)M_q}(-\tilde{u}_0^{(k)}(x+X_{\tau_{b_k^\star-x}^+}+M_q))|X_{\tau_{b_k^\star-x}^+}]}{\exph[\mte^{-\Phi(\alpha)M_q}]}\ind_{\{\tau_{b_k^\star-x}^+<\infty\}}\bigg].\label{eq:do}
\end{align}
In the last equality, we have applied a change of measure, along with the dominated convergence theorem (see \eqref{value_at_a1}).

On the other hand, using the recursion \eqref{tildeu2} and mathematical induction we can show that there exist positive constants $C_1,C_2>0$ such that
\[|\tilde{u}_0^{(k)}(x)|\le C_1\mte^x+C_2,\,\,\,\forall x\in\R.\]
As a result,  the random variable 
$$\left|\mte^{-q\tau_{b_{k}^\star-x}^+}\frac{\exph[\mte^{-\Phi(\alpha)M_q}(-\tilde{u}_0^{(k)}(x+X_{\tau_{b_k^\star-x}^+}+M_q))|X_{\tau_{b_k^\star-x}^+}]}{\exph[\mte^{-\Phi(\alpha)M_q}]}\ind_{\{\tau_{b_k^\star-x}^+<\infty\}}\right|$$ 
is dominated by the  non-negative random variable
\begin{multline*}\mte^{-q\tau_{b_{k}^\star-x}^+}\frac{\exph[\mte^{-\Phi(\alpha)M_q}(C_1\mte^{x+X_{\tau_{b_k^\star-x}^+}+M_q}+C_2)|X_{\tau_{b_k^\star-x}^+}]}{\exph[\mte^{-\Phi(\alpha)M_q}]}\ind_{\{\tau_{b_k^\star-x}^+<\infty\}}\\
\le \bigg(C_1\mte^{x+X_{\tau_{b_{k}^\star-x}^+}}\frac{\exph[\mte^{-(\Phi(\alpha)-1)M_q}]}{\exph[\mte^{-\Phi(\alpha)M_q}]}+C_2\bigg)\ind_{\{\tau_{b_k^\star-x}^+<\infty\}}
\xrightarrow{\text{as }q\downarrow0}\bigg(C_1\mte^{x+X_{\tau_{b_{k}^\star-x}^+}}\psival+C_2\bigg)\ind_{\{\tau_{b_k^\star-x}^+<\infty\}},
\end{multline*}
where we used the fact that $\lim_{q\downarrow0}\frac{\exph[\mte^{-(\Phi(\alpha)-1)\xq}]}{\exph[\mte^{-\Phi(\alpha)\xq}]}=\psival$ in the last step.

Similarly, using \eqref{eq:vgE} we have  
\begin{align}
\phi^{(k)}(x)=&\mte^x-K+\ex[\mte^{-\alpha \delta}v^{(k-1)}(x+X_\delta)]\nn\\
=&\lim_{q\downarrow0}\exph\bigg[\frac{\mte^{-\Phi(\alpha)\xq}}{\exph[\mte^{-\Phi(\alpha)\xq}]}\bigg(\frac{\mte^{x+\xq}-\mte^{x_1^\star}}{\psival}-\ex[\mte^{-\alpha\delta}u^{(k-1)}(x+X_\delta+\xq)|\xq]\bigg)\bigg]\nn\\
=&\lim_{q\downarrow0}\frac{\exph[\mte^{-\Phi(\alpha)M_q}(-\tilde{u}_0^{(k)}(x+M_q))]}{\exph[\mte^{-\Phi(\alpha)\xq}]}.\label{phiEXP}
\end{align}
By the dominated convergence theorem, we obtain from \eqref{eq:do} and \eqref{phiEXP} that
\[\lim_{q\downarrow0}\frac{\exph[\mte^{-\Phi(\alpha)\xq}(-\tilde{u}_0^{(k)}(x+\xq))\ind_{\{x+\xq\ge b_k^\star\}}]}{\exph[\mte^{-\Phi(\alpha)\xq}]}=\ex_x\left[\mte^{-\alpha\tau_{b_k^\star}^+}\phi^{(k)}(X_{\tau_{b_k^\star}^+})\ind_{\{\tau_{b_k^\star}^+<\infty\}}\right]=g^{(k)}(x,b_k^\star).\]

To prove the supermartingale property, we use the fact that, on the event $\{t<\mathbf{e}_q\}$, we have $X_t+\sup_{s\in[t,\mathbf{e}_q]}(X_s-X_t)\le \xq$, $\pr^{\Phi(\alpha)}$-a.s. and $M_q:=\sup_{s\in[t,\mathbf{e}_q]}(X_s-X_t)$  has the same law as $\xq$, but is independent of $\mathcal{F}_t$.
 It follows from the non-negativity and the non-decreasing property of $-\tilde{u}^{(k)}$ that, for any $t>0$,
\begin{align*}
g^{(k)}(x,b_k^\star)=&\lim_{q\downarrow0}\frac{\exph[\mte^{-\Phi(\alpha)\xq}(-\tilde{u}^{(k)}(x+\xq))]}{\exph[\mte^{-\Phi(\alpha)\xq}]}\\
\ge&\lim_{q\downarrow0}\frac{\exph[\exph[\mte^{-\Phi(\alpha)\xq}(-\tilde{u}^{(k)}(x+\xq))\ind_{\{t<\mathbf{e}_q\}}|\mathcal{F}_t]]}{\exph[\mte^{-\Phi(\alpha)\xq}]}\\
\ge&\lim_{q\downarrow0}\frac{\exph[\exph[\mte^{-\Phi(\alpha)(X_t+M_q)}(-\tilde{u}^{(k)}(x+X_t+M_q))\ind_{\{t<\mathbf{e}_q\}}|\mathcal{F}_t]]}{\exph[\mte^{-\Phi(\alpha)\xq}]}\\
=&\lim_{q\downarrow0}\mte^{-qt}\frac{\exph[\exph[\mte^{-\Phi(\alpha)(X_t+M_q)}(-\tilde{u}^{(k)}(x+X_t+M_q))|\mathcal{F}_t]]}{\exph[\mte^{-\Phi(\alpha)\xq}]}\\
\ge&\exph\bigg[\mte^{-\Phi(\alpha)X_t}\cdot\lim_{q\downarrow0}\bigg(\frac{\exph[\mte^{-\Phi(\alpha)M_q}(-\tilde{u}^{(k)}(x+X_t+M_q))|X_t]}{\exph[\mte^{-\Phi(\alpha)M_q}]}\bigg)\bigg]\\
=&\ex[\mte^{-\alpha t}g^{(k)}(x+X_t,b_k^\star)]=\ex_x[\mte^{-\alpha t}g^{(k)}(X_t,b_k^\star)].
\end{align*}
Here the third inequality follows from Fatou's lemma and the Strong Markov property of $X$. This completes the proof.

\bibliographystyle{plain}
\singlespacing

\bibliography{ospbib}

\def\cprime{$'$}
\begin{thebibliography}{10}

\bibitem{alili-kyp}
L.~Alili and A.~E. Kyprianou.
\newblock Some remarks on first passage of {L}\'{e}vy processes, the {A}merican
  put and smooth pasting.
\newblock {\em Ann. Appl. Probab.}, 15:2062--2080, 2004.

\bibitem{AsmussenBook}
S.~Asmussen and H.~Albrecher.
\newblock {\em Ruin probabilities}.
\newblock World Scientific, 2010.

\bibitem{Asmussen2004}
S{\o}ren Asmussen, Florin Avram, and Martijn~R. Pistorius.
\newblock Russian and {A}merican put options under exponential phase-type
  {L}\'evy models.
\newblock {\em Stochastic Process. Appl.}, 109(1):79--111, 2004.

\bibitem{Avram_2004}
F.~Avram, A.~E. Kyprianou, and M.~R. Pistorius.
\newblock Exit problems for spectrally negative {L}\'{e}vy processes and
  applications to ({C}anadized) {R}ussion options.
\newblock {\em Ann. Appl. Probab.}, 14:215--235, 2004.

\bibitem{Bender}
Christian Bender.
\newblock Dual pricing of multi-exercise options under volume constraints.
\newblock {\em Finance and Stochastics}, 15:1--26, 2011.

\bibitem{black95}
Fischer Black.
\newblock Interest rates as options.
\newblock {\em The Journal of Finance}, 50(5):1371--1376, 1995.

\bibitem{Cai2014}
N.~Cai and L.~Sun.
\newblock Valuation of stock loans with jump risk.
\newblock {\em Journal of Economic Dynamics and Control}, 2014.
\newblock To appear.

\bibitem{carmonaDayanik08}
R.~Carmona and S.~Dayanik.
\newblock Optimal multiple stopping of linear diffusions.
\newblock {\em Mathematics of Operations Research}, 33(2):446--460, 2008.

\bibitem{Carmona2008}
R.~Carmona and N.~Touzi.
\newblock Optimal multiple stopping and valuation of swing options.
\newblock {\em Mathematical Finance}, 18(2):239--268, April 2008.

\bibitem{Chiara}
N.~Chiara, M.~Garvin, and J.~Vecer.
\newblock Valuing simple multiple-exercise real options in infrastructure
  projects.
\newblock {\em Journal of Infrastructure Systems}, 13(2):97--104, 2007.

\bibitem{ChristensenLempa13}
S.~Christensen and J.~Lempa.
\newblock Resolvent-techniques for multiple exercise problems.
\newblock Working paper, 2013.

\bibitem{CIJ2013}
S{\"o}ren Christensen, Albrecht Irle, and Stephan J{\"u}rgens.
\newblock Optimal multiple stopping with random waiting times.
\newblock {\em Sequential Analysis}, 32(3):297--318, 2013.

\bibitem{EricTim13}
E.~Dahlgren and T.~Leung.
\newblock An optimal multiple stopping approach to infrastructure investment
  decisions.
\newblock {\em Journal of Economic Dynamics and Control}, 53:251--267, 2015.

\bibitem{kowk_reload_shout08}
M.~Dai and Y.K. Kwok.
\newblock Optimal multiple stopping models of reload options and shout options.
\newblock {\em Journal of Economic Dynamics and Control}, 32:2269--2290, 2008.

\bibitem{DixitPindyck}
A.K. Dixit and R.S. Pindyck.
\newblock {\em Investment Under Uncertainty}.
\newblock Princeton University Press, 1994.

\bibitem{GrasselliHenderson2008}
M.~Grasselli and V.~Henderson.
\newblock Risk aversion and block exercise of executive stock options.
\newblock {\em Journal of Economic Dynamics and Control}, 33(1):109--127,
  January 2009.

\bibitem{kypri_Canada}
A.~E. Kyprianou and M.~R. Pistorius.
\newblock Perpetual options and {C}anadization through fluctuation theory.
\newblock {\em Annals of Applied Probability}, 13(3):1077--1098, 2003.

\bibitem{Kyprianou2006}
Andreas~E. Kyprianou.
\newblock {\em Introductory Lectures on Fluctuations of {L}\'evy Processes with
  Applications}.
\newblock Universitext. Springer-Verlag, Berlin, 2006.

\bibitem{LeungSircarESO_MF09}
T.~Leung and R.~Sircar.
\newblock Accounting for risk aversion, vesting, job termination risk and
  multiple exercises in valuation of employee stock options.
\newblock {\em Mathematical Finance}, 19(1):99--128, January 2009.

\bibitem{TimKazuHZ14}
T.~Leung, K.~Yamazaki, and H.~Zhang.
\newblock An analytic recursive method for optimal multiple stopping:
  Canadization and phase-type fitting.
\newblock {\em International Journal of Theoretical and Applied Finance}, 2015.
\newblock To appear.

\bibitem{McDonald1985}
R.~McDonald and D.~Siegel.
\newblock Investment and the valuation of firms when there is an option to shut
  down.
\newblock {\em International Economic Review}, 26(2):331--349, 1985.

\bibitem{Hambly_MF04}
N.~Meinshausen and B.M. Hambly.
\newblock {M}onte {C}arlo methods for the valuation of multiple-exercise
  options.
\newblock {\em Mathematical Finance}, 14, 2004.

\bibitem{mordecki2002}
E.~Mordecki.
\newblock Optimal stopping and perpetual options for {L}\'{e}vy processes.
\newblock {\em Finance and Stochastics}, 6:473--493, 2002.

\bibitem{ProtterBook}
P.~Protter.
\newblock {\em Stochastic Integration and Differential Equations}.
\newblock Springer, 2003.

\bibitem{Widder46}
D.~V. Widder.
\newblock {\em The Laplace Transform}.
\newblock Dover Publication, 1946.

\bibitem{xiazhou}
J.~Xia and X.~Zhou.
\newblock Stock loans.
\newblock {\em Mathematical Finance}, 17(2):307--317, 2007.

\bibitem{ZeghalSwing}
A.~B. Zeghal and M.~Mnif.
\newblock Optimal multiple stopping and valuation of swing options in {L}\'evy
  models.
\newblock {\em International Journal of Theoretical and Applied Finance},
  9(8):1267--1297, 2006.

\end{thebibliography}

\end{document}